\title{The torus plateau for the high-dimensional Ising model}
\author{
Yucheng Liu\,\orcidlink{0000-0002-1917-8330}\thanks{Department of Mathematics,
	University of British Columbia,
	Vancouver, BC, Canada V6T 1Z2.
	Liu: \href{mailto:yliu135@math.ubc.ca}{yliu135@math.ubc.ca}.
	Slade: \href{mailto:slade@math.ubc.ca}{slade@math.ubc.ca}.
	}
\and
Romain Panis\,\orcidlink{0009-0001-4604-8398}\thanks{Institut Camille Jordan, 43 boulevard du 11 novembre 1918, 69622 Villeurbanne, France. \href{mail:panis@math.univ-lyon1.fr}{panis@math.univ-lyon1.fr}.}
\and
Gordon Slade\,\orcidlink{0000-0001-9389-9497}$^*$
}
\date{\vspace{-5ex}} %for arXiv
\theoremstyle{plain}
\newtheorem{theorem}{Theorem}[section]
\newtheorem{lemma}[theorem]{Lemma}
\newtheorem{proposition}[theorem]{Proposition}
\newtheorem{definition}[theorem]{Definition}
\theoremstyle{definition}
\newtheorem{remark}[theorem]{Remark}
\theoremstyle{plain}
\newtheorem{Lem}[theorem]{Lemma}
\newtheorem{Coro}[theorem]{Corollary}
\newtheorem{Conj}[theorem]{Conjecture}
\theoremstyle{definition}
\newcommand\n{\mathbf{n}}
\newcommand\m{\mathbf{m}}
\renewcommand\k{\mathbf{k}}
\renewcommand\l{\boldsymbol{\ell}}
\newcommand\sn{\partial\mathbf{n}}
\newcommand\sm{\partial\mathbf{m}}
\newcommand\bfC{\mathbf{C}}
\numberwithin{equation}{section}
\newcommand{\Z}{\mathbb{Z}}
\newcommand{\R}{\mathbb{R}}
\newcommand{\E}{\mathbb{E}}
\renewcommand{\P}{\mathbb{P}}
\newcommand{\T}{\mathbb{T}}
\newcommand{\Acal}{\mathcal{A}}
\newcommand{\Ccal}{\mathcal{C}}
\newcommand{\Ecal}{\mathcal{E}}
\newcommand{\Gcal}{\mathcal{G}}
\newcommand{\Hcal}{\mathcal{H}}
\newcommand{\Kcal}{\mathcal{K}}
\newcommand{\Mcal}{\mathcal{M}}
\newcommand{\Ncal}{\mathcal{N}}
\newcommand{\del}{\partial}
\newcommand{\inv}{^{-1}}
\renewcommand{\(}{\left(}
\renewcommand{\)}{\right)}
\newcommand{\half}{\frac{1}{2}}
\newcommand{\1}{\mathds{1}}
\newcommand{\nl}{\nonumber \\}
\providecommand{\abs}[1]{\lvert#1\rvert}
\providecommand{\norm}[1]{\lVert#1\rVert}
\newcommand{\taub}{\tau_\beta}
\newcommand{\chib}{{\chi(\beta)}}
\newcommand{\connect}{\xleftrightarrow}
\newcommand{\bubble}{{\sf B}}
\newcommand{\nnb}{\nonumber\\}
\newcommand{\veee}[1]{|\!|\!|#1|\!|\!|}
\newcommand{\xvee}{\veee{x}}
\providecommand{\nnnorm}[1]{\veee {#1} }
\newcommand{\Tsum}{T}
\newcommand{\Tbr}{\Tsum_\beta^{(r)}}
\DeclareMathAlphabet{\mathpzc}{OT1}{pzc}{m}{it}
\newcommand{\Tr}{{\T_r}}
\newcommand{\Lambdar}{{\Lambda_r}}
\newcommand{\LambdaR}{{\Lambda_R}}
\newcommand{\Zd}{{\Z^d}}
\begin{document}
\maketitle

\begin{abstract}
We consider the Ising model on a $d$-dimensional discrete torus of volume $r^d$,
in dimensions $d>4$ and for large $r$, in the vicinity of the infinite-volume critical
point $\beta_c$.
We prove that for $\beta=\beta_c- {\rm const}\, r^{-d/2}$ (with a suitable
constant) the susceptibility is bounded above and below by multiples of $r^{d/2}$. Additionally, again for $\beta=\beta_c- {\rm const}\, r^{-d/2}$,
the two-point function has a ``plateau'': it decays
like $|x|^{-(d-2)}$ when $|x|$ is small relative to the volume, but
for larger $|x|$, it levels off to
a constant value of order $r^{-d/2}$.
We also prove that
at $\beta=\beta_c- {\rm const}\, r^{-d/2}$ the renormalised
coupling constant is nonzero, which implies a non-Gaussian limit for the average
spin.  The proof relies on near-critical estimates for the
infinite-volume two-point function obtained recently by Duminil-Copin and Panis,
and builds upon a strategy proposed by Papathanakos.
The random current representation of
the Ising model plays a central role in our analysis.
\end{abstract}

%\noindent
%{\bf MSC Classification:} 82B20, 82B27, 60K35.
%
%\smallskip\noindent
%{\bf Keywords:}
%Ising model, torus plateau, random currents, non-Gaussian limit.

\section{Introduction and main results}

\subsection{Introduction}

The Ising model has been studied for a century as a fundamental example of
a phase transition in statistical mechanics.  We are interested here in
the finite-size scaling of the Ising model on a finite box in
the Euclidean lattice $\Z^d$
in dimensions $d>4$, particularly in the case of periodic boundary conditions
which make the box into a torus.

For $d \ge 1$ and an integer $r \ge 3$, we write $\Lambda_r$ for the discrete
box $[-r/2,r/2)^d \cap \Z^d$ of volume $r^d$, and write
$\T_r = (\Z/r\Z)^d$ for the $d$-dimensional discrete torus of period $r$.
We study large $r$, and seek estimates that are uniform in large $r$.
For convenience, we often identify a point $x\in\T_r$ with its representative in
$\Lambda_r$.

\smallskip\noindent
\emph{Notation.}
We write $f\lesssim g$ to denote the existence of a constant
$C>0$ such that $f \leq C g$,
write $f \asymp g$ to mean $f \lesssim g$ and $g \lesssim f$,
and write $f \sim g$ to denote $\lim f/g=1$.
To avoid division by zero, for $x \in \R^d$ we write
$\xvee = 1\vee \|x\|_\infty$.

\smallskip
Let $\Gcal=(V,E)$ be a finite graph with vertex set $V$ and edge set $E$.
The Ising model on $\Gcal$ is a family of probability measures on spin configurations $\sigma : V \to \{-1,+1\}$,
defined using the Hamiltonian
\begin{equation} \label{eq:Hamiltonian}
    H^\Gcal (\sigma) = -\sum_{xy \in E} \sigma_x \sigma_y .
\end{equation}
The expectation of a function $F$ of spin configurations,
at inverse temperature $\beta >0$,
is defined by
\begin{equation} \label{eq:def_Ising}
    \langle F  \rangle_\beta^\Gcal
    =
    \frac{1}{Z^\Gcal_\beta} \sum_{\sigma} F(\sigma) e^{-\beta H^\Gcal(\sigma)} ,
\end{equation}
where $Z^\Gcal_\beta = \sum_{\sigma} e^{-\beta H^\Gcal(\sigma)}$ is the partition function.
Free boundary conditions (FBC) on the box correspond to $\Gcal=\Lambda_r$ with edges connecting nearest neighbours in $\Lambda_r$.
Periodic boundary conditions (PBC) correspond to $\Gcal=\T_r$
with edges connecting nearest neighbours in $\T_r$, which include all edges of $\Lambdar$ and additional edges that join opposite sides of $\Lambda_r$.
The \emph{two-point functions} are defined by
\begin{equation}
    \tau^{\Lambda_r}_\beta(x,y) = \langle \sigma_x \sigma_y \rangle_\beta^{\Lambda_r},
    \qquad
    \tau^{\T_r}_\beta(x,y) = \langle \sigma_x \sigma_y \rangle_\beta^{\T_r} ,
\end{equation}
and the finite-volume \emph{susceptibilities} are defined by
\begin{equation}
    \chi^{\Lambda_r}(\beta) = \sum_{x\in \Lambda_r} \tau^{\Lambda_r}_\beta(0,x),
    \qquad
    \chi^{\T_r}(\beta) = \sum_{x\in \T_r} \tau^{\T_r}_\beta(0,x).
\end{equation}

It follows from the Griffiths inequalities
that $\taub^\Lambdar$ is nonnegative and increasing in $r$, so that the infinite-volume two-point function $\taub(x,y) = \lim_{r\to\infty} \taub^\Lambdar(x,y)$ and infinite-volume
susceptibility $\chi(\beta) = \lim_{r\to\infty} \chi^\Lambdar(\beta)$ exist in $[0, \infty]$.
Moreover, it is known that there is a critical inverse temperature $\beta_c$,
below which $\taub(x,y)$ decays exponentially as $\abs{x-y}\to \infty$ and $\chib$ is finite.
The susceptibility is infinite at $\beta_c$, and
from \cite{Aize82} we know that for $d>4$ and $\beta < \beta_c$,
\begin{equation} \label{eq:chibds}
\chi(\beta) \asymp \frac{1}{\beta_c-\beta}.
\end{equation}
An essential ingredient in our analysis is the very recent result from
\cite{DP25-Ising} that, for $d > 4$, there are positive constants $c_0,C_0$
such that for all $\beta \le \beta_c$,
\begin{alignat}2
\label{eq:DPub}
\taub(0,x) &\le  \frac{C_0}{\xvee^{d-2}} e^{-c_0(\beta_c-\beta)^{1/2}\norm x_\infty}
	&&\qquad\qquad(x \in \Z^d) ,	\\
\label{eq:DPlb}
\taub(0,x) & \ge \frac{c_0}{ \xvee^{d-2}}
	 &&(\norm x_\infty \le c_0(\beta_c-\beta)^{-1/2}).
\end{alignat}
For the spread-out Ising model in dimensions $d>4$, or for the nearest-neighbour model in sufficiently
high dimensions, more precise asymptotics for the critical two-point function have
been proved using the lace expansion \cite{Saka07}.

The influence of boundary conditions on finite-size critical behaviour in
dimensions $d \ge 4$ has been studied extensively in the physics literature
for Ising and related models,
e.g., \cite{BEHK22,DGGZ22,DGGZ24,LB97,LM16,WY14,ZGFDG18}.
In particular, it has been observed numerically and via physics scaling
arguments that, for $d>4$,
\begin{align}
\label{eq:tauchi}
    \tau^{\Lambda_r}_{\beta_c}(0,x) \asymp \frac{1}{\xvee^{d-2}},
    \quad
    \tau^{\T_r}_{\beta_c}(0,x) \asymp \frac{1}{\xvee^{d-2}} + \frac{1}{r^{d/2}},
    \qquad
    \chi^{\Lambda_r}(\beta_c) \asymp r^2,
    \quad
    \chi^{\T_r}(\beta_c) \asymp r^{d/2}.
\end{align}
(The claim for $\tau^{\Lambda_r}_{\beta_c}(0,x)$ is for $x$ not too close to the boundary of the box.)
The constant term $r^{-d/2}$ in
$\tau^{\T_r}_{\beta_c}(0,x)$
is referred to as the \emph{plateau}, and it is responsible for the larger susceptibility for PBC compared to FBC.

In recent years, the plateau for PBC has been proven to exist for
simple random walk in dimensions $d>2$ \cite{Slad23_wsaw,DGGZ24},
self-avoiding walk for $d>4$  \cite{Slad23_wsaw,Liu24},
(spread-out) percolation for $d>6$ \cite{HS14,HMS23}, and
(spread-out) lattice trees and lattice animals for $d>8$ \cite{LS25a}.
A theory of the effect of FBC vs PBC
in the setting of the weakly-coupled hierarchical $n$-component
$|\varphi|^4$ model in dimensions $d \ge 4$ was developed in \cite{MPS23,PS25},
and we comment further on this below.
A general theory is outlined in \cite{LPS25-universal}.

For the Ising model in dimensions $d>4$, the absence of a plateau for FBC
is an immediate consequence of \eqref{eq:DPub} and the monotonicity of
$\tau^{\Lambda_r}_{\beta_c}(0,x)$ in $r$.  A matching lower bound
for FBC is proved in \cite{CJN21}.
For PBC,
incomplete steps were
taken in \cite{Papa06} towards proving existence of the plateau.
 In this paper, we build on the method of
\cite{Papa06} to prove \eqref{eq:tauchi} for PBC in dimensions $d>4$.
We also prove the non-vanishing of the renormalised coupling constant under PBC, indicating
a non-Gaussian limit for the average spin.  This can be contrasted with the Gaussian limit under FBC obtained
in \cite{Aize82,Froh82,AD21,Pani24}.

We emphasise that we do not use the lace expansion in any way, and our
results do not require a small parameter.  The input from the lace expansion
used to establish plateau theorems
for self-avoiding walk, percolation, and lattice trees and lattice animals
in \cite{Slad23_wsaw,Liu24,HS14,HMS23,LS25a}
is replaced here by the
near-critical bounds \eqref{eq:DPub}--\eqref{eq:DPlb}.

\subsection{Main results}

Our first theorem indicates that for $\beta$ in an interval slightly below $\beta_c$, the torus and infinite-volume two-point functions are related by a plateau term $\chi(\beta)/r^d$.

\begin{theorem}
\label{thm:plateau}
Let $d>4$.
There are constants $c_i>0$, depending on $d$ but not on $\beta$ or $r$,
such that
\begin{equation}
\label{eq:plateau-ub}
    \tau_{\beta}^{\T_r}(0,x)
    \le
    \tau_\beta(0,x) +c_1\frac{\chi(\beta)}{r^d}
\end{equation}
for all $r \ge 3$, all $\beta \in (0,\beta_c)$,
and all $x \in \Tr$, and
\begin{equation}
\label{eq:plateau-lb}
    \tau_{\beta}^{\T_r}(0,x)
    \ge
    \tau_\beta(0,x) +c_2 \frac{\chi(\beta)}{r^d}
\end{equation}
for all $r\ge 3$, all
$\beta \in [ \beta_c- c_3r^{-2},   \beta_c-c_4 r^{-d/2}]$ and all $\|x\|_\infty \ge c_5$.
Moreover, the constant $c_3$ is given by $c_3 = \frac 49 c_0^2$ with $c_0$ the
constant of \eqref{eq:DPlb}.
\end{theorem}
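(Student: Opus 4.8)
The plan is to use the random current representation and the switching lemma to compare the torus two-point function with the infinite-volume one. The key identity is the source-switching: for $x \in \T_r$, a current on $\T_r$ with sources $\{0,x\}$ can be ``unwound'' by lifting to $\Z^d$, at the cost of wrapping around the torus. Concretely, one expects a representation of the form
\begin{equation*}
\tau^{\T_r}_\beta(0,x) = \sum_{z \in r\Z^d} \tau_\beta(0, x+z) \cdot (\text{correction}),
\end{equation*}
where the leading behaviour of the correction is a ratio of partition functions close to $1$. For the upper bound \eqref{eq:plateau-ub}, I would first establish the clean inequality $\tau^{\T_r}_\beta(0,x) \le \sum_{z \in r\Z^d} \tau_\beta(0,x+z)$ — this should follow from a Griffiths/Simon-type comparison or directly from the random current representation by noting that every torus current projects to a $\Z^d$ current with the same (lifted) sources. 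Then I would split the sum: the $z=0$ term gives $\tau_\beta(0,x)$, and the remaining terms $\sum_{z \ne 0} \tau_\beta(0,x+z)$ must be bounded by $c_1 \chi(\beta)/r^d$. Using the upper bound \eqref{eq:DPub}, $\tau_\beta(0,y) \le C_0 \xvee[y]^{-(d-2)} e^{-c_0(\beta_c-\beta)^{1/2}\|y\|_\infty}$, and the fact that for $z \in r\Z^d \setminus\{0\}$ and $x$ a torus representative we have $\|x+z\|_\infty \gtrsim r(1 + \|z\|_\infty/r)$, the sum over $z$ is comparable to $r^{-(d-2)} \sum_{w \in \Z^d\setminus\{0\}} \|w\|_\infty^{-(d-2)} e^{-c_0 r (\beta_c-\beta)^{1/2}\|w\|_\infty}$; one checks this is $\lesssim r^{-(d-2)} \cdot \min(1, (r(\beta_c-\beta)^{1/2})^{-2})$, which is $\lesssim r^{-d/2}$ when $\beta \le \beta_c$, and more precisely $\lesssim \chi(\beta)/r^d$ after invoking \eqref{eq:chibds} — though one must be careful that \eqref{eq:plateau-ub} is claimed for \emph{all} $\beta \in (0,\beta_c)$, so for $\beta$ bounded away from $\beta_c$ the exponential decay makes the tail term exponentially small, while $\chi(\beta)/r^d$ is only polynomially small, so the inequality holds trivially in that regime.

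For the lower bound \eqref{eq:plateau-lb}, the strategy is more delicate because one needs the correction factor to be bounded \emph{below} by a positive constant, and one must produce the full term $c_2 \chi(\beta)/r^d$ rather than just a single $\tau_\beta(0,x+z)$. Here I expect the argument to go through the susceptibility: sum \eqref{eq:plateau-lb}-type information over $x$, or rather use that $\chi^{\T_r}(\beta) = \sum_{x\in\T_r}\tau^{\T_r}_\beta(0,x)$ and relate this to $\chi(\beta)$. A natural route is to show $\tau^{\T_r}_\beta(0,x) \ge \tau_\beta(0,x)$ (monotonicity in boundary conditions, valid since $\T_r$ contains $\Lambda_r$ as a subgraph, by Griffiths) plus a genuinely positive plateau contribution. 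To extract the plateau, I would use the random current representation on the torus: conditionally on the current configuration, the probability that $0$ and $x$ are connected picks up a contribution from currents that ``go around'' the torus, and the switching lemma quantifies this. The constant $c_5$ in the hypothesis $\|x\|_\infty \ge c_5$ is needed so that $\tau_\beta(0,x)$ is not itself of order $1$ and does not swamp the comparison; the constant $c_4$ in $\beta \le \beta_c - c_4 r^{-d/2}$ ensures $\chi(\beta) \lesssim r^{d/2}$ so the plateau term $\chi(\beta)/r^d \lesssim r^{-d/2}$ is in the right regime and $\beta$ is genuinely subcritical on the torus; and the constant $c_3 = \tfrac49 c_0^2$ in $\beta \ge \beta_c - c_3 r^{-2}$ is precisely what makes the correlation length $(\beta_c-\beta)^{-1/2} \ge \tfrac{3}{2c_0} r$ at least comparable to $r$, so that the lower bound \eqref{eq:DPlb}, valid for $\|x\|_\infty \le c_0(\beta_c-\beta)^{-1/2}$, covers all torus representatives $\|x\|_\infty \le r/2$ — indeed $c_0(\beta_c-\beta)^{-1/2} \ge c_0 \cdot \tfrac{3}{2c_0} r^{?}$; the factor $\tfrac49$ is chosen to give a clean margin. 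This explicit tracking of $c_3$ is why the theorem statement records it.

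The main obstacle I anticipate is the lower bound: specifically, proving that the ratio of partition functions (or the conditional connection probability coming from the switching lemma) is bounded below by a positive constant uniformly in $r$ and in $\beta$ in the stated window. In the self-avoiding walk and percolation cases this is where the lace expansion enters; here one must instead leverage the near-critical bounds \eqref{eq:DPub}--\eqref{eq:DPlb} together with \eqref{eq:chibds} and the random current combinatorics. A secondary technical point is controlling the lift-to-$\Z^d$ map on currents carefully enough that the correction is genuinely $1 + o(1)$ (or at least $\Theta(1)$) — this requires that windings around the torus are suppressed, which is exactly the content of the exponential factor in \eqref{eq:DPub} once $\beta$ is not too close to $\beta_c$, but near $\beta_c$ one needs the polynomial decay and the volume factor $r^{-d}$ to do the work. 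I would organise the proof by first proving the (easier) upper bound via the projection inequality, then setting up the random current / switching-lemma machinery for the lower bound, isolating the partition-function ratio as a lemma to be proven separately using \eqref{eq:DPub}--\eqref{eq:DPlb}.
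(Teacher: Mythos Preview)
Your upper bound sketch is essentially the paper's: prove $\tau^{\T_r}_\beta(x)\le \sum_{u\in\Z^d}\tau_\beta(x+ru)=:T_\beta^{(r)}(x)$ via random currents, then bound the $u\ne 0$ tail by $c_1\chi(\beta)/r^d$ using \eqref{eq:DPub} and \eqref{eq:chibds}. That part is fine.

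The lower bound, however, has real gaps. First, the inequality $\tau^{\T_r}_\beta(0,x)\ge\tau_\beta(0,x)$ does \emph{not} follow from Griffiths: the torus contains $\Lambda_r$, not $\Z^d$, so Griffiths only gives $\tau^{\T_r}\ge\tau^{\Lambda_r}$, and in fact $\tau^{\T_r}<\tau_\beta$ can occur. Second, your multiplicative picture ``$\tau^{\T_r}=\sum_z\tau_\beta(0,x+z)\cdot(\text{correction})$ with correction $\approx 1$'' is not what happens, and trying to bound a partition-function ratio from below is not the right mechanism. The paper's argument is \emph{additive}: one writes
\[
\tau^{\T_r}_\beta(x)=T_\beta^{(r)}(x)-\big(T_\beta^{(r)}(x)-\tau^{\T_r}_\beta(x)\big),
\]
proves the easy lower bound $T_\beta^{(r)}(x)\ge\tau_\beta(x)+c_T\chi(\beta)/r^d$ for $\beta\ge\beta_c-c_3r^{-2}$ (this is where $c_3=\tfrac49 c_0^2$ enters: it guarantees \eqref{eq:DPlb} applies to \emph{many} lifts $x+ru$ with $1\le\|u\|_\infty\le M$ and $M\asymp(\beta_c-\beta)^{-1/2}/r$, summing to $\asymp M^2/r^{d-2}\asymp\chi/r^d$; covering only the $u=0$ representative as you describe gives nothing), and then proves the hard \emph{upper} bound
\[
T_\beta^{(r)}(x)-\tau^{\T_r}_\beta(x)\;\le\;\tfrac12 c_T\,\chi(\beta)/r^d
\]
for $\beta\le\beta_c-c_4r^{-d/2}$ and $\|x\|_\infty\ge c_5$. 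This last inequality is the entire content of the paper: it requires a new coupling of torus and $\Z^d$ currents, a percolation exploration, and a switching lemma between the two graphs, yielding a diagrammatic bound of the form $(T\star T^2\star T)(x)\cdot\chi/r^d$. The role of $c_5$ is to make the convolution $(\tau*\tau^2*\tau)(x)\lesssim\nnnorm{x}^{-(d-4)}$ small, and $c_4$ makes the extra factor $\chi^2/r^d\lesssim c_4^{-2}$ small; neither is about preventing $\tau_\beta(0,x)$ from being order $1$. Your proposal does not locate this step, and without it the lower bound does not go through.
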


Note that for the lower bound, the interval of $\beta$ values is nonempty when $r$ is sufficiently large, since $d > 4$.
In this case, we write
\begin{equation}
    \beta_*  = \beta_*(r)  = \beta_c-c_4 r^{-d/2}.
\end{equation}
At $\beta_*$, we have $\chi(\beta_*) \asymp r^{d/2}$ by \eqref{eq:chibds},
so \eqref{eq:plateau-ub}--\eqref{eq:plateau-lb} identify a plateau term in $\tau_{\beta_*}^\Tr$ of order $r^{-d/2}$.
The monotonicity in $\beta$ then implies that the plateau term
persists for larger $\beta$, in particular for $\beta_c$.
\begin{Coro} \label{coro:plateau}
Let $d>4$ and $r$ be large. Then
\begin{align} \label{eq:beta*plateau}
\tau_{\beta_*}^{\T_r}(0,x) \asymp \nnnorm x^{-(d-2)} + r^{-d/2}
\end{align}
for all $x \in \T_r$.
In particular, $\tau_\beta^{\T_r}(0,x) \gtrsim \nnnorm x^{-(d-2)} + r^{-d/2}$
for all $\beta \ge \beta_*$.
\end{Coro}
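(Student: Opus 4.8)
The plan is to deduce Corollary~\ref{coro:plateau} from Theorem~\ref{thm:plateau} by specialising to $\beta=\beta_*$ and inserting the infinite-volume inputs \eqref{eq:chibds}, \eqref{eq:DPub} and \eqref{eq:DPlb}; the ``in particular'' assertion will then follow from monotonicity in $\beta$. First I would check that $\beta_*=\beta_c-c_4r^{-d/2}$ lies in the admissible range $[\beta_c-c_3r^{-2},\beta_c-c_4r^{-d/2}]$ of Theorem~\ref{thm:plateau}: since $d>4$ we have $r^{-d/2}=o(r^{-2})$, so this holds for all large $r$ and both \eqref{eq:plateau-ub} and \eqref{eq:plateau-lb} apply at $\beta_*$. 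From $\beta_c-\beta_*=c_4r^{-d/2}$ and \eqref{eq:chibds} we get $\chi(\beta_*)\asymp r^{d/2}$, hence $\chi(\beta_*)/r^d\asymp r^{-d/2}$. The upper bound in \eqref{eq:beta*plateau} is then immediate: combining \eqref{eq:plateau-ub} with \eqref{eq:DPub} (bounding the exponential factor by $1$) yields $\tau_{\beta_*}^{\Tr}(0,x)\le C_0\nnnorm x^{-(d-2)}+c_1\chi(\beta_*)/r^d\lesssim\nnnorm x^{-(d-2)}+r^{-d/2}$, uniformly in $x\in\Tr$.

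For the lower bound, the key observation is that, identifying $x\in\Tr$ with its representative in $\Lambdar$, one has $\|x\|_\infty\le r/2$, whereas the range of validity of \eqref{eq:DPlb} at $\beta_*$ is $\|x\|_\infty\le c_0(\beta_c-\beta_*)^{-1/2}=c_0c_4^{-1/2}r^{d/4}$; because $d>4$ the latter exceeds $r/2$ for all large $r$, so \eqref{eq:DPlb} in fact gives $\tau_{\beta_*}(0,x)\ge c_0\nnnorm x^{-(d-2)}$ for \emph{every} $x\in\Tr$. Together with \eqref{eq:plateau-lb}, this gives for $\|x\|_\infty\ge c_5$
\[
\tau_{\beta_*}^{\Tr}(0,x)\;\ge\;\tau_{\beta_*}(0,x)+c_2\frac{\chi(\beta_*)}{r^d}\;\ge\;c_0\nnnorm x^{-(d-2)}+c_2\frac{\chi(\beta_*)}{r^d}\;\gtrsim\;\nnnorm x^{-(d-2)}+r^{-d/2}.
\]
For the finitely many remaining $x$ with $\|x\|_\infty<c_5$ (for which $\nnnorm x^{-(d-2)}\asymp 1$ and \eqref{eq:plateau-lb} is not available), I would instead use Griffiths' second inequality to bound $\tau_{\beta_*}^{\Tr}(0,x)$ from below by the two-point function of a one-dimensional Ising chain along a fixed path in $\Tr$ from $0$ to $x$, namely $(\tanh\beta_*)^{C(d,c_5)}$; since $\beta_*\to\beta_c$ this is $\gtrsim1$ for large $r$, which suffices because $1\asymp\nnnorm x^{-(d-2)}+r^{-d/2}$. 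Combined with the upper bound, this proves \eqref{eq:beta*plateau}.

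Finally, the ``in particular'' claim follows from the lower bound just established at $\beta_*$ together with monotonicity of $\beta\mapsto\tau^{\Tr}_\beta(0,x)$ (again Griffiths' second inequality): for $\beta\ge\beta_*$ we have $\tau^{\Tr}_\beta(0,x)\ge\tau^{\Tr}_{\beta_*}(0,x)\gtrsim\nnnorm x^{-(d-2)}+r^{-d/2}$. I do not expect any genuine obstacle here; the only point demanding attention is the bookkeeping of constants — specifically that $d>4$ is precisely what makes the $\beta$-interval of Theorem~\ref{thm:plateau} nonempty while simultaneously forcing the constraint in \eqref{eq:DPlb} to cover all of $\Tr$ — together with the routine short-distance case $\|x\|_\infty<c_5$.
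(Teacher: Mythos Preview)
Your proposal is correct and follows essentially the same approach as the paper: specialise Theorem~\ref{thm:plateau} to $\beta_*$, use \eqref{eq:chibds}--\eqref{eq:DPlb} for the infinite-volume inputs (noting that $d>4$ makes the constraint in \eqref{eq:DPlb} cover all of $\T_r$), and handle the short-distance case $\|x\|_\infty<c_5$ separately via Griffiths~II. The only cosmetic difference is in that last step: you compare to a one-dimensional chain along a path to get $(\tanh\beta_*)^{C(d,c_5)}$, whereas the paper compares to the fixed finite box $\Lambda_{2c_5}$ at inverse temperature $\beta_c/2$, setting $m_5=\min_{\|x\|_\infty\le c_5}\tau_{\beta_c/2}^{\Lambda_{2c_5}}(0,x)>0$; both are routine and equivalent in effect.
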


Summation of \eqref{eq:beta*plateau} over $x \in \T_r$
immediately gives the following bounds on the susceptibility,
because $\sum_{x\in \Tr} \nnnorm x^{-(d-2)} = O(r^2) = o(r^{d/2})$
in dimensions $d>4$.

\begin{Coro} \label{coro:chi}
For $d>4$ and $r$ large,
\begin{equation}
    \chi^{\mathbb T_r}(\beta_*) \asymp r^{d/2}.
\end{equation}
In particular, $\chi^{\mathbb T_r}(\beta) \gtrsim r^{d/2}$ for all $\beta \ge \beta_*$.
\end{Coro}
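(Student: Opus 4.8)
The plan is to obtain Corollary~\ref{coro:chi} directly from Corollary~\ref{coro:plateau} by summing the pointwise two-point function estimates over the $r^d$ vertices of $\Tr$. The whole argument then reduces to the elementary lattice bound $\sum_{x\in\Tr}\nnnorm{x}^{-(d-2)}\asymp r^2$, and the hypothesis $d>4$ enters only to guarantee that this contribution is negligible compared with $r^{d/2}$.

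For the upper bound, identify $\Tr$ with $\Lambda_r=[-r/2,r/2)^d\cap\Z^d$ and note that the number of $x\in\Tr$ with $\|x\|_\infty=k$ is of order $k^{d-1}$ for $1\le k\le r/2$, so
\begin{equation}
    \sum_{x\in\Tr}\nnnorm{x}^{-(d-2)} \asymp \sum_{k=1}^{\floor{r/2}} k^{d-1}\,k^{-(d-2)} = \sum_{k=1}^{\floor{r/2}} k \asymp r^2 .
\end{equation}
Summing the upper bound in \eqref{eq:beta*plateau} and using $\abs{\Tr}=r^d$ then gives
\begin{equation}
    \chi^{\Tr}(\beta_*) = \sum_{x\in\Tr}\tau_{\beta_*}^{\Tr}(0,x)
    \lesssim \sum_{x\in\Tr}\nnnorm{x}^{-(d-2)} + r^d\cdot r^{-d/2}
    \lesssim r^2 + r^{d/2} \lesssim r^{d/2},
\end{equation}
where the last inequality uses $d>4$, so that $r^2=o(r^{d/2})$.

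For the lower bound it suffices to retain only the plateau term: by the second (lower-bound) assertion of Corollary~\ref{coro:plateau} one has $\tau_\beta^{\Tr}(0,x)\gtrsim r^{-d/2}$ for every $x\in\Tr$ and every $\beta\ge\beta_*$, whence $\chi^{\Tr}(\beta)\gtrsim r^d\cdot r^{-d/2}=r^{d/2}$. Combining this with the upper bound at $\beta_*$ gives $\chi^{\Tr}(\beta_*)\asymp r^{d/2}$, and the lower bound $\chi^{\Tr}(\beta)\gtrsim r^{d/2}$ persists for all $\beta\ge\beta_*$ (consistently with the monotonicity of $\chi^{\Tr}$ in $\beta$). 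There is no genuine obstacle at this stage, since all of the analytic content is already packaged in Theorem~\ref{thm:plateau} and Corollary~\ref{coro:plateau}; the only point requiring a moment's care is that $d>4$ is exactly the condition under which the plateau contribution $r^{d/2}$ dominates the bulk contribution $r^2$ coming from the $\nnnorm{x}^{-(d-2)}$ term.
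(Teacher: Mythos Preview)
Your proof is correct and follows exactly the paper's approach: sum the pointwise bounds of Corollary~\ref{coro:plateau} over $x\in\Tr$ and use $\sum_{x\in\Tr}\nnnorm{x}^{-(d-2)}\asymp r^2 = o(r^{d/2})$ for $d>4$. The paper dispenses with this in a single sentence; your added detail on the lattice sum is fine.
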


Let $S_r = r^{-d}\sum_{x\in \T_r}\sigma_x$ denote the average spin on the torus.
We define the \emph{renormalised coupling constant} by
\begin{equation}
\label{eq:g-def}
    g^{\T_r}(\beta) =
    -
    \frac{\langle S_r^4 \rangle_\beta^{\T_r} - 3 (\langle S_r^2 \rangle_\beta^{\T_r})^2}
    {(\langle S_r^2 \rangle_\beta^{\T_r})^2}
    .
\end{equation}
By the Lebowitz inequality, the above numerator is non-positive, so $g^{\T_r}(\beta) \ge 0$.
By the second Griffiths inequality, $\langle S_r^4 \rangle_\beta^{\T_r} \ge
(\langle S_r^2 \rangle_\beta^{\T_r})^2$, so $g^{\T_r}(\beta) \le 2$.
The following theorem shows that $g^{\T_r}(\beta_*)$ is bounded away from zero.
This indicates a non-Gaussian
limit for the average field at $\beta=\beta_*$: for a Gaussian random variable
the numerator in \eqref{eq:g-def} is zero.
This is
in contrast to the situation at (and below) $\beta_c$ with free boundary conditions \cite{Aize82,AD21,Froh82,Pani24}, where the limit is Gaussian.
We do not have a proof that $g^{\mathbb T_r}$ is increasing
in $\beta$, but see \eqref{eq:glim} below.

\begin{theorem}
\label{thm:gr}
Let $d>4$.
There is a constant $c_g >0$ such that for all $r$ large,
\begin{equation}
    0 < c_g \le g^{\T_r}(\beta_*) \le 2 .
\end{equation}
\end{theorem}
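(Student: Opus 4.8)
\emph{Proof proposal.}
The upper bound $g^{\T_r}(\beta_*)\le 2$ is immediate from the second Griffiths inequality, so the content is the lower bound, and the plan is to access it through the magnetisation in a small external field. For $h\ge 0$ write $\langle\cdot\rangle^{\T_r}_{\beta,h}$ for the Ising expectation on $\T_r$ with an additional term $h\sum_{x\in\T_r}\sigma_x$ in the Hamiltonian, and set $m_r(h)=\langle\sigma_0\rangle^{\T_r}_{\beta_*,h}$. By translation invariance $m_r(h)=r^{-d}\tfrac{d}{dh}\log Z^{\T_r}_{\beta_*}(h)$, and the standard identification of cumulants of $\sum_x\sigma_x$ with derivatives of $\log Z^{\T_r}_{\beta_*}$ gives $m_r(0)=m_r''(0)=m_r^{(4)}(0)=0$ (oddness of $m_r$), $m_r'(0)=\chi^{\T_r}(\beta_*)$, and $r^d m_r'''(0)=\sum_{x_1,\dots,x_4\in\T_r}u_4(x_1,\dots,x_4)\le 0$ (Lebowitz's inequality), whence
$$g^{\T_r}(\beta_*)=\frac{-m_r'''(0)}{r^d\,\chi^{\T_r}(\beta_*)^2}.$$
Since $\chi^{\T_r}(\beta_*)^2\asymp r^d$ by Corollary~\ref{coro:chi}, the theorem reduces to proving $-m_r'''(0)\gtrsim r^{2d}$.

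The argument uses two inputs about $m_r$. First, the GHS inequality: $m_r$ is concave on $[0,\infty)$, so $h\mapsto m_r(h)/h$ is nonincreasing and $m_r(h)\le\chi^{\T_r}(\beta_*)h$. Second, a mean-field bound of Aizenman--Fern\'andez type, $m_r(h)\le C\,h^{1/3}$ for all $h>0$, with $C$ independent of $r$. In infinite volume and $d>4$ this is the classical statement $\langle\sigma_0\rangle^{\Z^d}_{\beta_c,h}\asymp h^{1/3}$; I would reprove it on the torus at $\beta_*$ by re-running the random-current and switching-lemma derivation of that bound, the only quantitative input being that the torus bubble $\sum_{x\in\T_r}\tau^{\T_r}_{\beta_*}(0,x)^2$ is bounded uniformly in $r$ --- which follows from $\tau^{\T_r}_{\beta_*}(0,x)\asymp\nnnorm x^{-(d-2)}+r^{-d/2}$ (Corollary~\ref{coro:plateau}) and $d>4$, just as the plateau bounds feed into the rest of the paper.

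Granting these, the computation is short. Fix a large constant $K$, independent of $r$, and set $h_1=K\,r^{-3d/4}$. The mean-field bound gives $m_r(h_1)\le C K^{1/3}r^{-d/4}$, whereas $\chi^{\T_r}(\beta_*)h_1\gtrsim K r^{-d/4}$ by Corollary~\ref{coro:chi}; taking $K$ large therefore forces $m_r(h_1)\le\tfrac12\chi^{\T_r}(\beta_*)h_1$. By monotonicity of $m_r(h)/h$ and the intermediate value theorem there is $h_*\in(0,h_1]$ with $m_r(h_*)=\tfrac12\chi^{\T_r}(\beta_*)h_*$. Taylor expanding $m_r$ to third order at $0$ (using $m_r(0)=m_r''(0)=m_r^{(4)}(0)=0$) and dividing by $h_*$ yields
$$\tfrac16\bigl(-m_r'''(0)\bigr)h_*^2=\tfrac12\chi^{\T_r}(\beta_*)+\frac{R(h_*)}{h_*},\qquad |R(h_*)|\le\tfrac1{24}\,h_*^4\,\sup_{[0,h_*]}|m_r^{(4)}|.$$
It suffices to show $\sup_{[0,h_1]}|m_r^{(4)}(\cdot)|\lesssim\chi^{\T_r}(\beta_*)h_1^{-3}$, for then (since $h_*\le h_1$) the remainder is absorbed, $\tfrac16(-m_r'''(0))h_*^2\ge\tfrac14\chi^{\T_r}(\beta_*)$, and $-m_r'''(0)\ge\tfrac32\chi^{\T_r}(\beta_*)h_*^{-2}\ge\tfrac32\chi^{\T_r}(\beta_*)h_1^{-2}\gtrsim r^{d/2}\cdot r^{3d/2}=r^{2d}$. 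The bound on $\sup|m_r^{(4)}|$ I would obtain from estimates on the fifth and sixth truncated correlation functions (Lebowitz and Shlosman sign conditions together with Aizenman--Fr\"ohlich-type tree bounds) and the two-point estimates of Corollary~\ref{coro:plateau}; at the field scale $h_1\asymp r^{-3d/4}$ these corrections are of strictly smaller order than the leading $m_r'''(0)h_*^3/6$.

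The main obstacle is the mean-field bound: establishing $m_r(h)\le C h^{1/3}$ \emph{uniformly in $r$} requires carrying the Aizenman--Fern\'andez analysis through on the finite torus with the bubble supplied by the plateau, and the analogous (more tedious) uniform estimates on the higher truncated functions needed to control the remainder. For comparison I record an alternative, more current-theoretic route: Aizenman's identity $-u_4(x_1,\dots,x_4)=2\langle\sigma_{x_1}\sigma_{x_2}\sigma_{x_3}\sigma_{x_4}\rangle^{\T_r}_{\beta_*}\,\mathbf P^{\{x_1,\dots,x_4\},\emptyset}\bigl[x_1,\dots,x_4\text{ lie in one cluster of }\mathbf n_1+\mathbf n_2\bigr]$, together with $\langle\sigma_{x_1}\sigma_{x_2}\sigma_{x_3}\sigma_{x_4}\rangle^{\T_r}_{\beta_*}\ge\tau^{\T_r}_{\beta_*}(x_1,x_2)\tau^{\T_r}_{\beta_*}(x_3,x_4)\gtrsim r^{-d}$ (second Griffiths and Corollary~\ref{coro:plateau}), reduces $g^{\T_r}(\beta_*)\gtrsim 1$ to an order-$1$ lower bound, uniform over a positive-density family of spread-out configurations, on the connection probability above; I expect that uniform connectivity lower bound to be harder to control directly, which is why I would favour the magnetisation route.
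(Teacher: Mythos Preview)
Your approach is genuinely different from the paper's, and substantially more roundabout. The paper never introduces an external field. Instead it uses a differential inequality in the temperature variable (stated as Proposition~\ref{prop:u4} and proved in Appendix~\ref{sec:u4proof} via the Aizenman--Fern\'andez dilution trick):
\[
\sum_{x,y,z}|U_4^{\T_r}(0,x,y,z)| \;\ge\; \frac{1}{128d^2}\min\Big\{\beta(\chi^{\T_r})^2\,\partial_\beta\chi^{\T_r},\;\frac{1}{\bubble}\big(\partial_\beta\chi^{\T_r}\big)^2\Big\}.
\]
This is combined with the Aizenman--Graham lower bound $\beta\partial_\beta\chi^{\T_r}\gtrsim (\chi^{\T_r})^2/(1+\bubble)$. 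Since $\bubble(\beta_*)\lesssim 1$ and $\chi^{\T_r}(\beta_*)\asymp r^{d/2}$ by the plateau corollaries, both branches of the minimum are $\gtrsim r^{2d}$, and dividing by $r^d(\chi^{\T_r})^2\asymp r^{2d}$ gives $g^{\T_r}(\beta_*)\gtrsim 1$ directly. No higher truncated functions, no Taylor remainders, no nonzero-field estimates.

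Your route, by contrast, has two real holes that you correctly flag. First, the uniform torus bound $m_r(h)\le Ch^{1/3}$: the infinite-volume Aizenman--Fern\'andez argument does transplant once the torus bubble is bounded, so this step is recoverable, but it is not free. Second, and more seriously, the control of $\sup_{[0,h_1]}|m_r^{(4)}|$: this is the summed fifth Ursell function at nonzero field, and tree-graph bounds for odd truncated functions with a ghost vertex, uniform over the torus window, are not in the standard toolkit. You would effectively need to redevelop a good part of the random-current machinery at $h>0$ just to bound a remainder term. Even if this can be done, it is a large detour for something the paper obtains in a few lines from existing $\beta$-differential inequalities. Your arithmetic checks out once the inputs are granted, but the inputs are the entire difficulty, and the paper's $\beta$-route sidesteps them completely.
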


Our lower bounds on
the susceptibility in Corollary~\ref{coro:chi},
on the two-point function in Corollary~\ref{coro:plateau},
and on the renormalised coupling constant in Theorem~\ref{thm:gr},
are claimed in \cite{Papa06},
but the arguments put forth in \cite{Papa06} do not constitute a proof.
Nevertheless, a useful strategy is proposed in \cite{Papa06}, and we build
upon that strategy in our work.

\subsection{Conjectured asymptotic formulas}
\label{sec:conjectures}

A precise and detailed account of the effect of boundary conditions on
finite-size scaling is given in \cite{MPS23,PS25}, via a rigorous renormalisation
group analysis.
The results of  \cite{MPS23,PS25} are proved for the weakly-coupled
$n$-component $|\varphi|^4$ model on the $d$-dimensional hierarchical lattice
for $d \ge 4$ and for all $n \ge 1$,
but, on the basis of universality, the results obtained there for the $1$-component
model are conjectured to apply also to
all $1$-component spin models in the universality
class of the nearest-neighbour Ising model.  An antecedent of the
conjectures goes back to \cite{BZ85} (see also \cite[Chapter~32]{Zinn21}).
The conjectures present predictions for
a specific and universal non-Gaussian limit for the average field,
and precise asymptotic amplitudes for the susceptibility, the plateau,
and the renormalised coupling constant, throughout a critical
window of width $r^{-d/2}$ (with a logarithmic correction for $d=4$) centred
at $\beta_c$.
We define the \emph{window scale} and the \emph{large-field scale}, respectively, by
\begin{equation}
\label{eq:window_choice}
	w_r =
    \begin{cases}
        a_{4} (\log r)^{-1/6}  r^{-2} & (d=4)
        \\
        a_{d}\,   r^{-d/2}
        & (d>4),
    \end{cases}
    \qquad
	h_r =
    \begin{cases}
        b_{4} (\log r)^{1/4}  r^{-1} & (d=4)
        \\
        b_{d}\,   r^{-d/4}
        & (d>4),
    \end{cases}
\end{equation}
with suitably chosen constants $a_d,b_d>0$.
We also define the \emph{universal profile} $f:\R \to (0,\infty)$ by
\begin{equation}
    f(s)
    =
    \int_{\R} x^2 \mathrm d\sigma_s,
    \qquad
    \mathrm d\sigma_s
    =
    \frac{ e^{-\frac 14 x^4 + \frac s2 x^2} \mathrm dx}
    {\int_{\R} e^{-\frac 14 x^4 + \frac s2 x^2} \mathrm dx}
    .
\end{equation}
The function $f$ is strictly monotone increasing, and satisfies
$f(s) \sim |s|^{-1}$ as $s \to -\infty$ and $f(s) \sim s$ as $s \to \infty$
(see \cite[Section~1.5.2]{MPS23}
where $f_1(s)=f(-s)$).
The following conjecture proposes a precise
non-Gaussian limit for the average field $S_r$, and a precise
role for the universal profile $f$ in the
critical finite-size scaling of the susceptibility, the two-point function, and the
renormalised coupling constant for the Ising model.
The conjecture is a special case ($n=1$) of conjectures from
\cite{MPS23,PS25}
for $n$-component models for all $n \ge 0$ ($n=0$ corresponds to the self-avoiding walk).  The profile $f$ also arises in the Curie--Weiss model; a recent example is
\cite[Theorem~1.1]{BBE24}.

\begin{Conj}[Non-gaussian limit and universal profile]
\label{conj:chi}
For $d \ge 4$, $s \in \R$, and $\beta = \beta_c + s w_r$,
as $r \to \infty$ the rescaled average field
converges in distribution to a
real-valued random variable
with the non-Gaussian distribution $\mathrm d\sigma_s$:
\begin{equation}
    h_r^{-1} S_r
    \Rightarrow \mathrm d\sigma_s
    \quad
    \text{under $\langle \cdot\rangle_{\beta_c+sw_r}^{\Tr}$.}
\end{equation}
Moments of $h_r\inv S_r$ also converge to moments of $\mathrm d\sigma_s$. In particular,
\begin{gather}
    \chi^{\T_r}(\beta_c + sw_r)
    = r^d \langle S_r^2\rangle_{\beta_c+sw_r}^{\mathbb T_r}
    \sim
    r^d h_r^2 f(s)
    =
    b_d^{2} \, f(s)
    \times
    \begin{cases}
        (\log r)^{1/2} r^2 & (d=4)
        \\
        r^{d/2} & (d>4),
    \end{cases}
\\
\label{eq:glim}
    \lim_{r\to\infty} g^{\T_r}(\beta_c+sw_r)
    =
    3 -
    \frac{\int_{\R} x^4 \mathrm d\sigma_s}{(\int_{\R} x^2 \mathrm d\sigma_s)^2}.
\end{gather}
For $x \in \Tr$, the plateau is asymptotically the susceptibility divided by
the volume:
\begin{align}
\label{eq:conj-plateau}
    \tau^{\T_r}_{\beta_c + sw_r}(0,x)
    -
    \tau_{\beta_c}(0,x)
    \sim h_r^2 f(s)
    =
    b_d^{2} \,
     f(s) \times
    \begin{cases}
        (\log r)^{1/2} r^{-2} & (d=4)
        \\
        r^{-d/2} & (d>4).
    \end{cases}
\end{align}
\end{Conj}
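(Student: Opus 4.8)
\textbf{Proof proposal for Conjecture~\ref{conj:chi}.}
The plan is to prove the conjecture by establishing the convergence in distribution $h_r^{-1}S_r \Rightarrow \mathrm d\sigma_s$ together with moment convergence; all the displayed identities for $\chi^{\T_r}$, $g^{\T_r}$, and the plateau then follow. The natural route is the random current representation, which is already central to this paper, combined with a renormalisation-group (RG) analysis in the spirit of \cite{MPS23,PS25} but adapted from the hierarchical lattice to $\Z^d$. The first step is to reduce everything to a single scalar quantity: using the random current representation and the switching lemma, I would express the moments $\langle S_r^{2k}\rangle_\beta^{\T_r}$ in terms of sums of products of $\tau_\beta^{\T_r}$ with correction terms controlled by the Aizenman--Fr\"ohlich-type bounds on the four-point Ursell function $u_4$. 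Concretely, $r^{2d}(\langle S_r^4\rangle - 3\langle S_r^2\rangle^2) = \sum_{x_1,\dots,x_4\in\T_r} u_4(x_1,\dots,x_4)$, and the task is to show this sum is, to leading order, $-C \chi^{\T_r}(\beta)^3 r^{-d}$ times a convergent constant, rather than merely negative and of that order. This requires sharp two-sided control of the torus bubble and of the ``diffusive'' decomposition $\tau_\beta^{\T_r}(0,x) \sim \tau_\beta(0,x) + \chi(\beta)/r^d$ (which Theorem~\ref{thm:plateau} and Corollary~\ref{coro:plateau} provide only up to constants).

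The second step is to promote the order-of-magnitude plateau statement of Theorem~\ref{thm:plateau} to an \emph{asymptotic equality} $\tau_\beta^{\T_r}(0,x) - \tau_\beta(0,x) \sim \chi(\beta)/r^d$ uniformly over $x$ in the bulk, with the appropriate rate, throughout the window $\beta = \beta_c + s w_r$. For $d>4$ I would aim to do this by a finite-volume lace expansion on the torus adapted to the Ising model (using the random current / switching lemma machinery as in Sakai's work \cite{Saka07}), running it for the \emph{spread-out} model or in high dimensions where the bootstrap closes without a small parameter being an obstacle; the plateau asymptotics for SAW and percolation \cite{Slad23_wsaw,Liu24,HS14,HMS23} are the template. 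For $d=4$ one must incorporate the logarithmic corrections in $w_r$ and $h_r$, which is exactly where a genuine RG analysis is unavoidable: the hierarchical results of \cite{MPS23,PS25} would need to be transplanted to $\Z^4$, using either a rigorous RG map for the $|\varphi|^4$ model (as developed by Bauerschmidt--Brydges--Slade and collaborators) together with the Griffiths--Simon approximation of the Ising model by $|\varphi|^4$, or a direct random-current RG. The constants $a_d, b_d$ are then identified from the RG fixed-point data, and the emergence of the single-site measure $\mathrm d\sigma_s \propto e^{-\frac14 x^4 + \frac s2 x^2}\,\mathrm dx$ comes from the fact that at the critical scale the effective action for the zero mode of $\varphi$ on the torus is governed by a quartic potential with the quadratic coefficient tuned by $s$, exactly as in the Curie--Weiss and hierarchical analyses \cite{EL10,MPS23}.

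The third step is the weak-convergence statement itself: having controlled all moments $h_r^{-2k}\langle S_r^{2k}\rangle_{\beta_c + sw_r}^{\T_r} \to \int x^{2k}\,\mathrm d\sigma_s$ (odd moments vanish by $\pm$ symmetry), I would invoke the method of moments, checking that $\mathrm d\sigma_s$ is determined by its moments — which holds because $e^{-\frac14 x^4}$ decay gives super-Gaussian tails, so Carleman's condition is satisfied. The moment bounds needed as input are: an upper bound $\langle S_r^{2k}\rangle \lesssim (h_r^2)^k$ (this follows from the Ising Gaussian-domination / Lebowitz-type bounds once $\langle S_r^2\rangle \asymp h_r^2$ is known, i.e. from Corollary~\ref{coro:chi} refined to an asymptotic), and matching lower bounds from the RG computation. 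Then $g^{\T_r}(\beta_c+sw_r) = 3 - \langle S_r^4\rangle/\langle S_r^2\rangle^2 \to 3 - \int x^4\,\mathrm d\sigma_s / (\int x^2\,\mathrm d\sigma_s)^2$ is immediate, and \eqref{eq:conj-plateau} follows by combining the asymptotic plateau identity with $\chi^{\T_r}(\beta_c+sw_r) = r^d \langle S_r^2\rangle \sim r^d h_r^2 f(s)$.

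\textbf{Main obstacle.} The hard part is upgrading the two-sided order-of-magnitude bounds of this paper (Theorems~\ref{thm:plateau}--\ref{thm:gr}) to \emph{sharp asymptotics with identified universal constants}, which genuinely requires a renormalisation-group analysis on $\Z^d$ rather than the soft random-current arguments used here; in particular the $d=4$ case with its logarithmic corrections, and the precise identification of the limiting law as $\mathrm d\sigma_s$ (not just ``some non-Gaussian law''), are beyond the reach of the near-critical inputs \eqref{eq:DPub}--\eqref{eq:DPlb} and the switching lemma alone. A secondary difficulty is controlling the finite-volume lace expansion (or RG flow) uniformly across the entire window $\beta\in\beta_c + [s_-,s_+]w_r$ simultaneously with the expansion in $|x|/r$, since the plateau and the power-law decay must be separated cleanly to extract the constant $b_d^2 f(s)$. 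This is why the statement is presented as a conjecture: the techniques of the present paper establish the scaling exponents and non-triviality, but not the full universal picture of \cite{MPS23,PS25}.
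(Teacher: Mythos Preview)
The statement you are attempting to prove is Conjecture~\ref{conj:chi}, and the paper does \emph{not} prove it: it is explicitly presented as a conjecture, motivated by the rigorous hierarchical results of \cite{MPS23,PS25} and universality considerations, with no proof offered or claimed. There is therefore no ``paper's own proof'' to compare your proposal against.

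Your proposal is not a proof but a research-program outline, and you recognise this yourself in the final paragraph. The outline is reasonable as far as it goes: moment method plus Carleman, reduction to sharp two-point asymptotics, and the need for a genuine RG analysis on $\Z^d$ (especially for $d=4$) are all the right ingredients to name. But nothing in the proposal constitutes an actual argument --- each of the three ``steps'' is a restatement of what would need to be true, not a demonstration that it is. In particular, the upgrade from the order-of-magnitude bounds of Theorem~\ref{thm:plateau} to sharp asymptotics with identified constants is precisely the open problem; invoking ``a finite-volume lace expansion on the torus adapted to the Ising model'' or ``transplanting the hierarchical RG to $\Z^4$'' names the missing tool without supplying it. Your honest closing assessment --- that the present paper's techniques give exponents and non-triviality but not the full universal picture --- is exactly right, and is the reason the authors state this as a conjecture rather than a theorem.
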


As observed in \cite[Section~1.5.3]{MPS23},
the right-hand side of \eqref{eq:glim} is strictly monotone increasing,
and has limits $0$ and $2$ as $s \to -\infty$ and $s \to +\infty$ respectively.
Thus, assuming \eqref{eq:glim}, the
renormalised coupling constant varies monotonically over its possible range $[0,2]$ as the
window is traversed.
At the infinite-volume critical point, \eqref{eq:glim} gives
\begin{equation}
    \lim_{r\to\infty} g^{\T_r}(\beta_c)
    =
    3 - 4\left( \frac{\Gamma(5/4)}{\Gamma(3/4)} \right)^2
    =
    0.81156\ldots
    \qquad
    (d \ge 4).
\end{equation}

The results of \cite{MPS23,PS25} further suggest the following conjecture
for the Ising model with FBC in dimensions $d \ge 4$.

\begin{Conj}[Free boundary conditions]
\label{conj:FBC}
There is an \emph{effective critical point} $\beta_{c,r}$ for the Ising
model with free boundary conditions on the box $\Lambda_r$, which obeys
\begin{equation}
    \beta_{c,r} = \beta_c + v_r, \qquad
    v_r \asymp
    \begin{cases}
        (\log r)^{1/3}r^{-2} & (d=4)
        \\
        r^{-2} & (d>4),
    \end{cases}
\end{equation}
such that the statements of Conjecture~\ref{conj:chi}
hold
as stated when $\beta_c$ is replaced by $\beta_{c,r}$.
In particular, we conjecture that
the plateau phenomenon also occurs for free boundary conditions
at the effective critical point.
\end{Conj}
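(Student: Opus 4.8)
The plan is to transplant the random-current analysis of this paper to the box $\Lambda_r$ with free boundary conditions, the one essential new ingredient being the construction of the effective critical point $\beta_{c,r}$ and the determination of the shift $v_r=\beta_{c,r}-\beta_c$. A workable definition is to let $\beta_{c,r}$ be the value of $\beta$ at which a finite-size criterion is met, for example $\chi^{\Lambda_r}(\beta_{c,r})=\lambda\, r^{d/2}$ for a suitably chosen fixed constant $\lambda$; this is the FBC counterpart of the role played by $\beta_*$ on the torus. Since $\chi^{\Lambda_r}(\beta_c)\lesssim r^2$ — which follows from the Griffiths monotonicity $\tau^{\Lambda_r}_{\beta_c}\le\tau_{\beta_c}$ together with \eqref{eq:DPub} — and $r^2\ll r^{d/2}$ for $d>4$, monotonicity of $\chi^{\Lambda_r}$ in $\beta$ forces $\beta_{c,r}>\beta_c$. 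To pin down $v_r\asymp r^{-2}$ one uses that at $\beta_c$ the box sits a distance $v_r$ (measured in $\beta$) below its own effective critical point, so that the mean-field susceptibility law $\chi\asymp(\text{distance to criticality})^{-1}$, valid for $d>4$, predicts $\chi^{\Lambda_r}(\beta_c)\asymp v_r^{-1}$; combined with the two-sided bound $\chi^{\Lambda_r}(\beta_c)\asymp r^2$ (the absence-of-plateau result \cite{CJN21}) this yields $v_r\asymp r^{-2}$. Making this rigorous requires differential-inequality control of $\frac{d}{d\beta}\chi^{\Lambda_r}$ and near-critical bounds on the \emph{truncated} two-point function slightly \emph{above} $\beta_c$, complementing \eqref{eq:DPub}--\eqref{eq:DPlb} which are stated for $\beta\le\beta_c$; these should be within reach of the random current representation and the lace-expansion-free input of \cite{DP24}.

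With $\beta_{c,r}$ and the scale $v_r\asymp r^{-2}$ in place, the next step is to reprove, for $\beta$ in a window of width $w_r$ (as in \eqref{eq:window_choice}) around $\beta_{c,r}$, a two-sided plateau estimate in the spirit of Theorem~\ref{thm:plateau}, with $\tau_\beta$ replaced by the correct infinite-volume reference — one must here be careful about which infinite-volume state and which reference is subtracted, presumably the truncated two-point function, since at $\beta_{c,r}>\beta_c$ the infinite-volume state is ordered and $\tau_{\beta_{c,r}}(0,x)\to\mathcal{M}(\beta_{c,r})^2=O(r^{-2})$. The upper bound should follow the torus argument: in the random current from $0$ to $x$, either the source cluster stays microscopic, producing the reference term up to a correction governed by \eqref{eq:DPub} and the $O(r^{-2})$ shift, or it reaches $\partial\Lambda_r$, and summing the cost of such a boundary excursion over the $O(r^{d-1})$ boundary sites weighted by $\xvee^{-(d-2)}$ produces a contribution of the claimed size $\chi^{\Lambda_r}(\beta)/r^d\asymp r^{-d/2}$; summation over $x$ then yields $\chi^{\Lambda_r}(\beta_{c,r}+sw_r)\asymp r^{d/2}$. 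For the non-Gaussian limit of the rescaled average field $h_r^{-1}S_r$ (with $S_r=r^{-d}\sum_x\sigma_x$) and for the renormalised coupling constant for FBC, one argues as for Theorem~\ref{thm:gr}: a Gaussian-domination and random-current comparison showing that the fourth cumulant of $S_r$ is comparable to $(\langle S_r^2\rangle^{\Lambda_r}_{\beta_{c,r}})^2$, which gives tightness of $h_r^{-1}S_r$ and non-Gaussianity of every subsequential limit. Identifying the limit precisely with $\mathrm d\sigma_s$ and the profile $f$ would, as in \cite{MPS23,PS25}, require a finer moment expansion.

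The main obstacle is the lower bound on the FBC plateau. On the torus the plateau is topological: it is manufactured by currents that wind around $\T_r$, hence is automatically of order $\chi^{\T_r}/r^d$. For FBC there is no winding mechanism, so the plateau must be assembled out of boundary-reflected contributions, and two things must be established simultaneously — that these contributions survive subtraction of the infinite-volume reference, and that they are not asymptotically smaller than $r^{-d/2}$. The first point is delicate because \cite{CJN21} shows the plateau is \emph{absent} for FBC at $\beta_c$ itself, so the entire effect is generated purely by displacing $\beta$ from $\beta_c$ to $\beta_{c,r}$: one needs quantitative control of how FBC random currents interact with $\partial\Lambda_r$ as $\beta$ passes through $\beta_{c,r}$, beyond anything provided by \eqref{eq:DPub}--\eqref{eq:DPlb}. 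This is presumably why the statement remains a conjecture. A reasonable intermediate target is to first establish the plateau on a box with mixed boundary conditions — periodic in some coordinate directions and free in the others — where partial translation invariance and a partial winding mechanism can be compared with Theorem~\ref{thm:plateau}, and then to decrease the number of periodic directions one at a time, introducing the $r^{-2}$ shift at each stage.
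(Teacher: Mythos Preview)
The statement you are attempting to prove is a \emph{Conjecture} in the paper, not a theorem: the paper offers no proof, only motivation via the hierarchical results of \cite{MPS23,PS25}. There is therefore nothing to compare your proposal against. What you have written is a heuristic roadmap rather than a proof, and you correctly identify the central obstruction yourself: the lower bound on the FBC plateau has no winding mechanism to exploit, and must be generated entirely by the displacement $\beta_{c,r}-\beta_c$, which requires quantitative control of how random currents interact with $\partial\Lambda_r$ slightly \emph{above} $\beta_c$. None of the inputs available in the paper --- in particular \eqref{eq:DPub}--\eqref{eq:DPlb}, which are stated only for $\beta\le\beta_c$ --- speak to this regime.

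Several of the intermediate steps you invoke are themselves open problems rather than routine adaptations. The mean-field relation $\chi^{\Lambda_r}(\beta_c)\asymp v_r^{-1}$ you use to pin down $v_r\asymp r^{-2}$ presupposes exactly the finite-volume near-critical control that is being sought; the two-sided bound $\chi^{\Lambda_r}(\beta_c)\asymp r^2$ is not established in \cite{CJN21} (that paper gives absence of a plateau, not a matching lower bound on the susceptibility); and near-critical bounds on the truncated two-point function for $\beta>\beta_c$ are not known to follow from \cite{DP24}. Your mixed-boundary-condition interpolation is an interesting suggestion, but each reduction in the number of periodic directions faces the same boundary-reflection difficulty. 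In short, the proposal accurately surveys what would need to be done and where the hard part lies, but it does not supply the missing ideas --- which is consistent with the statement's status as a conjecture.
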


The shift scale $v_r$ exceeds the window scale $w_r$, so $\beta_c$ is \emph{not} inside the conjectured scaling window for FBC.
This is consistent with theorems proving a Gaussian limit at $\beta_c$ under FBC \cite{Aize82,Froh82,AD21,Pani24,MPS23,PS25}.

Closely related conjectures for percolation are discussed in Appendix~\ref{sec:perc}.
Our analysis is based on the random current representation for the Ising model,
which gives a geometric representation for correlation functions with
similarities to bond percolation.  In view of this parallel with percolation,
the following intuitive picture of the percolation plateau
provides intuition more generally.

\begin{paragraph}{Origin of the plateau.}
For spread-out independent Bernoulli bond percolation on the torus $\T_r$
in dimensions $d >6$,
it is proved in \cite{HMS23} that, in a window of width $r^{-d/3}$ containing
the infinite-volume critical point $p_c$, the two-point function behaves
as $|x|^{-(d-2)}+r^{-2d/3}$ and the susceptibility (expected cluster size)
behaves as $r^{d/3}$.
Also,
it has been proved for sufficiently high dimensions (presumably true for all $d>6$)
that at $p_c$
the largest cluster on $\T_r$ is of order $r^{2d/3}$, whereas with FBC
the largest cluster has size $r^4$; see \cite[Theorem~13.5]{HH17book}, \cite[Theorem~13.22]{HH17book}, \cite{Aize97}.
For PBC, if we assume that two points $0$ and $x$ in $\T_r$ are connected to
a largest cluster (in particular connected to each other)
with probability $(r^{2d/3}/r^d)^2$, we find a plateau term of
the correct order $r^{-2d/3}$.  For FBC, a similar computation gives a probability
$(r^4/r^d)^2 = r^{-(d-2)+(d-6)}$, which is negligible compared to
$\tau_{p_c}(0,x) \asymp |x|^{-(d-2)}$.
This rough computation is consistent with a plateau for PBC and no plateau for FBC.

The equality of powers $r^{2d/3}$ and $r^4$ when $d=6$ indicates a special role
for dimension $6$.  No plateau is predicted for $d<6$, or indeed for other
models such as spin systems below their upper critical dimension.
For percolation in dimensions below $6$, hyperscaling
plays a role \cite{Grim99}.  Hyperscaling reflects the dimension-dependent
manner in which critical clusters are constrained to fit into space.
The distinction between a plateau above the critical dimension, and no plateau
below the critical dimension, may be understood intuitively as due to the fact
that in high dimensions a critical cluster that ``wraps around'' discovers new
vertices, whereas in low dimensions the wrapping mostly encounters vertices already
present in the cluster without wrapping.  In \cite{DGGZ22}, the role of wrapping
above the upper critical dimension is studied numerically in related settings.
\end{paragraph}

\subsection{Structure of proof}

We now state five propositions and prove our main results subject to these propositions. The proof of Theorem~\ref{thm:plateau} mirrors the general strategy used for self-avoiding walk in \cite{Slad23_wsaw, Liu24} and for percolation in \cite{HMS23}.
However, there are model-specific elements in the proof of these propositions.
For the Ising model, we will use
\eqref{eq:DPub}--\eqref{eq:DPlb} as well as the random current representation
\cite{Aize82,GHS70}.

We observe that $\taub$ and $\taub^\Tr$ are translation invariant (but $\taub^\Lambdar$ is not), so we write $\taub(x) = \taub(0,x)$ and $\taub^\Tr(x)= \taub^\Tr(0,x)$ from now on.
For $x\in \Zd$, we define
\begin{equation}
\label{eq:Tsum}
    \Tsum_\beta^{(r)}(x) = \sum_{u \in \Z^d} \taub(x+ru) .
\end{equation}

\begin{proposition} \label{prop:unfold}
Let $d\ge1$, $r\ge 3$, $\beta>0$, and $x\in \Tr$. Then
$\taub^\Tr(x)  \le  \Tsum_\beta^{(r)}(x) .$
\end{proposition}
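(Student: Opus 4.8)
The plan is to compare the torus two-point function $\tau_\beta^{\mathbb T_r}(x)$ with the infinite-volume quantity $T_\beta^{(r)}(x)=\sum_{u\in\mathbb Z^d}\tau_\beta(x+ru)$ by exhibiting an explicit coupling (or rather a monotone comparison) between the Ising model on $\mathbb T_r$ and the Ising model on a large box in $\mathbb Z^d$. The natural device is the random current representation: for a finite graph $\mathcal G$, one has $\langle\sigma_a\sigma_b\rangle_\beta^{\mathcal G}=Z^{-1}\sum_{\partial\n=\{a,b\}}w_\beta(\n)$, where the sum is over current configurations $\n:E\to\mathbb Z_{\ge0}$ with sources $\{a,b\}$, and $\{a,b\}$ is connected within the trace of $\n$. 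Crucially, the torus $\mathbb T_r$ is a quotient of $\mathbb Z^d$: there is a covering map $\pi:\mathbb Z^d\to\mathbb T_r$ that descends to a surjection on edge sets.

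First I would fix the source $a=0$ and a representative $x\in\Lambda_r$ of the torus point, and write $\tau_\beta^{\mathbb T_r}(x)$ via the current expansion on $\mathbb T_r$. Next I would lift: given a current $\n$ on $\mathbb T_r$ with $\partial\n=\{0,x\}$ whose trace contains a path from $0$ to $x$, this path lifts through $\pi$ to a path in $\mathbb Z^d$ from $0$ to some preimage $x+ru$ of $x$. This suggests that the connectivity event on the torus is dominated by the union over $u\in\mathbb Z^d$ of the connectivity events $\{0\leftrightarrow x+ru\}$ in $\mathbb Z^d$ (realised inside a box large enough to contain the relevant finite currents, then passing to the limit $r'\to\infty$ for the ambient box). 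The clean way to make this rigorous without worrying about the partition-function ratios is to use the switching lemma / tree-graph-type bound, or more directly the known fact that $\tau_\beta^{\mathcal G}(a,b)\le \mathbb P_\beta^{\mathcal G}[a\leftrightarrow b$ in the random-current trace$]$-type inequalities, combined with the inequality $\langle\sigma_0\sigma_x\rangle^{\mathbb T_r}_\beta\le \sum_{u}\langle\sigma_0\sigma_{x+ru}\rangle_\beta^{\Lambda_{r'}}$ for suitable $r'$, obtained by mapping torus currents to $\mathbb Z^d$ currents and using that the weight $w_\beta$ is multiplicative over edges and preserved by the covering map on each lifted edge.

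The key steps, in order: (i) recall the random current representation of $\tau_\beta^{\mathcal G}$ and the source/connectivity constraint; (ii) use the covering map $\pi:\mathbb Z^d\to\mathbb T_r$ to lift a torus current with sources $\{0,x\}$ to a $\mathbb Z^d$-current with sources $\{0,x+ru\}$ for the (unique, once a base path is chosen) displacement $u$ determined by the lifted path — or, dually, partition torus currents according to which preimage of $x$ they ``connect to'' upstairs; (iii) sum the resulting inequality over $u$ and take the infinite-volume limit, invoking the Griffiths monotonicity already recalled in the excerpt to pass from $\Lambda_{r'}$ to $\mathbb Z^d$ and to guarantee the limit $\tau_\beta(x+ru)=\lim_{r'}\tau_\beta^{\Lambda_{r'}}(x+ru)$ exists; this yields $\tau_\beta^{\mathbb T_r}(x)\le\sum_u\tau_\beta(x+ru)=T_\beta^{(r)}(x)$.

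The main obstacle I expect is step (ii): making the ``lift a current from $\mathbb T_r$ to $\mathbb Z^d$'' argument genuinely rigorous, because the covering map is infinite-to-one on vertices, so a single torus current does not lift to a single $\mathbb Z^d$-current in a canonical way, and the partition functions $Z^{\mathbb T_r}$ and $Z^{\Lambda_{r'}}$ do not match up. The honest route is probably to avoid literal lifting of currents and instead argue at the level of the connectivity probability in the doubled/duplicated current (via the switching lemma one has $\tau_\beta^{\mathcal G}(0,x)^2=\mathbb P^{\mathcal G}_{\beta,\beta}[0\leftrightarrow x]\langle\cdots\rangle$), or simply to invoke the known comparison inequality between periodic and free boundary conditions together with a path-counting bound — but the cleanest is to recognise this as a standard ``folding'' inequality: any connection on the torus projects from a connection in the universal cover, hence summing over fundamental-domain translates of $x$ overcounts, giving the stated bound. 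I would present it through the random current weights, checking multiplicativity of $w_\beta$ and that the covering map identifies torus edges with $\mathbb Z^d$-edges in a weight-preserving way, so that the inequality reduces to ``every source-constrained torus current contributes to at least one term on the right.''
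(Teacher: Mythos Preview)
Your instinct that the core difficulty lies in step~(ii) is correct, but you do not resolve it. The ``folding'' intuition --- that a torus connection lifts to a $\mathbb Z^d$-connection to some translate $x+ru$ --- is sound at the level of paths, but it does not by itself produce an inequality between correlation functions, precisely because of the partition-function mismatch you flag: there is no weight-preserving injection from torus currents into $\Lambda_R$-currents (the periodic lift of a torus current has infinite total weight, and any finite truncation destroys the source constraint), and the denominators $Z^\varnothing_{\mathbb T_r}$ and $Z^\varnothing_{\Lambda_R}$ live on different graphs. None of your suggested workarounds --- the doubled-current connectivity identity on a single graph, a free/periodic comparison, or the assertion that ``every torus current contributes to at least one term on the right'' --- closes this gap as written; the last one in particular is exactly what fails, since the map from torus currents to box currents is not well-defined at the level of weighted sums.

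The paper's resolution supplies a genuine missing idea. Rather than lift the torus current alone, one introduces an \emph{auxiliary sourceless current} $\mathbf m$ on a large box $\Lambda_R$, so that trivially $Z^{0x}_{\mathbb T_r}/Z^\varnothing_{\mathbb T_r} = Z^{\varnothing,0x}_{\Lambda_R,\mathbb T_r}/Z^{\varnothing,\varnothing}_{\Lambda_R,\mathbb T_r}$. One then runs a percolation exploration of the cluster of $0$ in the combined configuration $\mathbf m+\hat{\mathbf n}$ (with $\hat{\mathbf n}$ the periodic lift of the torus current $\mathbf n$ to $\Lambda_R$), designed so that at most one representative of each torus vertex is visited; when $\partial\mathbf n=\{0,x\}$, exactly one preimage $x'\cong x$ lies in the explored cluster $\mathcal C(0)$. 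A new switching lemma, operating between the $\Lambda_R$-current and the $\mathbb T_r$-current conditionally on the exploration output, then transfers the sources from $\mathbf n$ to $\mathbf m$: $Z^{\varnothing,0x}_{\Lambda_R,\mathbb T_r}[x'\in\mathcal C(0)] = Z^{0x',\varnothing}_{\Lambda_R,\mathbb T_r}[x'\in\mathcal C(0)]$. Dropping the indicator and factoring yields $\tau_\beta^{\Lambda_R}(0,x')$; summing over $x'$ and sending $R\to\infty$ via the second Griffiths inequality gives the claim. The auxiliary current on $\Lambda_R$, the exploration that selects a unique preimage, and the two-graph switching are precisely the ingredients absent from your plan.
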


Proposition~\ref{prop:unfold} suggests decomposing $\taub^\Tr(x)$ as
\begin{align} \label{eq:taupsi1}
\taub^\Tr(x) =
\Tsum_\beta^{(r)}(x)  - \big( \Tsum_\beta^{(r)}(x) - \taub^\Tr(x) \big).
\end{align}
We estimate the two terms on the right-hand side separately as follows.
Recall that $\chi(\beta)$ denotes the $\Zd$ susceptibility.

\begin{proposition} \label{prop:psiub}
Let $d>4$.
There is a constant $c_1>0$ such that
for all $r\ge 3$ and $\beta < \beta_c$,
\begin{align}
\Tsum_\beta^{(r)}(x)
\le \taub(x) + c_1 \frac \chib {r^d}
\qquad (x\in \Tr).
\end{align}
\end{proposition}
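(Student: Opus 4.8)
The plan is to estimate the sum $\Tsum_\beta^{(r)}(x) = \sum_{u\in\Z^d}\taub(x+ru)$ by splitting off the $u=0$ term (which gives $\taub(x)$) and bounding the remaining terms $\sum_{u\ne 0}\taub(x+ru)$ by a multiple of $\chib/r^d$. The key input is the near-critical upper bound \eqref{eq:DPub}, namely $\taub(0,y)\le C_0\nnnorm{y}^{-(d-2)}e^{-c_0(\beta_c-\beta)^{1/2}\|y\|_\infty}$, which holds for all $\beta\le\beta_c$. First I would fix a representative of $x$ in $\Lambda_r$, so that $\|x\|_\infty\le r/2$, and for $u\ne0$ note that $\|x+ru\|_\infty\ge (\|u\|_\infty - \tfrac12)r \ge \tfrac12\|u\|_\infty r$; in particular $\nnnorm{x+ru}\asymp \|u\|_\infty r$ uniformly in such $x$. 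Plugging this into \eqref{eq:DPub} gives
\begin{equation}
\sum_{u\ne 0}\taub(x+ru)
\;\lesssim\;
\frac{1}{r^{d-2}}\sum_{u\ne 0}\frac{1}{\|u\|_\infty^{d-2}}\,e^{-\tfrac12 c_0(\beta_c-\beta)^{1/2} r\|u\|_\infty}.
\end{equation}

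Next I would handle the right-hand side by comparing the $\Z^d$ sum over $u$ with an integral (or by summing over spherical shells $\|u\|_\infty = k$, which contain $\asymp k^{d-1}$ lattice points), obtaining a bound of the form $r^{-(d-2)}\sum_{k\ge1}k\, e^{-\tfrac12 c_0 m k}$ where $m = (\beta_c-\beta)^{1/2} r$. Since $\sum_{k\ge1}k e^{-ak}\asymp a^{-2}$ for small $a$ and $\lesssim 1$ for $a$ bounded below, we get
\begin{equation}
\sum_{u\ne 0}\taub(x+ru)
\;\lesssim\;
\frac{1}{r^{d-2}}\cdot\frac{1}{1\wedge m^2}
\;=\;
\frac{1}{r^{d-2}}\Big(1 + \frac{1}{(\beta_c-\beta)\,r^2}\Big).
\end{equation}
The first term $r^{-(d-2)}$ is already $\lesssim r^{-d}\chib$ because $\chib\gtrsim r^2$ is false in general — rather one uses $\chib\asymp(\beta_c-\beta)^{-1}$ from \eqref{eq:chibds}, so $r^{-(d-2)} = r^2\cdot r^{-d}\lesssim (\beta_c-\beta)^{-1} r^{-d} = \chib r^{-d}$ precisely when $\beta_c-\beta\lesssim r^{-2}$; but when $\beta_c-\beta$ is of order $1$ (or any fixed positive amount) one needs a separate, cruder argument. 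The second term is exactly $(\beta_c-\beta)^{-1}r^{-d}\asymp \chib\, r^{-d}$ by \eqref{eq:chibds}, which is the desired bound.

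The main obstacle, then, is to show that the leftover is $\lesssim\chib/r^d$ \emph{uniformly over all} $\beta<\beta_c$, including $\beta$ bounded away from $\beta_c$ where $m = (\beta_c-\beta)^{1/2}r$ is large. In that regime the exponential decay in \eqref{eq:DPub} is very strong, so $\sum_{u\ne0}\taub(x+ru)$ is exponentially small in $r$, whereas $\chib/r^d\gtrsim r^{-d}$ is only polynomially small — so the bound holds comfortably, but one must write this case out explicitly rather than relying on the $(\beta_c-\beta)^{-1}$ scaling. I would therefore split into two cases according to whether $m\le 1$ or $m>1$: for $m\le 1$ use the computation above together with \eqref{eq:chibds} (here $\beta_c-\beta\le r^{-2}$, so $r^{-(d-2)}\le (\beta_c-\beta)^{-1}r^{-d}\lesssim\chib r^{-d}$), and for $m>1$ bound $\sum_{u\ne0}\taub(x+ru)\lesssim r^{-(d-2)}e^{-c_0 m/2}\lesssim r^{-(d-2)}e^{-c_0 r/2}$, which is certainly $\lesssim r^{-d}\le\chib r^{-d}$ since $\chib\ge\taub(0)>0$ is bounded below. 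Combining the two cases gives Proposition~\ref{prop:psiub}. The only mild subtlety is ensuring all the $\asymp$ and $\lesssim$ constants are genuinely independent of $r$ and $\beta$, which follows because the constants $c_0,C_0$ in \eqref{eq:DPub}--\eqref{eq:DPlb} and the constant in \eqref{eq:chibds} are.
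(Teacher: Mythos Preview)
Your approach is the same as the paper's: separate the $u=0$ term, apply \eqref{eq:DPub} together with $\|x+ru\|_\infty \ge \tfrac12 r\|u\|_\infty$ (this is \eqref{eq:xulb}), and estimate the remaining sum over $u\ne 0$. The paper packages that estimate as Lemma~\ref{lem:unifmassint} (the case $a=2$), which gives
\[
\sum_{u\ne 0}\frac{1}{\|x+ru\|_\infty^{d-2}}\,e^{-\mu\|x+ru\|_\infty}
\;\lesssim\; e^{-\frac14\mu r}\,\frac{1}{\mu^{2} r^{d}}
\]
uniformly in $\mu>0$; one then drops the exponential and uses $\mu^{-2}\asymp(\beta_c-\beta)^{-1}\asymp\chi$ from \eqref{eq:chibds}. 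No case split is needed.

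Your case split is an artifact of weakening $\sum_{k\ge 1} k\,e^{-ak}$ to $\lesssim 1+a^{-2}$. In fact $\sum_{k\ge 1}k\,e^{-ak}=e^{-a}/(1-e^{-a})^2\lesssim a^{-2}$ uniformly for all $a>0$, since $a^2e^{-a}/(1-e^{-a})^2$ is bounded on $(0,\infty)$. With that single observation the argument collapses to $r^{-(d-2)}\cdot a^{-2}\asymp(\beta_c-\beta)^{-1}r^{-d}\asymp\chi/r^d$.

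There is also a slip in your $m>1$ branch as written: the step $e^{-c_0 m/2}\lesssim e^{-c_0 r/2}$ would require $(\beta_c-\beta)^{1/2}\ge 1$, whereas $m>1$ only says $(\beta_c-\beta)^{1/2}>r^{-1}$. The repair is immediate---replace that step by $e^{-c_0 m/2}\lesssim m^{-2}$ (as $t\mapsto t^{2}e^{-c_0 t/2}$ is bounded), which gives $r^{-(d-2)}m^{-2}=(\beta_c-\beta)^{-1}r^{-d}\asymp\chi/r^d$ and closes the case.
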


\begin{proposition} \label{prop:psilb}
Let $d>4$ and $c_3=\frac 49 c_0^2$,
where $c_0$ is the constant of \eqref{eq:DPlb}.
There is a constant $c_T > 0$
such that
for all $r\ge 3$ and all $\beta\in [\beta_c-c_3r^{-2},\beta_c)$,
\begin{align}
\Tsum_\beta^{(r)}(x)  \ge  \taub(x) + c_T \frac \chib {r^d}
\qquad (x\in \Tr).
\end{align}
\end{proposition}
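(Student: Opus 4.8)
The plan is to write the difference explicitly as a sum over the nontrivial periodic copies,
\begin{equation}
    \Tsum_\beta^{(r)}(x) - \taub(x) = \sum_{u \in \Z^d \setminus\{0\}} \taub(x+ru) ,
\end{equation}
a series of nonnegative terms (by the Griffiths inequalities, convergent since $\beta<\beta_c$), and then to bound it below by discarding all but the copies $x+ru$ lying in the range where the near-critical lower bound \eqref{eq:DPlb} is available. Write $L = c_0(\beta_c-\beta)^{-1/2}$. First I would record the purely numerical fact that the hypothesis $\beta\ge\beta_c-c_3 r^{-2}$ with $c_3=\tfrac49 c_0^2$ is equivalent to $L\ge\tfrac32 r$. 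Since any $x\in\Tr$ has a representative in $\Lambda_r$, so that $\|x\|_\infty\le r/2$, the triangle inequality gives $\|x+ru\|_\infty\le r\|u\|_\infty+\tfrac r2$; hence every $u$ with $1\le\|u\|_\infty\le M:=\lfloor L/r-\tfrac12\rfloor$ satisfies $\|x+ru\|_\infty\le L$, and $M\ge1$ precisely because $L\ge\tfrac32 r$.

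Next I would apply \eqref{eq:DPlb} to each such copy, together with the elementary estimate $\veee{x+ru}\le 2r\|u\|_\infty$ (valid since $r\ge3$ and $\|u\|_\infty\ge1$), to obtain
\begin{equation}
    \Tsum_\beta^{(r)}(x) - \taub(x) \ \ge\ \frac{c_0}{(2r)^{d-2}} \sum_{1\le\|u\|_\infty\le M} \frac{1}{\|u\|_\infty^{d-2}} .
\end{equation}
Using $\#\{u\in\Z^d:\|u\|_\infty=k\}=(2k+1)^d-(2k-1)^d\ge 2d\,k^{d-1}$ for $k\ge1$, the inner sum is at least $2d\sum_{k=1}^M k\ge d\,M^2$, and a one-line check from $L/r\ge\tfrac32$ gives $M\ge L/(3r)$. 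Combining, the right-hand side is at least $\dfrac{c_0 d}{9\cdot 2^{d-2}}\,\dfrac{L^2}{r^d}=\dfrac{c_0^3 d}{9\cdot 2^{d-2}}\,\dfrac{1}{(\beta_c-\beta)\,r^d}$. Finally I would invoke \eqref{eq:chibds} in the form $(\beta_c-\beta)^{-1}\ge \const\cdot\chi(\beta)$ to conclude the stated bound, with $c_T$ depending only on $d$ (through $c_0$, $d$, and the constant in \eqref{eq:chibds}).

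I do not expect a genuine obstacle; the only delicate point is the bookkeeping at the edge of the window. Because $\|x\|_\infty$ can equal $r/2$, the innermost shell $\{\|u\|_\infty=1\}$ of copies has points at $\ell^\infty$-distance up to $\tfrac32 r$ from the origin, so one needs $L\ge\tfrac32 r$ for \eqref{eq:DPlb} to reach them — and it is exactly the choice $c_3=\tfrac49 c_0^2$ that guarantees this, which is why the proposition records this constant. It is also worth noting that the restriction $\beta<\beta_c$ (rather than $\beta\le\beta_c$) serves only to keep $\Tsum_\beta^{(r)}(x)$ and $\chi(\beta)$ finite; the near-critical input \eqref{eq:DPlb}, valid for all $\beta\le\beta_c$, is what drives the estimate.
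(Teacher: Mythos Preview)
Your proof is correct and follows essentially the same approach as the paper's: restrict the sum over nonzero $u$ to a box $\{1\le\|u\|_\infty\le M\}$ with $M$ a constant multiple of $c_0(\beta_c-\beta)^{-1/2}r^{-1}$, apply \eqref{eq:DPlb} on that range using $\|x+ru\|_\infty\le \tfrac32 r\|u\|_\infty$, sum to get $\asymp M^2/r^{d-2}$, and convert via \eqref{eq:chibds}. The paper chooses $M=\tfrac23 c_0(\beta_c-\beta)^{-1/2}r^{-1}$ directly (not bothering with the floor), while you track constants more explicitly and use the slightly cruder $\veee{x+ru}\le 2r\|u\|_\infty$, but there is no substantive difference.
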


\begin{proposition} \label{prop:psipsitil}
Let $d>4$.  Let $c_T$ be the constant of Proposition~\textup{\ref{prop:psilb}}.
There are constants $c_4, c_5 >0$ such that
for all $r\ge 3$ and $\beta \le \beta_c-c_4r^{-d/2}$,
\begin{align}
\Tsum_\beta^{(r)}(x) -  \tau_\beta^{\T_r}(x)
\le \half c_T \frac{\chi(\beta)}{r^d}
\qquad (x\in \Tr,\;\norm x_\infty \ge c_5).
\end{align}
\end{proposition}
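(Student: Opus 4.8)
The plan is to refine the unfolding argument behind Proposition~\ref{prop:unfold} so as to control the nonnegative defect $D(x) := \Tbr(x) - \tau_\beta^{\T_r}(x)$. Recall that $\tau_\beta^{\T_r}(0,x)$ is given by the random current representation as a ratio of source-constrained current partition functions, and that Proposition~\ref{prop:unfold} is obtained by comparing $\T_r$ with the covering tori $\T_{kr}$ and projecting currents, in the limit $k\to\infty$; in this comparison a $\Z^d$ current with source $\{0,x+ru\}$ feeds the winding-$u$ term of $\Tbr(x)$. Such a contribution is lost — so the inequality of Proposition~\ref{prop:unfold} is strict — exactly when the current fails to project injectively onto the torus, i.e.\ when the connection it realises, namely its backbone together with the source-free cloud attached along it, visits two sites that differ by a nonzero element of $r\Z^d$. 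The first step is to make this quantitative: $D(x)$ is at most the total current weight of configurations that ``wrap'' in this sense.

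The second step is to cut such a wrapping configuration at the first site where the current returns to itself modulo $r\Z^d$, and to use the switching lemma to convert the resulting long-range constraint into a sum of products of two-point functions. I expect a bound of the shape
\begin{equation}
    D(x)
    \;\lesssim\;
    \Big( \sum_{v \in \Z^d \setminus \{0\}} \taub(rv) \Big)
    \sum_{a \in \Z^d} \taub(a)\, \Tbr(x-a) ,
\end{equation}
in which the first factor accounts for the displacement of the wrap (an element of $r\Z^d\setminus\{0\}$, hence of $\ell^\infty$-norm at least $r$), $\taub(a)$ for the connection from $0$ to the base of the wrap, and $\Tbr(x-a)$ for the connection from there to $x$ once the winding is resummed; all two-point functions are infinite-volume by Griffiths monotonicity, and any diagrammatic prefactor produced by switching is a bounded power of the bubble $\bubble=\sum_y\taub(y)^2$, which is finite since $d>4$. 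The decisive structural point is that the estimate stays at the level of the bubble and not the triangle — this is why the argument runs for all $d>4$, in contrast with the percolation plateau, which requires $d>6$. I expect the main obstacle to be the first step combined with this one: pinning down precisely which configurations are lost in the unfolding inequality and writing their weight in a form to which the switching lemma applies, since — unlike for the self-avoiding walk version of this argument — the current carries a source-free cloud that can wrap around the torus independently of the backbone.

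Granting such a bound, the remaining estimates are routine. Separating the polynomial and exponential factors in \eqref{eq:DPub} and comparing the sum with an integral gives $\sum_{v\ne0}\taub(rv)\lesssim r^{-d}\chi(\beta)$, uniformly in $\beta<\beta_c$ (using $\chi(\beta)\asymp(\beta_c-\beta)^{-1}$ from \eqref{eq:chibds}). Applying Proposition~\ref{prop:psiub} to bound $\Tbr(x-a)\le\taub(x-a)+c_1\chi(\beta)/r^d$, and using $\sum_a\taub(a)=\chi(\beta)$ together with the convolution estimate $(\taub\ast\taub)(x)\lesssim\xvee^{-(d-4)}$ (valid for $d>4$, again from \eqref{eq:DPub}), we obtain
\begin{equation}
    D(x)
    \;\lesssim\;
    \frac{\chi(\beta)}{r^d}\left( \frac{1}{\xvee^{\,d-4}} + \frac{\chi(\beta)^2}{r^d} \right) .
\end{equation}
For $\|x\|_\infty\ge c_5$ the first term in the bracket is at most $c_5^{-(d-4)}$, and for $\beta\le\beta_c-c_4r^{-d/2}$ we have $\chi(\beta)\lesssim c_4^{-1}r^{d/2}$ by \eqref{eq:chibds}, so $\chi(\beta)^2/r^d\lesssim c_4^{-2}$. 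Choosing $c_4$ and $c_5$ large enough makes the bracket smaller than $\half c_T$ divided by the implicit constant, which yields $D(x)\le\half c_T\chi(\beta)/r^d$ and proves the proposition.
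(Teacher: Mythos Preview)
Your endgame is correct and mirrors the paper: once a diagrammatic bound producing the factor $\chi(\beta)/r^d$ times something decaying like $\xvee^{-(d-4)}$ plus powers of $\chi^2/r^d$ is available, choosing $c_4,c_5$ large finishes exactly as you describe. The gap is that the diagrammatic bound itself is only asserted, not proved, and the form you conjecture is too optimistic.

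What you expect is essentially $D(x)\lesssim (\sum_{v\ne 0}\taub(rv))\cdot(\taub*\Tbr)(x)$, i.e.\ a wrap factor times a two-line convolution, with any switching debris absorbed into a multiplicative bubble constant. What the paper actually obtains (Theorem~\ref{thm:coupling_bounds} and Corollary~\ref{cor:lower_bound}) is
\[
D(x)\;\le\;2\bigl(\Tbr\star(\Tbr)^2[\Tbr-(\taub\circ\pi)]\star\Tbr\bigr)(x),
\]
a five-factor diagram with $(\Tbr)^2$ sitting \emph{inside} the convolution, not as an external bubble prefactor. The reason your simpler picture does not materialise is the step you flag as the main obstacle: cutting at the first wrap and switching does not cleanly separate the configuration into ``$0\to a$'' and ``$a\to x$'', because the sourceless cloud is entangled with the backbone. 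The paper resolves this with three ingredients you would still have to supply: (i) an exploration $\Ccal(0)$ of the cluster of $0$ in the coupled current $\m+\hat\n$ on $\Z^d$ that visits at most one representative per $r\Z^d$-class; (ii) a switching lemma (Lemma~\ref{lemma: switching torus 1}) that transfers sources between a $\Z^d$-current and a $\T_r$-current through this exploration; and (iii) a backbone decomposition after switching, which introduces the intermediate vertex $z$ and yields the two extra factors $\taub(z,y)\taub(z,y')$. Your description of how Proposition~\ref{prop:unfold} is proved (projection from $\T_{kr}$) is also not what the paper does; it uses the same exploration/switching machinery, and it is precisely the equality in that argument that, when read in the other direction, launches the lower-bound analysis.

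In short: right strategy, right final calculus, but the heart of the proof---deriving the diagrammatic inequality---is missing, and the diagram you anticipate is structurally simpler than the one that actually emerges.
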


\begin{proof}[Proof of Theorem~\textup{\ref{thm:plateau}}]
The upper bound follows immediately from Propositions~\ref{prop:unfold} and \ref{prop:psiub}.
Recall that $\beta_*=\beta_c-c_4r^{-d/2}$.
For the lower bound, let $\beta\in [\beta_c-c_3r^{-2},\beta_*]$
and $\norm x_\infty \ge c_5$.
Then by \eqref{eq:taupsi1}, and by Propositions~\ref{prop:psilb} and \ref{prop:psipsitil},
\begin{align}
\taub^\Tr(x)  \ge  \taub(x) + \( 1 - \half \) c_T \frac \chib {r^d} .
\end{align}
This gives the desired result with $c_2 = \half c_T$.
\end{proof}

\begin{proof}[Proof of Corollary~\textup{\ref{coro:plateau}}]
The upper bound follows from \eqref{eq:plateau-ub}, $\tau_\beta(x) \lesssim \nnnorm x^{-(d-2)}$ by \eqref{eq:DPub}, and the fact that $\chi(\beta_*) \asymp r^{d/2}$ from \eqref{eq:chibds}.

For the lower bound, when $\norm x_\infty \ge c_5$,
we first observe that the lower bound \eqref{eq:DPlb} applies
at $\beta=\beta_*$ provided that $r/2\leq c_0(\beta_c-\beta^*)^{-1/2}$, i.e. $r^{(d-4)/4} \ge (2c_0)^{-1} c_4^{1/2}$.
It then follows from
\eqref{eq:plateau-lb} and $\chi(\beta) \ge \chi(\beta_*) \gtrsim r^{d/2}$
that
\begin{equation}
    \tau_{\beta}^{\T_r}(x)
    \ge
    \tau_{\beta_*}(x) + c_2 \frac{ \chi(\beta_*) }{r^d}
    \gtrsim \nnnorm x^{-(d-2)} + r^{-d/2},
\end{equation}
as desired.
When $\norm x_\infty \le c_5$, we let $m_5=\min_{\norm x_\infty\le c_5}\tau_{\beta_c/2}^{\Lambda_{2c_5}}(0,x) >0$.
By the second Griffiths inequality, when $r$ is large we have
\begin{equation}
    \tau_{\beta_*}^{\T_r}(x)
    \ge
    \tau^{\Lambda_{2c_5}}_{\beta_*}(0,x)
    \ge
    m_5
    \ge
    \frac{m_5}{2} \frac{1}{\nnnorm x^{d-2}} +   \frac{m_5}{2}\frac{1}{r^{d/2}},
\end{equation}
which completes the proof.
\end{proof}

The proof of Theorem~\ref{thm:gr} uses the following proposition,
which we prove in Appendix~\ref{sec:u4proof}.
The torus Ursell function $U_{4,\beta}^\Tr$ is defined by
\begin{equation}
    U_{4,\beta}^\Tr(0,x,y,z)
    =
    \langle \sigma_0\sigma_x\sigma_y\sigma_z \rangle_\beta^\Tr
    -
    \langle \sigma_0\sigma_x\rangle_\beta^\Tr
    \langle \sigma_y\sigma_z \rangle_\beta^\Tr
    -
    \langle \sigma_0\sigma_y \rangle_\beta^\Tr
    \langle \sigma_x\sigma_z \rangle_\beta^\Tr
    -
    \langle \sigma_0\sigma_z \rangle_\beta^\Tr
    \langle \sigma_x\sigma_y \rangle_\beta^\Tr  .
\end{equation}
By the Lebowitz and Griffiths inequalities, $U_{4,\beta}^\Tr(0,x,y,z) \in [-2, 0]$.
The torus \emph{bubble diagram} is defined by
\begin{equation}
\bubble(\beta) = \sum_{x\in\T_r} \taub^\Tr(x)^2.
\end{equation}

\begin{proposition} \label{prop:u4}
Let $d\ge 1$.
For all $r\geq 3$
and for all $\beta>0$,
\begin{equation} \label{eq:u4claim}
\sum_{x,y,z\in\T_r}  |U_{4,\beta}^\Tr (0,x,y,z)|
\ge \frac{1}{128d^2} \cdot \min \Bigl\{
	\beta(\chi^\Tr)^2  \frac{\del \chi^\Tr}{\del \beta} ,
	\frac 1 \bubble \Big( \frac{\del \chi^\Tr}{\del \beta} \Big)^2
	\Big\} .
\end{equation}
\end{proposition}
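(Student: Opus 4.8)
The plan is to exploit the random current representation, which expresses $U_{4,\beta}^\Tr$ as a signed sum over configurations of two independent currents sourced at the four points, with the sign coming from a "switching lemma" obstruction: $|U_{4,\beta}^\Tr(0,x,y,z)|$ equals (up to the obvious combinatorial factors) the probability, under the two-current measure with sources $\{0,x\}$ and $\{y,z\}$ (and similar source pairings), that $0$ is connected to $y$ (say) in the union of the two current clusters. Summing over $x,y,z$ and recognising sums of the form $\sum_z \taub^\Tr(z)^2$ as bubble diagrams, and $\sum_x \taub^\Tr(x)$ as $\chi^\Tr$, should produce an expression controlling $\sum_{x,y,z}|U_{4,\beta}^\Tr|$ from below by the total weight of two-current configurations in which all four sources lie in a common cluster. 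The quantity $\partial\chi^\Tr/\partial\beta$ is, by the standard current computation, $\sum_{x,y}$ of a two-point quantity involving a current sourced at $\{x,y\}$ passing through a given edge, so it too has a random-current interpretation; the heart of the matter is to compare these.

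First I would recall/derive the random current identities: (i) $U_{4,\beta}^\Tr$ as a two-current quantity (this is the Aizenman--GHS switching machinery); (ii) $\partial\chi^\Tr/\partial\beta = \beta^{-1}\sum_{e}\langle\text{current through }e\rangle$-type formula, or more usefully the identity expressing $\partial\chi^\Tr/\partial\beta$ via $\sum_{x,y,z}\langle\sigma_0\sigma_x;\sigma_y\sigma_z\rangle$-type truncated correlations summed against an edge. The key algebraic step is a Cauchy--Schwarz / second-moment argument: one writes $\partial\chi^\Tr/\partial\beta$ as an expectation of some nonnegative random variable $N$ (a count of edges, or of pivotal-type structures) over a current ensemble whose total mass is $\chi^\Tr$ or $(\chi^\Tr)^2$, and one controls $\sum|U_4|$ from below by the second moment $\langle N^2\rangle$ or by $\langle N\rangle^2/(\text{bubble})$, the two cases in the $\min$ arising from whether one bounds the normalising denominator by $(\chi^\Tr)^2$ directly or by passing through $\bubble$. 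The factor $\beta$ and the factor $\frac{1}{128d^2}$ come from: $\beta$ appearing in the differentiation of the weight $e^{\beta\sigma_x\sigma_y}$ (each edge contributes a factor $\tanh\beta \le \beta$, or a derivative producing $\beta$), the $d^2$ from the $2d$ edges incident to a vertex being double-counted, and the $128 = 2\cdot 64$ from dyadic losses in splitting the "all four connected" event into the pairing-specific connection events and from Cauchy--Schwarz.

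Concretely, I would: (1) fix the edge-sum representation $\partial_\beta\chi^\Tr = \sum_{x}\sum_{uv\in E}(\text{two-current-type weight})$; (2) for each such term, use the switching lemma to rewrite it as a sum over current configurations with four sources where a specified connection holds, which is exactly a summand contributing to $\sum_{x,y,z}|U_4^\Tr|$ — but to capture the \emph{square} $(\partial_\beta\chi^\Tr)^2$ I need \emph{two} edges and hence a genuine second-moment bound, applying Cauchy--Schwarz in the form $\langle N\rangle^2 \le \langle N^2\rangle\cdot(\text{mass})$ where the mass is $\bubble$ in one regime; (3) in the complementary regime where the edge-count $N$ is typically small, bound $(\partial_\beta\chi^\Tr)^2 \le \beta\chi^\Tr\cdot\partial_\beta\chi^\Tr\cdot(\text{something} \le \chi^\Tr)$, giving the first term in the $\min$. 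Taking the better of the two bounds yields the $\min$.

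The main obstacle I anticipate is item (2)–(3): organising the two-current combinatorics so that \emph{two} independent edge-insertions (needed for the quadratic quantity $(\partial_\beta\chi^\Tr)^2$) get matched correctly against the \emph{four}-source two-current events defining $\sum|U_4^\Tr|$, without losing more than a constant factor, and in particular tracking which of the three Ursell pairings is being produced and ensuring the "backbone" connecting the inserted edges genuinely forces the four sources into one cluster. This is where the bubble diagram $\bubble$ must enter as the cost of the second-moment comparison, and where the case split generating the $\min\{\cdot,\cdot\}$ becomes unavoidable: when the current backbone is "long" the relevant normalisation is $(\chi^\Tr)^2$, when it is "short/recollision-heavy" it is $\bubble$. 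Getting the explicit constant $\frac{1}{128d^2}$ is then just careful bookkeeping of the $2d$-valent lattice, the $\tanh\beta\le\beta$ substitution, and two applications of Cauchy--Schwarz.
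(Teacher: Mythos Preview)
Your general framework is right --- random currents and the switching lemma are the tools --- but the key mechanism you propose (a direct Cauchy--Schwarz / second-moment inequality, with a case split on ``backbone length'') is not how the argument actually goes, and your own proposal flags exactly this step as the main obstacle without resolving it. As written, you have not specified what random variable $N$ you take the second moment of, how $\bubble$ emerges as the normalising mass, or how the two edge insertions force the four-source connection event; so the proposal is a heuristic, not a proof.

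The idea you are missing is the \emph{dilution trick} of Aizenman--Fern\'andez \cite{AF86}: one introduces an independent Bernoulli($p$) bond percolation $\omega$ on the edge set, uses $E=\omega$ as a random ``diluted'' set of active edges in an interpolated coupling $\langle\cdot\rangle_\theta$, and averages over $\omega$. This yields, for every $p\in[0,1]$, a lower bound of the form
\[
\sum_{x,y,z}|U_4(0,x,y,z)| \;\ge\; 2p\,\chi^2\beta\,\partial_\beta\chi \;-\; 32\,p^2\beta^2|J|^2\chi^4\,\bubble .
\]
The $\min\{\cdot,\cdot\}$ in the statement then arises purely algebraically by optimising $ap-bp^2$ over $p\in[0,1]$: if $b\le a/2$ take $p=1$, otherwise take $p=a/(2b)$. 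There is no case analysis on current geometry; the dichotomy lives entirely in this one-parameter optimisation.

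Concretely: the linear-in-$p$ term comes from a single application of the Fundamental Theorem of Calculus in $\theta$ (one edge insertion), bounded below via one switching (this gives $T_1$). The quadratic-in-$p$ error term $T_2$ comes from a further $\theta$-derivative, producing a triple truncated correlation $\langle\sigma_y\sigma_z;\sigma_u\sigma_v;\sigma_s\sigma_t\rangle_\theta$, which is bounded above using a tree-graph inequality from \cite{AF86}; the bubble $\bubble$ and the factor $|J|^2=(2d)^2$ appear after summing and using Young's inequality for convolutions. The constant $128d^2 = 32\cdot(2d)^2$ thus comes from the $T_2$ estimate combined with $|J|^2$, not from two Cauchy--Schwarz losses as you guessed.

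The paper itself notes that dilution ``is comparable to the second moment method \ldots\ but is more \emph{linear} in the sense that it allows for resummation'' --- which is precisely the difficulty your sketch runs into. A direct second-moment approach may be workable in principle, but the dilution route is cleaner, and the $ap-bp^2$ structure is exactly what the form of the claimed inequality is signalling.
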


\begin{proof}[Proof of Theorem~\textup{\ref{thm:gr}}]
All quantities in this proof are torus quantities, so
to simplify the notation we drop the label $\T_r$ everywhere.
By a direct computation using $S_r = r^{-d}\sum_{x\in\T_r}\sigma_x$,
\begin{equation}
    g(\beta) =
    -
    \frac{\langle S_r^4 \rangle_\beta - 3 (\langle S_r^2 \rangle_\beta)^2}
    {(\langle S_r^2 \rangle_\beta)^2}
    =
    \frac{1}{\chi(\beta)^2 r^{d}} \sum_{x,y,z}|U_{4,\beta} (0,x,y,z)|
    .
\end{equation}
We use Proposition~\ref{prop:u4} to bound from below the sum on the right-hand side.
A lower bound on the derivative of the susceptibility,
essentially due to \cite{AG83} (see \cite[(12.50)]{FFS92}),
states that
\begin{equation}
    \beta \frac{\partial\chi}{\partial\beta} \ge
    \frac{2d \chi^2}{1+(2d)^2\bubble}
    \left[
    1 - \frac{4d\bubble}{\chi} - \frac{\bubble}{\chi^2} \right]
    .
\end{equation}
It follows from the upper bound of Corollary~\ref{coro:plateau} that $\bubble(\beta_*) \lesssim 1$.
Combined with $\chi(\beta_*) \asymp r^{d/2}$ from Corollary~\ref{coro:chi}, we get
\begin{equation}
g(\beta_*)
\gtrsim \frac 1 {r^{d}} \chi(\beta_*)^2
\asymp 1 ,
\end{equation}
and the proof is complete.
\end{proof}

\section{Proof of Propositions~\ref{prop:unfold}--\ref{prop:psipsitil}}

We now prove Propositions~\ref{prop:unfold}--\ref{prop:psipsitil}.
The proofs of Propositions~\ref{prop:unfold} and \ref{prop:psipsitil}
depend on new diagrammatic estimates that we state as Theorem~\ref{thm:coupling_bounds}.
We begin with the proofs of Propositions~\ref{prop:psiub} and \ref{prop:psilb}, which are easier.

\subsection{Proof of Propositions~\ref{prop:psiub} and \ref{prop:psilb}}

These two propositions provide upper and lower bounds on
\begin{equation}
    \Tsum_\beta^{(r)}(x) = \sum_{u \in \Z^d} \taub(x+ru) ,
\end{equation}
respectively.
It is useful to observe that,
since $\norm x_\infty \le \frac r 2$ for all $x\in \Tr$,
for all $0 \ne u \in \Z^d$ we have
\begin{align} \label{eq:xuub}
\norm{ x+ru }_\infty
\le \norm{ ru }_\infty + \frac r 2
\le \frac 3 2 r \norm u_\infty ,
\\
\label{eq:xulb}
\norm{ x+ru }_\infty
\ge \norm{ ru }_\infty - \frac r 2
\ge \frac 1 2 r \norm u_\infty .
\end{align}
Proposition~\ref{prop:psiub} follows simply from
\eqref{eq:xulb} and
the upper bound \eqref{eq:DPub} on
$\taub$, and we defer its proof to Appendix~\ref{sec:convolution}
where it is included in the elementary Lemma~\ref{lem:GT3}.
We prove Proposition~\ref{prop:psilb} here, using
\eqref{eq:xuub} and the lower bound \eqref{eq:DPlb} on $\taub$.

\begin{proof}[Proof of Proposition~\textup{\ref{prop:psilb}}]
For the proof, we wish to apply the lower bound \eqref{eq:DPlb} to all $x +ru$ with
$x\in \Lambda_r$ and $\|u\|_\infty \le M$, with
a well-chosen
$M \ge 1$.
Since \eqref{eq:DPlb} holds when $\|x+ru\|_\infty \le c_0(\beta_c-\beta)^{-1/2}$,
it suffices to have
\begin{equation}\label{eq:proof1.11}
    M \ge  1, \qquad \frac r2 + rM \le c_0(\beta_c-\beta)^{-1/2}.
\end{equation}
We try $M=M(\beta,r) = \mu c_0 (\beta_c - \beta)^{-1/2} r\inv$.
The two inequalities of \eqref{eq:proof1.11} then become
\begin{equation}
    r^2(\beta_c-\beta) \le   \mu^2 c_0^2,
    \qquad r^2(\beta_c-\beta) \le 4(1-\mu)^2 c_0^2 .
\end{equation}
The choice $\mu=2/3$ makes the two right-hand sides equal, and
in that case both inequalities assert $r^2(\beta_c - \beta) \le \frac 4 9 c_0^2$.
Thus, if we choose $c_3=\frac 49 c_0^2$ and require $\beta \in [\beta_c-c_3r^{-2},\beta_c)$, then
by using \eqref{eq:DPlb} and \eqref{eq:xuub} in the definition of $\Tbr(x)$
we obtain
\begin{align}
\Tbr(x) - \taub(x)
\gtrsim \sum_{1 \le \norm u_\infty \le M} \frac 1 {\norm {x+ru}_\infty^{d-2}}
\ge \sum_{1 \le \norm u_\infty \le M} \frac 1 { ( \frac 3 2 r \norm {u}_\infty )^{d-2}}
\asymp \frac {M^2} {r^{d-2}}.
\end{align}
By the definition of $M$ and the fact that
$\chi(\beta) \asymp (\beta_c - \beta)\inv$ (by \eqref{eq:chibds}), the right-hand
side is bounded from below by a multiple of $r^{-d}\chi(\beta)$.  This completes the proof.
\end{proof}

\subsection{Diagrammatic estimates}

The proofs of Propositions~\ref{prop:unfold} and \ref{prop:psipsitil} rely on
the following theorem, whose proof is our main work.
The proof of Theorem~\ref{thm:coupling_bounds}
is based on the random current representation and a new coupling of the Ising model on the
torus and on $\Z^d$, and is deferred to Section~\ref{sec:currents}.
For the statement of Theorem~\ref{thm:coupling_bounds},
given $x,x'\in \Z^d$ we write $x \cong x'$ if $x-x'\in r\Z^d$, and
we write $\pi x$ for the unique point in $\Lambdar$ (which we identify with $\Tr$) such that $\pi x \cong x$.
The bound \eqref{eq:upper_bound} repeats the statement of
Proposition~\ref{prop:unfold}.

\begin{theorem} \label{thm:coupling_bounds}
Let $d\ge 1$, $r\ge 3$, and $\beta >0$.
For any $x\in \Tr$,
\begin{equation} \label{eq:upper_bound}
\tau_\beta^{\T_r}(x) \le \Tsum_\beta^{(r)}(x),
\end{equation}
\begin{equation} \label{eq:lower_bound_diagram}
\Tsum_\beta^{(r)}(x) - \taub^\Tr(x)
\le \sum_{ x' \cong x } \sum_{y} \sum_{ \substack{y' \cong y \\ y' \ne y} } \sum_{z } \taub(z) \taub(y-z) \taub(y'-z) \taub^\Tr(\pi y' -\pi z) \taub(x'-y) ,
\end{equation}
where $x',y,y',z$ range over $\Zd$ with the indicated restrictions.
\end{theorem}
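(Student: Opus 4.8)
The plan is to derive both inequalities of Theorem~\ref{thm:coupling_bounds} from the random current representation \cite{Aize82,GHS70} together with the covering map $\phi\colon\Zd\to\Tr$. Recall that a current $\n$ assigns $\n_e\in\{0,1,2,\dots\}$ to each edge, carries weight $w_\beta(\n)=\prod_e\beta^{\n_e}/\n_e!$ and source set $\sn=\{v:\sum_{e\ni v}\n_e\text{ is odd}\}$, and that $\taub^\Tr(x)=\big(\sum_{\sn=\{0,x\}}w_\beta(\n)\big)\big/\big(\sum_{\sn=\emptyset}w_\beta(\n)\big)$, with the analogous formula on $\Zd$. I would work throughout with the backbone decomposition: a current with $\sn=\{0,x\}$ splits into a (random) backbone path $\omega$ from $0$ to $x$ carrying odd multiplicity, together with a sourceless remainder. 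Under $\phi$, $\omega$ lifts uniquely to a path $\hat\omega$ on $\Zd$ starting at the marked preimage $0$ and ending at some $x'\cong x$; which $x'$ occurs records the winding of $\omega$ around $\Tr$. The "new coupling" should be a joint construction of a torus current with source $\{0,x\}$ and $\Zd$ currents with sources $\{0,x'\}$, obtained by lifting backbones and unfolding the sourceless remainder fibrewise.

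For the upper bound \eqref{eq:upper_bound} (equivalently Proposition~\ref{prop:unfold}), the plan is to show that this lifting/unfolding map is weight-nonincreasing: it sends a torus current with $\sn=\{0,x\}$ to a $\Zd$ current with source $\{0,x'\}$, the multiplicity of a $\Zd$ edge $e$ being the sum of torus multiplicities over the preimages of $\phi(e)$ that are used, and the multinomial inequality $\beta^{m_1+\cdots}/(m_1+\cdots)!\le\prod_i\beta^{m_i}/m_i!$ then only loses weight. Summing over the fibre $x'\cong x$ and controlling the ratio of sourceless partition functions on $\Tr$ versus $\Zd$ yields $\taub^\Tr(x)\le\sum_{x'\cong x}\taub(x')=\Tsum_\beta^{(r)}(x)$. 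The inequality rather than equality is precisely the cost of this weight loss together with the non-surjectivity of unfolding at wrapping configurations.

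The diagrammatic bound \eqref{eq:lower_bound_diagram} is the heart of the matter and where I expect nearly all the difficulty. The quantity $\Tsum_\beta^{(r)}(x)-\taub^\Tr(x)$ measures the weighted "defect" of the correspondence above, and the goal is to dominate it by the stated five-fold sum. I would try to exhibit, inside a $\Zd$ current with source $\{0,x'\}$ that is \emph{not} the lift of a torus current of source $\{0,x\}$ --- equivalently, whose folding to $\Tr$ detects a collision between two distinct preimages of a single torus vertex --- a vertex $z\in\Zd$ joined in $\Zd$ to $0$ along one portion of the configuration (factor $\taub(z)$) and to two distinct points $y,y'$ of a common fibre, $y'\cong y$ and $y'\ne y$, witnessing the collision (factors $\taub(y-z)$ and $\taub(y'-z)$), with the configuration continuing from $y$ to the endpoint $x'$ (factor $\taub(x'-y)$); the portion of the configuration not yet consumed, which still lives on $\Tr$ after folding, then contributes the single residual torus factor $\taub^\Tr(\pi y'-\pi z)$. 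Converting these coexisting connections into the product of two-point functions would be carried out by repeated use of the switching lemma (equivalently, a tree-graph / BK-type splitting for random currents), which decouples the five connections at the price of the inequality; the resulting diagram is a tree on the vertices $0,z,y,y',x'$ and their torus projections, matching the form of \eqref{eq:lower_bound_diagram}.

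The main obstacle is to make this coupling and the accompanying source bookkeeping precise: one must guarantee that the defect is "one unfolding short of complete", so that exactly one torus two-point function survives; that the fibre constraints $x'\cong x$ and $y'\cong y$, $y'\ne y$ are produced automatically by the collision; and that the successive openings of the current at $z$, at $y$ and $y'$, and at the endpoints via the switching lemma close without spawning additional diagrams. Once \eqref{eq:lower_bound_diagram} is in hand, the dimension-dependent estimation of the five-fold sum --- where $d>4$ and the near-critical bounds \eqref{eq:DPub}--\eqref{eq:DPlb} enter --- is a separate matter, handled in Proposition~\ref{prop:psipsitil}; the combinatorial content of Theorem~\ref{thm:coupling_bounds} itself is dimension-free.
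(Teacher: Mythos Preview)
Your high-level picture --- that the defect is governed by a collision between two representatives $y\not=y'$ of a single torus vertex, and that switching should peel off a tree diagram on $0,z,y,y',x'$ with one residual torus factor --- matches the paper. But the mechanism you propose to realise it does not work, and the paper's actual device is different in an essential way.

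For \eqref{eq:upper_bound}, your ``lift the backbone, unfold the sourceless remainder'' map is not well defined. A sourceless current on $\Tr$ is a union of loops, and a loop with nontrivial winding does not lift to a sourceless current on $\Zd$; there is no canonical unfolding, and the multinomial remark does not address how the sourceless torus partition function compares to the $\Zd$ one. The paper avoids this entirely: it introduces an \emph{auxiliary sourceless current $\m$ on a large box $\Lambda_R$} and writes $\taub^\Tr(x)=Z^{\varnothing,0x}_{\Lambda_R,\Tr}/Z^{\varnothing,\varnothing}_{\Lambda_R,\Tr}$. It then runs a percolation exploration of the cluster of $0$ in $\m+\hat\n$ (where $\hat\n$ is the periodic lift of the torus current $\n$) designed so that the explored set $\mathcal C(0)$ contains \emph{at most one representative of each torus vertex and edge}. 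Because $\pi$ restricted to $\mathcal C(0)$ is injective, one can switch sources simultaneously on $\Lambda_R$ and on $\Tr$ (a new switching lemma between currents on different graphs). Applied at the unique $x'\cong x$ in $\mathcal C(0)$, this gives the \emph{identity}
\[
\taub^\Tr(x)=\sum_{x'\cong x}\frac{Z^{0x',\varnothing}_{\Lambda_R,\Tr}[x'\in\mathcal C(0)]}{Z^{\varnothing,\varnothing}_{\Lambda_R,\Tr}},
\]
from which \eqref{eq:upper_bound} follows by dropping the indicator.

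For \eqref{eq:lower_bound_diagram}, the identity above is again the starting point: the defect $\Tsum_\beta^{(r,R)}(x)-\taub^\Tr(x)$ is \emph{exactly} $\sum_{x'\cong x}Z^{0x',\varnothing}_{\Lambda_R,\Tr}[x'\notin\mathcal C(0)]/Z^{\varnothing,\varnothing}_{\Lambda_R,\Tr}$, not a vague ``weight of non-lifts''. On $\{x'\notin\mathcal C(0)\}$ the $\m$-path from $0$ to $x'$ must exit $\mathcal C(0)$ through a grey vertex $y$, whose other representative $y'\in\mathcal C(0)$ witnesses the fibre collision. The tail $y\to x'$ is detached by introducing a \emph{third} current and switching on $(\mathfrak B\cup\mathfrak W)^c$; the torus switching lemma at $y'$ moves the sources to $\{y,y'\}$ on $\Lambda_R$ and $\{0,\pi y'\}$ on $\Tr$; the backbone of $\m$ from $y'$ to $y$ is then explored to locate $z\in\mathcal C_{\m\setminus\overline\Gamma_\m+\hat\n}(0)$, and a second application of the torus switching lemma at $z$ produces the remaining two $\Zd$ legs and the single torus factor. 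None of these steps is available without the exploration $\mathcal C(0)$, because ordinary switching cannot exchange sources living on $\Tr$ with sources living on $\Zd$.

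In short: the missing idea is the pair (exploration $\mathcal C(0)$ with unique representatives) $+$ (mixed $\Lambda_R$/$\Tr$ switching lemma), which together turn the upper bound into an identity and make the defect a concrete event that can be opened by switching.
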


The following corollary, which gives an upper bound that simplifies the presentation of \eqref{eq:lower_bound_diagram}, will be important.
We define the torus $\star$ convolution
for functions $f,g : \T_r \to \R$ (periodic functions on $\Z^d$) by
\begin{equation}
\label{eq:starconv}
    (f\star g)(x) = \sum_{y\in \T_r} f(y)g(x-y)
    \qquad (x \in \T_r).
\end{equation}
We also use the $\Zd$ convolution $(f*g)(x)=\sum_{y\in \Z^d} f(y)g(x-y)$ for
absolutely summable
functions on $\Z^d$, so it is important to mark the difference between $\star$
and $*$.

\begin{Coro} \label{cor:lower_bound}
Let $d\ge 1$, $r\ge 3$, and $\beta >0$.
For any $x\in \Tr$,
\begin{equation} \label{eq:lower_bound_T}
\Tsum_\beta^{(r)}(x) - \taub^\Tr(x)
\le
2\Big(\Tbr \star \big[\big(\Tbr\big)^2 [\Tbr -(\taub \circ \pi)]\big] \star \Tbr \Big) ( x) .
\end{equation}
\end{Coro}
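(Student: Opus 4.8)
The plan is to deduce Corollary~\ref{cor:lower_bound} from the diagrammatic bound \eqref{eq:lower_bound_diagram} of Theorem~\ref{thm:coupling_bounds} by a straightforward ``unfolding'' of the $\Z^d$ sums into torus sums. Starting from
\[
\Tsum_\beta^{(r)}(x) - \taub^\Tr(x)
\le \sum_{ x' \cong x } \sum_{y\in\Z^d} \sum_{ \substack{y' \cong y \\ y' \ne y} } \sum_{z\in\Z^d } \taub(z) \taub(y-z) \taub(y'-z) \taub^\Tr(\pi y' -\pi z) \taub(x'-y) ,
\]
I would reorganise the right-hand side by grouping each free $\Z^d$ variable according to its image in $\Lambda_r$ under $\pi$. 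Concretely, I would write $z = \pi z + r u_z$, $y = \pi y + ru_y$, and similarly keep track of $x'$ and $y'$ via their congruence classes. The key identity is that for a fixed torus point $a\in\T_r$, summing any translate-type quantity over the fibre $\{w\in\Z^d : \pi w = a\}$ reproduces exactly the $\Tsum_\beta^{(r)}$ function: $\sum_{w\,:\,\pi w = a}\taub(b - w) = \Tsum_\beta^{(r)}(b-a)$ when $b$ is also regarded on the torus, since $\Tsum_\beta^{(r)}$ is by definition the periodisation of $\taub$ over the sublattice $r\Z^d$.

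Carrying this out step by step: first sum over $x'\cong x$, which turns $\taub(x'-y)$ (with $y$ now viewed through $\pi y$) into $\Tbr(x - \pi y)$; then the inner structure $\taub(y-z)\taub(y'-z)$ together with the sums over the fibres of $y$ and $z$ produces $\Tbr$ factors as well, while $\taub(z)$ summed over its fibre gives $\Tbr(\pi z)$ and $\taub^\Tr(\pi y' - \pi z)$ is already a torus quantity. The term to watch is the constraint $y'\ne y$ coupled with $y'\cong y$: after fixing $\pi y = \pi y'$, this says $y'$ ranges over the fibre of $y$ minus the single point $y$ itself, so $\sum_{y'\cong y,\,y'\ne y}\taub(y'-z) = \Tbr(\pi y - \pi z) - \taub(y - z)$. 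When one then also sums over the fibre of $y$, the subtracted piece $\taub(y-z)$ contributes the $(\taub\circ\pi)$ term, which is why $\Tbr - (\taub\circ\pi)$ appears as a factor in \eqref{eq:lower_bound_T} rather than simply $\Tbr$. Collecting the five resulting factors in the order (source)--(two $\Tbr$ legs squared)--$[\Tbr - (\taub\circ\pi)]$--(final $\Tbr$ leg) and writing the nested fibre-sums as torus $\star$-convolutions gives the claimed expression $\big(\Tbr \star (\Tbr)^2[\Tbr - (\taub\circ\pi)] \star \Tbr\big)(x)$, up to bookkeeping of which convolution brackets the middle product.

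The factor of $2$ should come from a mild overcounting or symmetrisation: the diagram \eqref{eq:lower_bound_diagram} is not manifestly symmetric under swapping $y \leftrightarrow y'$ (only $y'$ carries the congruence/inequality constraint), so after unfolding one obtains a sum that can be bounded by twice a symmetric convolution, or else the $2$ absorbs the difference between requiring $y' \ne y$ exactly versus dropping that restriction at one harmless point; I would choose whichever accounting is cleanest, most likely bounding $\Tbr(\pi y - \pi z) - \taub(y-z) \le \Tbr(\pi y - \pi z)$ is \emph{not} what is wanted here (that loses the needed $\star$-structure), so instead the $2$ comes from symmetrising the two $\Tbr$ legs attached at $z$. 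The main obstacle, and the only genuinely delicate point, is the bookkeeping of the $\pi$-projections and fibre decompositions: one must be careful that every $\Z^d$-variable appearing inside a $\taub$ is correctly split into its $\Lambda_r$-representative plus an $r\Z^d$-shift, that the shifts are summed in the right groupings to assemble $\Tbr$ factors (and not, say, a convolution of $\taub$ with $\Tbr$), and that the torus two-point function $\taub^\Tr$, which depends only on the difference $\pi y' - \pi z$, is left untouched by these manipulations. Once the indices are organised correctly the passage to $\star$-convolutions is purely formal.
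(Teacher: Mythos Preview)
Your unfolding strategy via fibres is exactly what the paper does, but two concrete steps are missing or misidentified.

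First, the torus factor $\taub^\Tr(\pi y' - \pi z)$ cannot simply be ``left untouched'': the target \eqref{eq:lower_bound_T} contains only $\Tbr$ factors, no $\taub^\Tr$. The paper disposes of it at the outset via $\taub^\Tr(\pi y' - \pi z) \le \Tbr(y-z)$, which is precisely \eqref{eq:upper_bound} combined with the periodicity of $\Tbr$.

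Second, the fibre sums do not directly produce the five factors you list, and the factor $2$ has a different origin. After summing over $y'$ you correctly obtain $[\Tbr(y-z) - \taub(y-z)]$, but this sits \emph{multiplied} by $\taub(y-z)$ (and by the $\Tbr(y-z)$ coming from the first step). Summing that product over the fibre of $y$, with $a$ denoting a fixed representative of the torus difference $\pi y - \pi z$, gives
\[
\sum_{w\in\Z^d} \taub(a + rw)\,\big[\Tbr(a) - \taub(a+rw)\big]
= \Tbr(a)^2 - \sum_{w\in\Z^d} \taub(a+rw)^2,
\]
which does \emph{not} factor as a product of torus quantities. The paper keeps only one term in the subtracted sum to get the upper bound $\Tbr(a)^2 - (\taub\circ\pi)(a)^2$, and then applies the elementary difference-of-squares inequality $\Tbr^2 - (\taub\circ\pi)^2 \le 2\Tbr\,[\Tbr - (\taub\circ\pi)]$ (valid since $\Tbr \ge \taub\circ\pi \ge 0$). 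Together with the extra $\Tbr(y-z)$ factor from the first step this yields the middle block $2(\Tbr)^2[\Tbr - (\taub\circ\pi)]$. The factor $2$ is therefore a genuine ingredient of the argument, arising from this difference-of-squares bound rather than from any symmetrisation.
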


\begin{proof}
We use \eqref{eq:upper_bound} and the periodicity of $\Tbr$ to get
$\taub^\Tr(\pi y' -\pi z) \le \Tbr (y - z)$.
We also rewrite the sums over $x',y'$ using the definition of $\Tbr$.
These allow us to bound the right-hand side of \eqref{eq:lower_bound_diagram} by
\begin{equation} \label{eq:lower_bound_pf10}
\sum_{y} \sum_{z } \taub(z) \taub(y-z) \big( \Tbr - \taub \big)(y-z) \Tbr(y - z) \Tbr(x-y) .
\end{equation}
Since every point in $\Zd$ can be written uniquely
as a point in $\Lambda_r + r\Zd$, we can write
\begin{equation}
x = \check x, \quad
y = \check y + ru, \quad
z = \check z + rv \qquad
(\check x, \check y, \check z \in \Lambda_r, \
u,v\in \Zd),
\end{equation}
so the two sums over $y,z$ become four sums over $\check y, \check z, u,v$.
Since $\Tbr$ is periodic, we have $\Tbr(y - z) = \Tbr( \check y - \check z)$ and $\Tbr(x-y) = \Tbr(\check x - \check y)$, independent of $u$.
Hence, the sum over $u$ gives the upper bound
\begin{equation}
\sum_u \big( \taub\Tbr - \taub^2 \big) \big( \check y - \check z + r(u-v) \big)
\le \Tbr( \pi( \check y - \check z) )^2 - \taub( \pi( \check y- \check z) )^2 ,
\end{equation}
where in the subtracted term we kept only one term in the sum as an upper bound.
We further bound the above difference of squares by
$2\Tbr [\Tbr  - (\taub\circ \pi)]$.
Finally, we complete the proof with the observation
that  $\sum_v \taub(\check z + rv) = \Tbr(\check z)$.
\end{proof}

\begin{remark}
It is interesting to compare the diagrammatic upper bound
\eqref{eq:lower_bound_T} with diagrams in related models.
The counterpart of $\Tbr - \taub^\Tr$ is denoted by $\psi - \psi^\T$ in \cite{Slad23_wsaw, HMS23, Liu24}.
 For weakly or strictly self-avoiding
walk, the bounds \cite[(6.36)]{Slad23_wsaw} and \cite[(4.20)]{Liu24}  can be written as
\begin{equation}\label{eq:diagram saw}
T \star [ \delta \cdot (T - \tau \circ \pi) ] \star T ,
\end{equation}
with $\delta$ the Kronecker delta.
For percolation, the bounds of \cite[(4.35), (4.42)]{HMS23} can be combined into
the upper bound
\begin{equation}\label{eq:diagram perco}
T \star [ T \cdot (T - \tau \circ \pi) ] \star T .
\end{equation}
For Ising, we found the bound
\begin{equation}\label{eq:diagram ising}
T \star [ T^2 \cdot (T - \tau \circ \pi) ] \star T .
\end{equation}
The $T^2$
in the Ising bound, compared to the $T$ in percolation,
is responsible for lowering the critical dimension from $6$ to $4$.
The three upper bounds are depicted in Figure~\ref{figure:diagrams other models}.
\end{remark}

\begin{figure}\begin{center}
\includegraphics{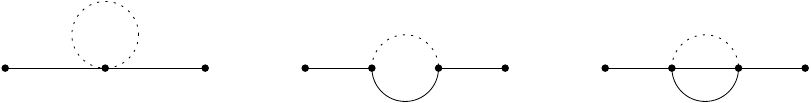}
\caption{From left to right, diagrammatic representations of the upper bounds
for self-avoiding walk in \eqref{eq:diagram saw}, for percolation in
\eqref{eq:diagram perco}, and for the Ising model in \eqref{eq:diagram ising}.
The dotted line corresponds to $T-\tau\circ \pi$, which produces a factor
$\chi/r^d$ for each model.}
\label{figure:diagrams other models}	
\end{center}
\end{figure}

\begin{proof}[Proof of Proposition~\textup{\ref{prop:psipsitil}}]
It follows from Corollary~\ref{cor:lower_bound} and Proposition~\ref{prop:psiub} that
\begin{equation} \label{eq:lower_bound_T2}
\Tsum_\beta^{(r)}(x) - \taub^\Tr(x)
\lesssim
\frac{\chi(\beta)}{r^d}
\Big(\Tbr \star \big(\Tbr\big)^2  \star \Tbr \Big) ( x)
	\qquad (x \in \T_r).
\end{equation}
By Lemma~\ref{lem:GT3},
there is a constant $C>0$ such that
\begin{align}
\label{eq:GG2Gstar-bis}
(\Tbr \star \big(\Tbr\big)^2 \star \Tbr) (x)
&\le (\taub * \taub^2 * \taub) (x) + C \frac{ \chib^2 }{r^d}
	+ C \bigg( \frac{ \chib^2 }{r^d} \bigg)^2 .
\end{align}
It follows from $\taub(x) \le \tau_{\beta_c}(x) \lesssim \nnnorm x^{-(d-2)}$
and a standard
convolution estimate \cite[Proposition~1.7]{HHS03} that
$(\taub * \taub^2 * \taub)(x) \lesssim \nnnorm x^{-(d-4)}$ when $d >4$,
so this term can be made as small as desired
by taking $\norm x_\infty \ge c_5$ with $c_5$ large enough, uniformly in $\beta \le \beta_c$.
Also, for $\beta\le \beta_c - c_4r^{-d/2}$, by \eqref{eq:chibds} we have
$r^{-d/2}\chi(\beta) \lesssim c_4^{-1}$, and we can make this as small as
desired by taking $c_4$ large enough.
With \eqref{eq:GG2Gstar-bis}, this shows that the factor multiplying
$\chib/r^d$ in \eqref{eq:lower_bound_T2} can be made as small as desired by
taking $\norm x_\infty \ge c_5$ and $c_4, c_5$ large.
In particular, it can be made smaller than $\frac 1 2 c_T$,
which completes the proof.
\end{proof}

\section{Coupling and switching: Proof of Theorem~\ref{thm:coupling_bounds}}
\label{sec:currents}

In this section, we introduce a new coupling between Ising models on $\Tr$ and
on large finite subsets of $\Z^d$.
The coupling uses the random current representation (see \cite{Dumi20} or \cite[Chapter~2]{Pani24_thesis} for an introduction) and a percolation exploration process.
To derive useful bounds, we develop a new switching lemma between currents on $\T_r$ and on large
finite subsets of $\Z^d$, in Lemma~\ref{lemma: switching torus 1}.

\subsection{Random currents and switching lemma}

\begin{definition}
Let $\mathcal G=(V,E)$ be a finite graph.
A \emph{current} $\n$ on $\mathcal G$ is a function from $E$ to $\mathbb N_0=\lbrace 0,1,\ldots\rbrace$.
We write $\Omega_{\mathcal G}$ for the set of all currents on $\mathcal G$. The set of \emph{sources} of $\n$, denoted by $\sn$, is defined as
\begin{equation}
\sn = \Big\{ x \in V \mid \sum_{y:\: xy\in E}\n_{xy}\textup{ is odd}\Big\}.
\end{equation}
\end{definition}

It is often useful to interpret a current $\n$ as a multigraph on $V$ with $\n_{xy}$ distinct edges connecting $x$ and $y$ for all $xy\in E$.
A sourceless current then can be seen as the edge count of a multigraph
consisting of a union of loops. Adding sources to $\n$ amounts
to adding a collection of paths connecting the elements of $\sn$ pairwise.
In particular, a current $\n$ with sources $\sn=\{u,v\}$ can be seen as the edge count of a multigraph
consisting of a union of loops together with a path from $u$ to $v$.

For $A\subset V$, we write $\sigma_A=\prod_{x\in A}\sigma_x$.
It is well-known \cite{Aize82,Dumi20} that
the correlation functions of the Ising model (defined in \eqref{eq:def_Ising}) are related to currents via
\begin{equation}
\label{equation correlation rcr}
    \left\langle \sigma_A\right\rangle_{\beta}^{\mathcal G}
    =\dfrac{\sum_{\sn=A}w_\beta(\n)}{\sum_{\sn=\varnothing}w_\beta(\n)}
    =
    \frac{Z^{A}_{\mathcal G,\beta} }{Z^{\varnothing}_{\mathcal G,\beta} }
    ,
\end{equation}
where
\begin{equation}
\label{eq:wZ-def}
w_\beta(\n)=
\prod_{e \in E}\dfrac{\beta^{\n_{e}}}{\n_{e}!}
,
\qquad
Z^{A}_{\mathcal G,\beta} = \sum_{\sn=A}w_\beta(\n).
\end{equation}
For a set of currents $\mathcal E \subset \Omega_{\mathcal G}$, we also define
\begin{equation}
\label{eq:ZAG}
Z^{A}_{\mathcal G,\beta}[\mathcal E]= \sum_{\sn=A}w_\beta(\n)\1\{\n\in \mathcal E\}.
\end{equation}
Observe that, by definition, $Z_{\mathcal G,\beta}^A[\Omega_{\mathcal G}]=Z^A_{\mathcal G,\beta}$.

A current $\n$ induces a bond percolation configuration on $\mathcal G$ defined by $(\1_{\n_e>0})_{e\in E}$.
The connectivity properties of this percolation configuration are known
to play a crucial role in the analysis of the underlying Ising model.
This motivates the following notation.
For a current $\n$ on $\mathcal G$ and $x,y\in V$, we write $x\xleftrightarrow{\n\:}y$ if there exists a sequence of points $x_0=x,x_1,\ldots, x_k=y$ such that $x_ix_{i+1}\in E$ and $\n_{x_ix_{i+1}}>0$ for all $0\leq i < k$.

Given  two graphs $\mathcal G,\mathcal H$,
two vertex sets $A\subset \mathcal G$, $B\subset \mathcal H$,
and $\mathcal E\subset \Omega_{\mathcal G}\times \Omega_{\mathcal H}$, we write
\begin{equation}
\label{eq:ZABGH}
Z^{A,B}_{\mathcal G,\mathcal H,\beta}[\mathcal E]= \sum_{\substack{\sn_1=A\\\sn_2=B}}w_\beta(\n_1)w_\beta(\n_2)\1\{(\n_1,\n_2)\in \mathcal E\}.
\end{equation}
We often drop the subscript $\beta$ when it does not play a role.
For two sets $S_1,S_2$,
we write $S_1 \Delta S_2 = (S_1 \setminus S_2 ) \cup ( S_2 \setminus S_1)$ for their symmetric difference.
For two currents $\n$ and $\m$, we write $\n \le \m$ when $\n_{e}\le \m_{e}$ for each edge $e$,
and we write $\binom{\m}{\n} = \prod_{e\in E}\binom{\m_e}{\n_e}$ for the product of the binomial coefficients.
We will use the following minor extension of the switching lemma
of \cite{GHS70,Aize82} (see also \cite[Lemma~7]{Dumi20}).

\begin{lemma}[Switching lemma]
\label{lem:std_switch}
Let $\Gcal=(V,E)$ and $\mathcal H=(V',E')$ be finite graphs with $\mathcal H\subset \mathcal G$.
For a current $\n$ on $\mathcal G$, let $\n_{|\mathcal H}$ denote its restriction to edges of $\mathcal H$.
Then, for any $A, B\subset V$, $x,y\in V'$, and
$\Ecal \subset \Omega_\Gcal$,
\begin{align}
\label{eq:std_switch}
Z^{A,B}_{\mathcal G,\mathcal G} \Big[
	\n_1 + \n_2 \in \Ecal  ,\:
	 x \connect{(\n_1 + \n_2)_{|\mathcal H}\:} y
\Big]
&= Z^{A\Delta \{x,y\},B\Delta \{x,y\}}_{\mathcal G,\mathcal G} \Big[
	\n_1 + \n_2 \in \Ecal  ,\:
	 x \connect{(\n_1 + \n_2)_{|\mathcal H}\:} y
\Big] .
\end{align}
\end{lemma}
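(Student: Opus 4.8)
The plan is to reduce the generalised switching identity \eqref{eq:std_switch} to the classical switching lemma for a single graph by applying the latter on the disjoint-union graph $\Gcal \sqcup \Hcal$. First I would recall the setup of the classical switching lemma (as in \cite[Lemma~7]{Dumi20} or \cite{GHS70,Aize82}): for a single finite graph $\Kcal$, two currents $\m_1, \m_2$ with sources $A_1, A_2$, and any event $\Fcal$ on $\m_1+\m_2$ together with a connectivity constraint $x \xleftrightarrow{\m_1+\m_2} y$, one has
\begin{equation}
Z^{A_1, A_2}_{\Kcal,\Kcal}\big[ \m_1+\m_2 \in \Fcal,\ x \xleftrightarrow{(\m_1+\m_2)} y \big]
= Z^{A_1 \Delta \{x,y\}, A_2 \Delta \{x,y\}}_{\Kcal,\Kcal}\big[ \m_1+\m_2 \in \Fcal,\ x \xleftrightarrow{(\m_1+\m_2)} y \big].
\end{equation}
The key combinatorial input behind this is that the map which ``flips'' the edges of a path from $x$ to $y$ in $\m_1+\m_2$ between $\m_1$ and $\m_2$ is a weight-preserving bijection that toggles the sources at $x$ and $y$ in each current, and that the sum $\m_1+\m_2$ and hence the events $\Fcal$ and the connectivity are unchanged.

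Next I would set up the embedding. Since $\Hcal \subset \Gcal$, write $\Gcal = (V,E)$, $\Hcal = (V',E')$. A current $\n_2$ on $\Hcal$ can be viewed as a current on $\Gcal$ supported on $E'$; let $\hat\n_2$ denote this extension by zero. Form the graph $\Gcal' = \Gcal \sqcup \Hcal$, the disjoint union, and consider the pair $(\n_1, \hat\n_2)$ — but to make the path-flipping argument work directly on the torus/$\Z^d$ restriction $(\n_1+\n_2)_{|\Hcal}$, I would instead argue more carefully: the connectivity event $x \xleftrightarrow{(\n_1+\n_2)_{|\Hcal}} y$ depends only on the restriction of $\n_1+\n_2$ to $E'$. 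The idea is to split $\n_1 = \n_1^{\Hcal} + \n_1^{\text{out}}$ where $\n_1^{\Hcal}$ is the restriction of $\n_1$ to $E'$ and $\n_1^{\text{out}}$ is the restriction to $E \setminus E'$. The current $\n_1^{\text{out}}$ is untouched by the switching operation; it merely contributes a fixed weight factor $w_\beta(\n_1^{\text{out}})$ and fixes a source set on $V \setminus V'$ (well, on the vertices incident to $E \setminus E'$). Given $\n_1^{\text{out}}$, the remaining sources of $\n_1$ that lie in $V'$ are determined, and the pair $(\n_1^{\Hcal}, \n_2)$ is a pair of currents on the single graph $\Hcal$ with prescribed sources. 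Apply the classical switching lemma on $\Hcal$ to this pair, with the event $\Fcal$ being ``$\n_1^{\text{out}} + \n_1^{\Hcal} + \n_2 \in \Ecal$'' (an event on the full sum $\n_1+\n_2$, which makes sense once $\n_1^{\text{out}}$ is frozen) together with the connectivity $x \xleftrightarrow{(\n_1^{\Hcal}+\n_2)} y$. This yields the identity with sources on the $V'$-part toggled by $\{x,y\}$, and summing back over $\n_1^{\text{out}}$ restores the full statement.

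The main obstacle, and the step to treat carefully, is bookkeeping the source sets under the decomposition $\n_1 = \n_1^{\text{out}} + \n_1^{\Hcal}$: the sources $\partial\n_1$ (a subset of $V$) do not simply split as a disjoint union of $\partial\n_1^{\text{out}}$ and $\partial\n_1^{\Hcal}$, because a vertex in $V'$ may receive contributions from edges in both $E\setminus E'$ and $E'$. The correct statement is that for $v \in V'$, $v \in \partial\n_1^{\Hcal}$ iff ($v \in \partial\n_1$) XOR ($v \in \partial\n_1^{\text{out}}$); for $v \in V \setminus V'$, $\partial\n_1^{\text{out}}$ is just $\partial\n_1$ restricted there. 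So when I fix $\n_1^{\text{out}}$ and let $A' = (A \Delta \partial\n_1^{\text{out}}) \cap V'$ denote the induced source constraint on $\n_1^{\Hcal}$, applying the switching lemma on $\Hcal$ sends $A' \mapsto A' \Delta \{x,y\}$ and $B \mapsto B \Delta \{x,y\}$; undoing the decomposition, $A' \Delta \{x,y\}$ corresponds exactly to $\n_1$ having source set $A \Delta \{x,y\}$ (since $x,y \in V'$, toggling them commutes with the XOR relating $\partial\n_1$ and $\partial\n_1^{\Hcal}$), which is what the right-hand side of \eqref{eq:std_switch} asserts. Once this source-bookkeeping is written out, the proof is a direct reduction; no new probabilistic idea beyond the classical switching lemma is needed, which is why I would present it briefly. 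I would note that the event $\Ecal$ being on $\Gcal$-currents rather than $\Hcal$-currents causes no difficulty, since after freezing $\n_1^{\text{out}}$ the residual event on $(\n_1^{\Hcal}+\n_2)$ is still an arbitrary event, and the classical lemma allows arbitrary events on the current sum.
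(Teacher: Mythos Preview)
Your reduction is correct and takes a genuinely different route from the paper's proof. The paper does not invoke the classical single-graph switching lemma as a black box; instead it re-runs the standard multigraph argument directly in the two-graph setting: after extending $\n_2$ to $\Gcal$ by zero, it changes variables to $\m = \n_1 + \n_2$ and $\n = \n_2$, rewrites the weighted sum as $\sum_{\partial\m = A\Delta B} w(\m)\,\1_\Ecal\,\1\{x\connect{\m_{|\Hcal}}y\} \sum_{\n \le \m,\,\n\in\Omega_\Hcal,\, \partial\n = B} \binom{\m}{\n}$, and on the connectivity event picks a path $\Kcal$ from $x$ to $y$ in the multigraph $\Mcal$ of $\m$ using only $E'$-edges, then applies the involution $\Ncal \mapsto \Ncal \Delta \Kcal$ on sub-multigraphs. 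Your approach instead freezes $\n_1^{\text{out}} = \n_1|_{E\setminus E'}$ and applies the classical lemma on $\Hcal$ to the pair $(\n_1^{\Hcal}, \n_2)$; the source bookkeeping you spell out (that $\partial\n_1^{\Hcal} = (A \Delta \partial\n_1^{\text{out}})\cap V'$, so that toggling $\{x,y\}\subset V'$ commutes with the decomposition) is exactly what is needed and is correct. Your route is tidier if one is willing to take the classical lemma as given, at the cost of the splitting and the parity bookkeeping; the paper's is self-contained in a few lines and avoids the decomposition altogether. You were right to abandon the disjoint-union idea $\Gcal\sqcup\Hcal$: on a genuine disjoint union there is no shared connectivity to switch along, and the argument only works because $\Hcal$ sits inside $\Gcal$ so that $\n_1$ and $\n_2$ share the edges of $E'$.
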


It is an immediate consequence of \eqref{eq:std_switch} that
\begin{equation}
\label{eq:switchbd}
Z^{A,B}_{\mathcal G,\mathcal G} \Big[
	 x \connect{(\n_1 + \n_2)_{|\mathcal H}\:} y
\Big]
\le
 Z^{A\Delta \{x,y\}}_{\mathcal G}
 Z^{B\Delta \{x,y\}}_{\mathcal G}  .
\end{equation}

\begin{proof}
By definition, $w(\n_1)w(\n_2) = w(\n_1 + \n_2) \binom{\n_1+\n_2}{\n_1}$.
By the change of variables   $\m = \n_1 + \n_2$ and $\n = \n_2$, the left-hand side of \eqref{eq:std_switch} can be written as
\begin{equation}
\sum_{ \del \m = A \Delta B }  w(\m) \1\{\m\in \mathcal E\}
\1\{ x \xleftrightarrow{\m_{|\mathcal H}\: } y \}
\sum_{\substack{ \n\in\Omega_{\mathcal G}, \:\n \le \m \\ \del \n = B }} \binom{\m}{\n}.
\end{equation}
We view $\m, \n$ as multigraphs $\Mcal, \Ncal$,
and write $\del \Ncal \subset V$ for the vertices with an odd number of incident edges in $\Ncal$.
From this viewpoint, the sum over $\n$ above exactly counts the number of subgraphs $\Ncal$ of $\Mcal$ with $\del \Ncal = B$.

On the event $x \connect{\m_{|\mathcal H}\:} y$, there is a path in $\Mcal$ from $x$ to $y$ that uses only edges in $E'$. We denote such a path by $\Kcal$.
The involution $\Ncal \mapsto \Ncal \Delta \Kcal$ then defines a bijection between subgraphs with source set $B$ and source set $B\Delta \{x,y\}$. Thus,
\begin{equation}
\1\{ x \xleftrightarrow{\m_{|\mathcal H}\: } y \}\sum_{\substack{ \n\in\Omega_{\mathcal G}, \:\n \le \m \\ \del \n = B }} \binom{\m}{\n}
= \1\{ x \xleftrightarrow{\m_{|\mathcal H}\: } y \}\sum_{\substack{ \n\in\Omega_{\mathcal G}, \:\n \le \m \\ \del \n = B\Delta \{x,y\}}} \binom{\m}{\n}.
\end{equation}
Undoing the change of variables produces the right-hand side of \eqref{eq:std_switch}.
\end{proof}

\subsection{A percolation exploration}
\label{subsection: percolation exploration}

We now consider bond percolation and introduce an exploration of the cluster of the origin that is particularly well-suited for the periodic setup to which we apply it. We follow the strategy initially introduced in \cite{BS96} and revisited in \cite{Papa06,HHII11}, with some modifications.
We fix $r\geq 3$,
and let $\mathcal G=(V,E)$ be a finite or infinite subgraph of $\mathbb Z^d$.
Let $\omega=(\omega_e)_{e\in E}$ be a bond
percolation configuration on $\mathcal G$. We
will define an exploration $\mathcal C(0)$ of the cluster of $0$ in $\omega$ that
contains at most one
representative for each vertex and edge of $\Tr$, under the equivalence that
$x \cong x'$ if $x-x'\in r\Z^d$ (for $x,x'\in \mathcal G$).
A simple example of the exploration is depicted in Figure~\ref{figure:percoalgo}.

\begin{figure}
\begin{center}
\includegraphics{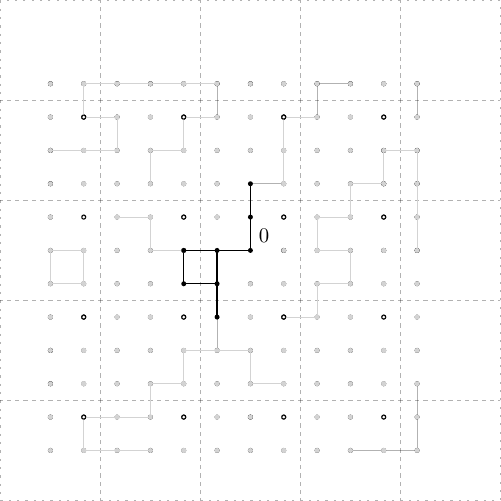}		
\caption{An illustration of the exploration algorithm with $r=3$ and $\mathcal G = \Z^2$. Dotted boxes represent copies of $\Lambda_3$.
The bond percolation configuration is shown in black and grey; the black edges are explored.
Vertices not explored by the algorithm are unfilled.
The graph $\mathcal C(0)$ consists of the black vertices and edges.
}
\label{figure:percoalgo}
\end{center}
\end{figure}

We use the following colouring scheme for vertices and edges of $\mathcal G$:
\begin{itemize}
	\item Vertices have three possible colourings: red
(being or about to be explored), black (fully explored), or grey (has a coloured representative). We write $\mathsf{R}$, $\mathsf{B}$, $\mathsf{G}$ for the corresponding evolving sets.
	\item Edges have two possible colourings: white (explored and closed) or black (explored and open). We let $\mathfrak{W}$, $\mathfrak{B}$ be the corresponding evolving sets.
\end{itemize}
The construction of $\mathcal C(0)$ is as follows.
We fix an (arbitrary) ordering of the vertices in $\mathcal{G}$.
\begin{enumerate}
	\item[\textbf{Step 0}] Initially $\mathsf{R}=\{0\}$, $\mathsf{B}=\varnothing$, $\mathsf{G}=\{0' \mid 0'\cong 0, \: 0'\neq 0\}$, $\mathfrak{W}=\mathfrak{B}=\varnothing$.
	\item[\textbf{Step 1}] Let $u$ be the earliest element of $\mathsf{R}$.
	\begin{enumerate}
		\item[$(i)$] Let $v$ be the earliest vertex adjacent to $u$ such that $v$ is not coloured grey
		and $uv$ is not coloured.
		\item[$(ii)$] If $\omega_{uv}=0$, add the edge $uv$ to $\mathfrak{W}$.
		\item[$(iii)$]  If $\omega_{uv}=1$, add $v$ to $\mathsf R$ (if it is not already in it) and all its representatives $v'\cong v$ ($v'\neq v)$ to $\mathsf{G}$, add $uv$ to $\mathfrak{B}$.
		\item[$(iv)$] Restart from $(i)$ until all the edges containing $u$ are either coloured or connected to a grey vertex.
		Then add $u$ to $\mathsf B$.
	\end{enumerate}
	\item[\textbf{Step 2}] Restart from Step 1 until $\mathsf R=\varnothing$.
\end{enumerate}
At the end of the exploration, we obtain $\mathsf B, \mathsf G, \mathfrak B, \mathfrak W$. Note that all edges that have been explored belong to $\mathfrak W \cup \mathfrak B$.
We define $\mathcal C(0)$ to be the graph $(\mathsf B,\mathfrak B)$.

\medskip
We note the following properties of $\mathcal C(0)$:
\begin{itemize}
\item
By construction,
$\mathcal C(0)$ contains at most one representative of each vertex and edge,
and $\mathcal C(0)$ is connected.
Any path in $\mathcal C(0)$ has at most $r^d$ vertices, so has length
at most $r^d-1$.

\item
Crucially, $\mathcal C(0)= (\mathsf B,\mathfrak B)$ is  measurable in terms of the state of the edges in $\mathfrak B \cup \mathfrak W$,
since
$\mathsf B = \{ 0 \} \cup \{ \text{vertices incident to $\mathfrak B$} \}$ by construction, and, for $(B,W)$ a valid output of the algorithm,
\begin{equation} \label{eq:BW}
\{ \mathfrak B = B,\:\mathfrak W = W \}
= \{\omega_e=1 \text{ for } e\in B,\ \omega_e = 0 \text{ for } e \in W \}.
\end{equation}
The $\subset$ inclusion is by definition of the algorithm.
On the other hand, if $(B,W)$ is a valid output of the algorithm and
$\omega$ is an element of the right-hand side,
then running the exploration on $\omega$ will produce $\mathfrak B = B, \:\mathfrak W = W$, so the $\supset$ inclusion follows.
\end{itemize}

\subsection{A coupled cluster exploration: Proof of \eqref{eq:upper_bound}}

We now apply the exploration introduced in Section~\ref{subsection: percolation exploration} to a well-chosen percolation configuration induced by random currents. This allows us to perform switching between a current on a large
finite subgraph $\mathcal G \subset \Z^d$ and a current on $\mathbb T_r$.
To proceed, we first extend the projection $\pi:\Z^d \to \T^d$ defined previously, and introduce the notion of a \emph{lift} from the torus to $\Z^d$.

\medskip \noindent {\bf Projections.}
Given a finite or infinite subgraph $\mathcal G = (V,E)$ of $\Z^d$,
we define its projection $\pi' \mathcal G = (\pi' V, \pi' E)$ to be the graph
on $\T^d$ with vertex set $\pi' V = \{\pi x: x\in V\}$ and edge set
$\pi' E = \{ \pi' e: e \in E\}$, where for an edge $e=\{x,y\}\in E$ we set
$\pi' e = \{\pi x, \pi y\}$.  Since the projection $\pi:\Z^d \to \T^d$ agrees
with the projection $\pi'$ applied to a graph consisting of a single vertex,
we drop the prime and write simply $\pi$ for the graph projection.  Thus $\pi$ can
act on vertices, on edges, and on graphs.
When $S$ is a subgraph of $\Zd$ whose vertex set does not contain any equivalent vertices, the restriction
$\pi_{|S}$ of $\pi$ to $S$ gives an isomorphism between the graphs $S$ and $\pi S$.
We denote its inverse by $\pi_{|S}^{-1}$.

\smallskip\noindent {\bf Lifts.}
Any function $F$ defined on the vertices and/or edges of the
torus $\T^d$ can be \emph{lifted} to the function $\hat F$ defined on the
vertices/edges of $\Z^d$ by $\hat F = F\circ \pi$.  In particular,
given a current $\n$ on $\T^d$ and a subgraph $\mathcal G$ of $\Z^d$, the \emph{lift}
$\hat \n$ of $\n$ is the current on $\mathcal G$  defined by
\begin{align}
\label{eq:nhatdef}
\hat\n = \n \circ \pi, \quad \text{i.e.,}\quad
\hat \n(e) = \n (\pi e)	 	\qquad (e\in \Gcal).
\end{align}

\smallskip

Let $\m$ be a current on $\Gcal$ and $\n$ be a current on $\Tr$.
The current $\m+\hat \n$ induces a percolation configuration
$(\1_{(\m + \hat \n)_e>0})_{e\in E}$
on $\mathcal G$.
We run the exploration of Section~\ref{subsection: percolation exploration} on this configuration and produce $(\mathsf B,\mathsf G,\mathfrak B,\mathfrak W)$ and $\mathcal C(0)=(\mathsf B, \mathfrak B)$.

\begin{Lem}[Properties of the exploration]
\label{lem:properties of the exploration}
Let $d \ge 1$, $r \ge 3$, $x\in\T_r$, and fix
a finite graph $\Gcal \supset \Lambda_{2r^d}$.
Let $\m\in \Omega_{\mathcal G}$,
and let $\n\in \Omega_{\mathbb T_r}$ with $\sn=\{0,x\}$.
Then the exploration $(\mathsf B,\mathsf G,\mathfrak B,\mathfrak W)$ of
$(\1_{(\m + \hat \n)_e>0})_{e\in E}$
satisfies the following properties:
\begin{enumerate}[label=(\roman*)]
\item
    $(\m+\hat \n)_e=0$ for all $e\in \mathfrak W$.
\item
If $e\in \mathfrak W$ then $\hat \n_{e'}=0$ for every edge $e'$
in $\mathcal G$ with $\pi {e'} =\pi e$.
\item
There exists exactly one representative of $x$, which we denote by $x'$, such that $x'\in \mathcal C(0)$.
\end{enumerate}
\end{Lem}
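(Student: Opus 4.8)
The plan is to verify the three properties essentially directly from the construction of the exploration in Section~\ref{subsection: percolation exploration}, exploiting the fact that the exploration runs on the percolation configuration $(\1_{(\m+\hat\n)_e>0})_{e\in E}$ and that $\hat\n$ is periodic by \eqref{eq:nhatdef}.

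First I would prove (i). By the characterization \eqref{eq:BW}, an edge $e$ enters $\mathfrak W$ only through Step~1$(ii)$, which requires the explored percolation value at $e$ to be $0$; since the percolation configuration is $(\1_{(\m+\hat\n)_e>0})_{e\in E}$, this means $(\m+\hat\n)_e=0$. This is immediate. For (ii), the key observation is that $\hat\n$ is constant on equivalence classes: if $e\cong e'$ then $\hat\n_e=\n(\pi e)=\n(\pi e')=\hat\n_{e'}$. Since $\m\ge 0$ and $\hat\n\ge 0$, property (i) gives $\hat\n_e=0$ for every $e\in\mathfrak W$, and periodicity of $\hat\n$ then upgrades this to $\hat\n_{e'}=0$ for every representative $e'\cong e$. (Note the claim in (ii) is only about $\hat\n$, not about $\m+\hat\n$, which is why periodicity suffices.)

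The main work is property (iii): exactly one representative $x'\cong x$ lies in $\mathcal C(0)=(\mathsf B,\mathfrak B)$. \emph{At most one} is built into the algorithm: when a vertex $v$ is first added to $\mathsf R$, all its other representatives are immediately placed in $\mathsf G$, and grey vertices are never explored (Step~1$(i)$ skips them), so $\mathsf B$ contains at most one representative of each class, in particular at most one of $x$. \emph{At least one} is the substantive point, and here I would use the source constraint $\sn=\{0,x\}$ together with the switching-lemma philosophy: because $\n$ has sources $\{0,x\}$, the multigraph picture shows $0$ and $x$ are connected in the percolation configuration induced by $\n$ on $\T_r$, hence (lifting, and adding the nonnegative current $\m$) there is a path in $(\1_{(\m+\hat\n)_e>0})_{e\in E}$ from $0$ to some representative $x'\cong x$. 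Since such a path exists, it is discovered by the exploration: the exploration terminates only when $\mathsf R=\varnothing$, i.e.\ when the cluster of $0$ has been fully explored up to one representative per class, so $\mathcal C(0)$ contains one representative of every vertex connected to $0$ in the percolation configuration. The only subtlety is that the exploration halts edges that reach a grey vertex without colouring them (Step~1$(iv)$), so I must check that the open path from $0$ to $x'$ cannot be ``blocked'' in the sense of having its far endpoint only reachable through a grey vertex: this is where the hypothesis $\Gcal\supset\Lambda_{2r^d}$ and the length bound ``any path in $\mathcal C(0)$ has at most $r^d$ vertices'' enter — the explored cluster stays inside $\Lambda_{2r^d}\subset\Gcal$ so $\Gcal$ never cuts it off, and whenever a would-be path to $x$ hits a grey vertex $w\cong x$, the algorithm has already placed the unique non-grey representative of the class of $w$ into $\mathsf R\cup\mathsf B$, so $x$'s class is already represented in the exploration.

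The step I expect to be the main obstacle is exactly this last point: carefully arguing that the exploration actually reaches a representative of $x$ despite the grey-blocking rule, rather than getting stuck. The cleanest way is to argue by contradiction — suppose no representative of $x$ is in $\mathsf B$; trace the connection from $0$ to $x$ guaranteed by $\sn=\{0,x\}$ edge by edge through the lifted configuration; at the first step where the exploration would fail to follow it, the obstruction must be a grey vertex, but a grey vertex is grey precisely because a representative of its class was already added to $\mathsf R$, and by induction that representative (being connected to $0$) ends in $\mathsf B$, forcing the path to continue from there; iterating, one reaches $\mathsf B\cap\{x'\cong x\}\ne\varnothing$, contradicting the assumption. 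The length bound $r^d$ and the containment $\Gcal\supset\Lambda_{2r^d}$ guarantee the induction stays within $\Gcal$ and terminates.
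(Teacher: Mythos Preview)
Your proposal is correct and follows essentially the same approach as the paper. For (i)--(ii) and the uniqueness in (iii) you do exactly what the paper does; for existence in (iii) the paper makes your ``edge-by-edge'' idea precise by fixing a torus path $(y_0,\ldots,y_k)$ in $\n$ from $0$ to $x$ and inductively constructing representatives $y_j'\in\Ccal(0)$ with $y_j'\cong y_j$: given $y_j'$, set $z=y_j'+(y_{j+1}-y_j)\in\Lambda_{2r^d}\subset\Gcal$, note $\hat\n_{y_j'z}=\n_{y_jy_{j+1}}>0$, and take $y_{j+1}'=z$ if $z\in\Ccal(0)$ or else the already-explored representative of $z$ (which exists since $z$ must then be grey). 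One small caution: your intermediate claim that ``$\mathcal C(0)$ contains one representative of every vertex connected to $0$ in the percolation configuration'' is stronger than needed and not obviously true for arbitrary $\m+\hat\n$-paths (the lift of an $\m$-edge need not be open), so it is important---as you do in your final paragraph---to trace specifically the $\n$-path, whose lifted edges are open by periodicity of $\hat\n$.
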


\begin{proof}
The first two points follow directly from the algorithm of the exploration and the definition of $\hat \n$.
The uniqueness of $x'$ in $(iii)$ has been noted already in Section~\ref{subsection: percolation exploration}.

For existence, since $\sn = \{0,x\}$, there is an open path in $\n$ from 0 to $x$, which we label as $(y_0, y_1, \dots, y_k)$ with $y_0 = 0$, $y_k = x$.
We claim that, for any $j \in [0,k]$, there exists $y_j' \in \Gcal$ such that $y_j' \cong y_j$ and $y_j' \in \Ccal(0)$.
Then, we can take $x' = y_k'$.
We prove the claim by induction on $j$. The $j=0$ case is trivial as we can take $y_0' = 0$.

Assume the claim for some $j < k$.
Observe that $y_j' \in \Ccal(0) \subset  \Lambda_{2r^d -1 } $,
because a path joining $0$ to $y_j'$ has length at most $r^d-1$.
Let $z = y_j' + (y_{j+1} - y_j) \in \Lambda_{2r^d} \subset  \Gcal$, so $z \cong y_{j+1}$.
If $z \in \Ccal(0)$, we take $y_{j+1}' = z$.
Otherwise, $z$ must be grey because $\hat \n_{y_j' z } = \n_{y_j y_{j+1}} > 0$. This only happens when another representative $z'$ of $z$ is explored before $z$.
In that case, we take $y_{j+1}' = z'$.  This completes the inductive step
and the proof.
\end{proof}

We use the exploration
to switch sources between currents on $\Gcal$ and $\Tr$,
via the following lemma.

\begin{lemma}\label{lemma: switching torus 1}
For any $C\subset \Gcal$, $A \subset \Tr$, and $z\in \Gcal$,
\begin{equation}
Z^{C,A}_{\mathcal G,\mathbb T_r}[z \in \mathcal C(0)]
= Z^{C\Delta\{0,z\}, A \Delta \{0 ,\pi z\}}_{\mathcal G,\mathbb T_r}[z \in \mathcal C(0)].
\end{equation}
In particular, for $x\in \Tr$, $C=\varnothing$, $A = \{ 0,x\}$, and $z = x'$ with $\pi z = x$,
\begin{equation}\label{eq:switchingtorus2}
Z^{\varnothing,0x}_{\mathcal G,\mathbb T_r}[x'\in \mathcal C(0)]
= Z^{0x',\varnothing}_{\mathcal G,\mathbb T_r}[x'\in \mathcal C(0)].
\end{equation}
\end{lemma}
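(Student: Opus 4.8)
The plan is to deduce Lemma~\ref{lemma: switching torus 1} from the standard switching lemma (Lemma~\ref{lem:std_switch}) by exploiting the measurability of $\Ccal(0)$ established in Section~\ref{subsection: percolation exploration}. The key point is that the event $\{z \in \Ccal(0)\}$, although defined through the exploration of the percolation configuration induced by $\m + \hat\n$, is actually measurable with respect to the open/closed status of the edges in $\mathfrak{B} \cup \mathfrak{W}$, and in particular can be realised as a disjoint union over admissible exploration outputs $(B,W)$ of the product event ``$e$ open for $e\in B$, $e$ closed for $e \in W$''. Each such product event is an event on the combined current $\m + \hat\n$ of the type $\n_1 + \n_2 \in \Ecal$ with $\Ecal \subset \Omega_{\Gcal}$ a ``cylinder'' event in the edge-support. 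So the plan is: first, I would fix an admissible output $(B,W)$ and work on the event $\Ecal_{B,W} = \{(\m+\hat\n)_e > 0 \text{ for } e \in B,\ (\m+\hat\n)_e = 0 \text{ for } e \in W\}$; then I would apply Lemma~\ref{lem:std_switch} with $\Gcal$ there being our $\Gcal$, with $\Hcal = \Tr$ (lifted into $\Gcal$ via $\pi$, or more precisely one works with the pair $(\m,\n)$ directly), with the event $\Ecal$ being $\Ecal_{B,W}$, and with the two switched points being $0$ and $z$; finally I would sum over all admissible $(B,W)$ with $z$ in the corresponding explored cluster.

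There is one subtlety that must be handled with care, and I expect it to be the main obstacle: Lemma~\ref{lem:std_switch} is stated for a pair $(\n_1,\n_2)$ of currents on $\Gcal$ and $\Hcal \subset \Gcal$, with the switching performed along a path in $\Hcal$ using edges of $(\n_1 + \n_2)_{|\Hcal}$, whereas here the two currents live on $\Gcal$ and $\Tr$ respectively, and $\Tr$ is not literally a subgraph of $\Gcal$ — it is only a quotient. So I would first reformulate: $\n$ on $\Tr$ with $\sn = \{0,x\}$ corresponds to its lift $\hat\n$ on $\Gcal$, but $\hat\n$ is not a current with finite sources in the naive sense because each edge of $\Tr$ has many representatives. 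The correct move is to note that the exploration $\Ccal(0)$ visits at most one representative of each edge, so on the event $\{z \in \Ccal(0)\}$ the relevant open path from $0$ to $z$ in the percolation configuration lives entirely inside $\Ccal(0)$, which contains at most one representative of each torus edge; hence it projects bijectively to an open path from $0$ to $\pi z$ in the torus percolation induced by $\n$. This is exactly where property (iii) of Lemma~\ref{lem:properties of the exploration} (and the length bound ensuring $\Ccal(0) \subset \Lambda_{2r^d -1}$) gets used. One then runs the involution of the switching-lemma proof on the torus side: given the fixed multigraph $\Mcal$ corresponding to $\m + \hat\n$ restricted to $\Ccal(0)$, the path $\Kcal$ along which we toggle sources is the open path in $\Ccal(0)$ from $0$ to $z$, which projects to a genuine path in $\Tr$ from $0$ to $\pi z$; toggling on the $\Gcal$ side changes the $\Gcal$-source set by $\{0,z\}$, and because the path uses exactly one representative per torus edge, the induced change on the $\Tr$-source set is exactly $\{0, \pi z\}$.

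Concretely, then, the steps in order are: (1) write $Z^{C,A}_{\Gcal,\Tr}[z \in \Ccal(0)] = \sum_{(B,W)} Z^{C,A}_{\Gcal,\Tr}[\Ecal_{B,W}]$, the sum over admissible exploration outputs with $z$ in the explored cluster, using \eqref{eq:BW}; (2) for each such $(B,W)$, observe that $\Ecal_{B,W}$ is a deterministic function of the edge-support of $\m + \hat\n$ inside $\Ccal(0)$, that $\Ccal(0)$ is then itself determined, and that $\Ccal(0)$ contains an open path $\Kcal$ from $0$ to $z$ using at most one representative per torus edge; (3) apply the switching involution: decompose $w_\beta(\m)w_\beta(\n) = w_\beta(\m + \hat\n \text{-combined})\binom{\cdot}{\cdot}$ in the manner of the proof of Lemma~\ref{lem:std_switch}, fix the combined multigraph, and toggle the torus current $\n$ along $\pi\Kcal$ (equivalently toggle a sub-multigraph of $\Mcal$ along $\Kcal$); this is a bijection between configurations with $(\Gcal,\Tr)$-sources $(C,A)$ and those with sources $(C\Delta\{0,z\}, A\Delta\{0,\pi z\})$, and it preserves $\Ecal_{B,W}$ since it does not change which edges have positive weight; (4) sum back over $(B,W)$ to obtain $Z^{C\Delta\{0,z\}, A\Delta\{0,\pi z\}}_{\Gcal,\Tr}[z \in \Ccal(0)]$. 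The special case \eqref{eq:switchingtorus2} is then immediate by taking $C = \varnothing$, $A = \{0,x\}$, and $z = x'$ with $\pi z = x$, since $\varnothing \Delta \{0,x'\} = \{0,x'\}$ and $\{0,x\}\Delta\{0,x\} = \varnothing$; here Lemma~\ref{lem:properties of the exploration}(iii) guarantees $x'$ is the unique representative of $x$ in $\Ccal(0)$, so the event $\{x' \in \Ccal(0)\}$ is the relevant one.
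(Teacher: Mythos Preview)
Your plan follows the paper's approach: condition on the exploration output $(\mathfrak B,\mathfrak W)=(B,W)$, then perform a switching along a path in $B$ from $0$ to $z$, and sum over admissible $(B,W)$. The crucial observation you make --- that $\Ccal(0)$ contains at most one representative of each torus edge, so the path projects bijectively to a path in $\Tr$ from $0$ to $\pi z$ --- is exactly the point.

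Where your step~(3) is imprecise is the factorisation ``$w_\beta(\m)w_\beta(\n) = w_\beta(\m+\hat\n\text{-combined})\binom{\cdot}{\cdot}$''. This does not hold: the lift $\hat\n$ assigns $\n_{\pi e}$ to \emph{every} representative $e\in\Gcal$ of a torus edge, so $w(\hat\n)\neq w(\n)$, and there is no combined multigraph on which the standard involution runs globally. The paper makes this precise by decomposing $\m=\m_1+\m_2$ with $\m_1=\m|_{B\cup W}$ and $\n=\n_1+\n_2$ with $\n_1=\n|_{\pi(B)}$ (noting $\n=0$ on $\pi(W)$, so $\n_2$ lives on $(\pi(B)\cup\pi(W))^c$). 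Because $\pi|_B:B\to\pi(B)$ is a graph isomorphism, $\n_1$ can be identified with a current $\tilde\n_1$ on $B$ satisfying $w(\tilde\n_1)=w(\n_1)$ and $\partial\tilde\n_1=\iota(\sn_1)$. Now $\m_1$ is a current on $B\cup W$ and $\tilde\n_1$ is a current on $B\subset B\cup W$, so Lemma~\ref{lem:std_switch} applies verbatim to this pair with $\Gcal,\Hcal$ there equal to $B\cup W,B$; the event $\Ecal_{(B,W)}$ is measurable in $\m_1+\tilde\n_1$, and the connection $0\leftrightarrow z$ holds in $(\m_1+\tilde\n_1)|_B$ whenever $z\in B$. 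Switching moves the $\tilde\n_1$-sources by $\{0,z\}$, which under $\iota^{-1}$ becomes $\{0,\pi z\}$ on the torus side. This is the concrete implementation of your step~(3); once you insert it, your outline matches the paper's proof.
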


\begin{proof}
Let $\m$ denote the current on $\mathcal G$ and $\n$ denote the current on $\T_r$.
Recall that $\mathcal C(0)$ consists of black vertices and edges.
By conditioning on $(\mathfrak B,\mathfrak W)$, we have
\begin{align}
Z^{C,A}_{\mathcal G,\mathbb T_r}[z \in \mathcal C(0)]
= \sum_{B\ni z,W} Z^{C,A}_{\mathcal G,\mathbb T_r}[(\mathfrak B,\mathfrak W)=(B,W)],
\end{align}
where $B\ni z$ means that $z$ is
incident to some edge in
$B$.
Henceforth, we write $\mathcal E_{(B,W)}$ for the event
$\{ (\mathfrak B,\mathfrak W)=(B,W) \}$, which is measurable in $( \m + \hat{\n} )_{|(B\cup W)}$, as indicated at \eqref{eq:BW}.
We will switch sources in $B\cup W$ (which we view as a graph).

Given an admissible pair $(B,W)$ (i.e.,
realisable by the exploration and such that $B\ni z$),  we make the decomposition
\begin{alignat}2
\m &= \m_1 + \m_2,   \qquad &&\m_1(e) = \m(e) \1\{e\in B\cup W\}, \\
\n &= \n_1+\n_2,  	\qquad &&~\n_1(e) = \n(e) \1\{e\in\pi(B \cup W)\} .
\end{alignat}
Then $w(\m) = w(\m_1)w(\m_2)$ and $w(\n) = w(\n_1)w(\n_2)$.
We also define $\tilde \n_1 = \hat \n_{|B\cup W}$.
Since no edges connect to grey vertices in the exploration, the
projection map $\pi$ restricts to
an isomorphism $\pi_{|B\cup W}$  between the graphs
$B\cup W$ and $\pi(B\cup W)$, and
\begin{equation} \label{eq:n1_lift}
\tilde \n_1  = \n_1 \circ \pi_{|B\cup W} ,
    \quad
    w(\tilde\n_1) = w(\n_1),
    \quad
    \pi_{|B\cup W}(\partial\tilde \n_1) = \sn_1.
\end{equation}

By conditioning on $\m_2$, $\n_2$,
\begin{align}
\label{eq:ZTswitch}
&Z^{C,A}_{\mathcal G,\mathbb T_r}[\mathcal E_{(B,W)}\cap \{z\in B\}]
\nnb
& \qquad\qquad\qquad = \sum_{\m_2, \n_2} w(\m_2) w(\n_2)
	\sum_{\substack{ \sm_1 = \sm_2\Delta C \\ \sn_1 = \sn_2 \Delta A}}
	w(\m_1) w(\tilde \n_1) \1_{\mathcal E_{(B,W)} }
\1 \{ 0 \connect{(\m_1+\tilde{\n}_1)_{|B} } z \} ,
\end{align}
where the connection $0 \connect{(\m_1+\tilde{\n}_1)_{|B} } z$ is present
because $B\ni z$.
In view of \eqref{eq:n1_lift}, we can change the sum over $\n_1$ to be a sum over $\tilde \n_1$.
Now, since $\tilde \n_1$ and $\m_1$ are both currents on $B \cup W$, we can use Lemma~\ref{lem:std_switch} (with $\mathcal H = B$ and $\mathcal E = \mathcal E_{(B,W)}$) to switch sources between them.
After the swiching, we use \eqref{eq:n1_lift} again to change the sum over $\tilde \n_1$ back to a sum over $\n_1$.
With $C_1=\sm_2\Delta C$ and $A_1=\sn_2 \Delta A$, this process is
represented by
\begin{align}
    \sum_{\substack{ \sm_1 = C_1 \\ \sn_1 = A_1}}
    =
    \sum_{\substack{ \sm_1 = C_1 \\ \partial\tilde\n_1 = \pi_{|B\cup W}^{-1} A_1}}
    =  \sum_{\substack{ \sm_1 = C_1 \Delta \{0,z\} \\ \partial\tilde\n_1 = (\pi_{|B\cup W}^{-1} A_1) \Delta \{0,  z\} }}
    =
    \sum_{\substack{ \sm_1 = C_1 \Delta \{0,z\} \\ \partial\n_1 =  A_1 \Delta \{0,\pi z\} }} ,
\end{align}
where the equalities of summation symbols mean that the corresponding
replacements are valid in \eqref{eq:ZTswitch}.
After reading the resulting adaptation of \eqref{eq:ZTswitch}
from right to left, we obtain
\begin{align}
Z^{C,A}_{\mathcal G,\mathbb T_r}[\mathcal E_{(B,W)} \cap \{z\in B\}]
= Z^{C\Delta\{0,z\},A\Delta \{0,\pi z\}}_{\mathcal G,\mathbb T_r}[\mathcal E_{(B,W)}\cap \{z\in B\}] .
\end{align}
Summation over all admissible pairs $(B,W)$ then yields the desired result.
\end{proof}

We can now prove the first bound of Theorem~\ref{thm:coupling_bounds}.

\begin{proof}[Proof of \eqref{eq:upper_bound}]
Let $\Gcal = \LambdaR$ for some $R \ge 2r^d$.
We introduce an auxiliary sourceless current $\m$ on $\LambdaR$.
Then \eqref{equation correlation rcr} and
Lemma~\ref{lem:properties of the exploration}(iii) give
\begin{equation}
	\taub^\Tr(x)
	=\frac{Z_{\mathbb T_r}^{0x}}{Z_{\mathbb T_r}^\varnothing}
	=\frac{Z_{\LambdaR,\mathbb T_r}^{\varnothing,0x}}{Z_{\LambdaR,\mathbb T_r}^{\varnothing,\varnothing}}
	= \frac{ Z^{\varnothing, 0x}_{\LambdaR, \Tr}
	\Big[ \bigsqcup_{x'\cong x} \{ x'\in \mathcal C(0)\} \Big] }
	{Z^{\varnothing,\varnothing}_{\LambdaR,\Tr}}
	= \sum_{x'\in\Lambda_R: x'\cong x}
\frac{ Z^{\varnothing, 0x}_{\LambdaR, \Tr}[x'\in \Ccal(0) ] }
	{Z^{\varnothing,\varnothing}_{\LambdaR,\Tr}  } .
\end{equation}
Using \eqref{eq:switchingtorus2}, and then dropping the event $\{x'\in \Ccal(0)\}$,
we get
\begin{equation} \label{eq:torus_identity}
\taub^\Tr(x)
= \sum_{x'\in\Lambda_R: x'\cong x} \frac{ Z^{0x', \varnothing}_{\LambdaR, \Tr}[x'\in \Ccal(0) ] }
	{Z^{\varnothing,\varnothing}_{\LambdaR,\Tr}  }
\le
\sum_{x'\in\Lambda_R: x'\cong x}
	\frac{ Z_\LambdaR^{0x'} Z_\Tr^\varnothing } { Z_\LambdaR^\varnothing Z_\Tr^\varnothing }
=
\sum_{x'\in\Lambda_R: x'\cong x}\taub^\LambdaR (0,x').
\end{equation}
Then \eqref{eq:upper_bound} follows from the fact that $\taub^\LambdaR(0,x') \le \taub(x')$
by the second Griffiths inequality.
\end{proof}

\subsection{Proof of \eqref{eq:lower_bound_diagram}}

Let $R \ge 2r^d$. We define
\begin{equation}
    \Tsum_\beta^{(r,R)}(x) = \sum_{x'\in\Lambda_R:x'\cong x} \tau_\beta^{\Lambda_R}(0,x')
    \qquad (x\in\T_r).
\end{equation}
Then, by the first equality of \eqref{eq:torus_identity},
\begin{equation} \label{eq:lower_bound_pf0}
\Tsum_\beta^{(r,R)}(x) - \tau^{\mathbb T_r}_\beta(x)
= \sum_{x'\in\Lambda_R: x'\cong x}
\frac{Z^{0x',\varnothing}_{\LambdaR,\mathbb T_r}[x'\notin \mathcal C(0)]}{Z^{\varnothing,\varnothing}_{\LambdaR,\mathbb T_r}}
.
\end{equation}
The inequality \eqref{eq:lower_bound_diagram} is an immediate consequence of the following lemma, by taking the limit $R\to \infty$.

\begin{lemma} \label{lem:lower_bound_R}
Let $d\ge 1$ and $R \ge 2r^d$.
For any $x \in \Tr$, we have
\begin{equation}\label{eq:lower_bound_R}
\Tsum_\beta^{(r,R)}(x) - \tau_\beta^{\T_r}(x)
\le \sum_{ x' \cong x } \sum_{y} \sum_{ \substack{y' \cong y \\ y' \ne y} } \sum_{z } \tau^\LambdaR_\beta(0,z) \tau^\LambdaR_\beta(z, y) \tau^\LambdaR_\beta(z,y') \tau^\Tr_\beta(\pi y' -\pi z) \tau^\LambdaR_\beta(y, x'),
\end{equation}
where $x',y,y',z$ range over $\LambdaR$ with the indicated restrictions.
\end{lemma}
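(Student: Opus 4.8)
The plan is to start from the identity \eqref{eq:lower_bound_pf0}, which expresses the quantity $\Tsum_\beta^{(r,R)}(x) - \tau^{\mathbb T_r}_\beta(x)$ as a sum over representatives $x'\cong x$ of the partition-function ratio $Z^{0x',\varnothing}_{\LambdaR,\mathbb T_r}[x'\notin\mathcal C(0)]/Z^{\varnothing,\varnothing}_{\LambdaR,\mathbb T_r}$, where as in the proof of \eqref{eq:upper_bound} there is an auxiliary sourceless current $\m$ on $\LambdaR$ and a current $\n$ on $\Tr$ with $\sn=\{0,x\}$, and $\mathcal C(0)$ is the coupled cluster exploration of $(\1_{(\m+\hat\n)_e>0})$. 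The key point is that on the event $\{x'\notin\mathcal C(0)\}$, while $0$ is connected to $x'$ in $\m+\hat\n$ (because $\sn=\{0,x\}$ forces a path, by Lemma~\ref{lem:properties of the exploration}(iii) lifted to \emph{some} representative, and here we have fixed $x'$ so it connects to $x'$ in the lift), the cluster $\mathcal C(0)$ must fail to reach $x'$. I would isolate the ``first'' place where a wrapping occurs: there is a vertex $z$ on the exploration boundary such that $z\in\mathcal C(0)$ but a representative $z'\cong z$, $z'\ne z$, carries the continuation of the source path toward $x'$. More precisely, I would run the source-switching of Lemma~\ref{lemma: switching torus 1} (in its general form, with $C$, $A$ chosen appropriately) to move the source $x'$ off $\m$; the obstruction $\{x'\notin\mathcal C(0)\}$ then has to be paid for by a path in $\n$ that exits $\pi(\mathcal C(0))$ and re-enters through a grey vertex, which is exactly a point $\pi y'$ with $y'\cong y$, $y'\ne y$.

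Concretely, I would condition on $(\mathfrak B,\mathfrak W)=(B,W)$ as in the proof of Lemma~\ref{lemma: switching torus 1}, decompose $\m=\m_1+\m_2$ and $\n=\n_1+\n_2$ accordingly, and analyse the event $\{x'\notin\mathcal C(0)\}$ on this conditioning. Since $\sn=\{0,x\}$ and $\n_1$ lives on $\pi(B)$ while $\n_2$ lives off $\pi(B)\cup\pi(W)$, and since $x'\notin B$ (else $x'\in\mathcal C(0)$), the source $x$ of $\n$ must be ``served'' by $\n_2$: tracing the source path of $\n$ from $0$, it stays in $\pi(B)$ until it exits at some boundary vertex, and the exit point is a grey vertex, i.e.\ a representative $\pi y'$ with $y'\cong y$ for some $y\in\mathcal C(0)$ incident to the boundary, $y'\ne y$. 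This forces, in $\n_2$, a connection from $\pi y'$ to $x$ in $\Tr$; and in $\m_2$ plus the source structure we get the remaining skeleton. Applying the switching Lemma~\ref{lem:std_switch} (or directly Lemma~\ref{lemma: switching torus 1} with the roles of the source pairs shuffled) to redistribute $\sm_1$ and $\sn_1$ produces, after summing back over $(B,W)$, a sum over a ``backbone'' $0\to z\to y\to x'$ living in $\LambdaR$ together with a torus two-point factor from $\pi y'$ to $\pi z$ coming from the $\n_2$ piece. Each current two-point function is bounded above by the corresponding $\tau^\LambdaR_\beta$ or $\tau^\Tr_\beta$ using \eqref{equation correlation rcr} and the fact that a sourceless reference current only increases partition functions (Griffiths/switching), yielding exactly the five-factor product on the right-hand side of \eqref{eq:lower_bound_R}: $\tau^\LambdaR_\beta(0,z)\,\tau^\LambdaR_\beta(z,y)\,\tau^\LambdaR_\beta(z,y')\,\tau^\Tr_\beta(\pi y'-\pi z)\,\tau^\LambdaR_\beta(y,x')$.

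The main obstacle I anticipate is the bookkeeping of \emph{which} vertices $z$, $y$, $y'$ arise and in what order, and justifying that the switching can be applied cleanly despite the geometry of grey vertices and the fact that the exploration has already ``used up'' the configuration on $B\cup W$. In particular one must check that the decomposition of the source path of $\n$ into a piece inside $\pi(B)$ and a piece outside is unambiguous, that the re-entry vertex $\pi y'$ is genuinely grey (so that $y'\ne y$ and $y\in\mathcal C(0)$), and that after switching one does not double-count representatives $x'$ or overlapping backbones — this is why the inequality is stated with an over-counting sum over all $x'\cong x$, $y\in\LambdaR$, $y'\cong y$ with $y'\ne y$, and $z\in\LambdaR$, rather than an identity. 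Once the combinatorial skeleton is extracted, bounding the surviving currents by two-point functions via the switching lemma and Griffiths' second inequality is routine, and taking $R\to\infty$ (using $\tau^\LambdaR_\beta \uparrow \taub$ and $\Tsum_\beta^{(r,R)}\to\Tsum_\beta^{(r)}$) upgrades \eqref{eq:lower_bound_R} to \eqref{eq:lower_bound_diagram}.
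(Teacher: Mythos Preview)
Your proposal has a genuine gap, beginning with a misidentification of the source structure in \eqref{eq:lower_bound_pf0}.  That identity is derived \emph{after} the switching \eqref{eq:switchingtorus2}, so the pair of currents there satisfies $\sm=\{0,x'\}$ on $\LambdaR$ and $\sn=\varnothing$ on $\Tr$, not $\sm=\varnothing$ and $\sn=\{0,x\}$ as you write.  Consequently there is no ``source path of $\n$'' to trace: $\n$ is sourceless.  The guaranteed path from $0$ to $x'$ lives in $\m$, and the event $\{x'\notin\Ccal(0)\}$ means this $\m$-path must hit a grey vertex.  The paper takes the \emph{last} such grey vertex $y$ (so that $y\connect{\m}x'$ off $\mathfrak B\cup\mathfrak W$), with $y'\cong y$, $y'\ne y$, $y'\in\Ccal(0)$; your proposal has $y$ and $y'$ reversed and attaches them to the wrong current.

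Beyond this, two substantive ideas are missing.  First, to split off the factor $\tau^\LambdaR_\beta(y,x')$ one introduces a \emph{third} independent sourceless current $\l$ on $\LambdaR$ and switches in the graph $(\mathfrak B\cup\mathfrak W)^c$ (Lemma~\ref{lem:3currents}); this is what allows the tail to be removed without disturbing $\Ccal(0)$.  Second, and more seriously, the vertex $z$ and the torus factor $\tau^\Tr_\beta(\pi y'-\pi z)$ do not arise from any decomposition $\n=\n_1+\n_2$ as you suggest.  After switching once more to $\sm=\{y,y'\}$, $\sn=\{0,\pi y'\}$, the paper forms the \emph{backbone} $\Gamma_\m$ of $\m$ from $y'$ to $y$, runs a second exploration on $\m\setminus\overline{\Gamma}_\m+\hat\n$, and defines $z$ as the first place the backbone touches this modified exploration (inclusion \eqref{eq:prooflastswitching1}).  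A further application of Lemma~\ref{lemma: switching torus 1} on the backbone complement then produces $\tau^\LambdaR_\beta(0,z)\,\tau^\Tr_\beta(\pi y'-\pi z)$, and a final auxiliary current plus switching produces $\tau^\LambdaR_\beta(z,y)\,\tau^\LambdaR_\beta(z,y')$.  Without the backbone step there is no mechanism to locate $z$ or to generate the torus factor, so the five-factor product cannot be assembled from your outline.
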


The proof proceeds by bounding the terms on the right-hand side of \eqref{eq:lower_bound_pf0}.  These terms involve a current $\m$ on $\Lambda_R$
with $\sm = \{0,x'\}$, so
there is an open path in $\m$ from $0$ to $x'$.
When $x' \not \in \mathcal C(0)$, every such path must contain a grey site.
(Note that $x'$ might be uncoloured, because the sources do not guarantee that there is an $x''\in \mathcal C(0)$ with $x'' \cong x$.)
We let $y\in \LambdaR$ denote the last grey site on one of the paths from $0$ to $x'$,
so that $y \connect{\m\:} x'$ using only edges in $ (\mathfrak B\cup \mathfrak W)^c$.
Also, since $y$ is grey, there exists another representative $y' \cong y$, $y'\ne y$ such that $y' \in \mathcal C(0)$.
This proves the inclusion
\begin{align} \label{eq:yy'}
\{ x' \not\in \mathcal C(0) \} \subset
\bigcup_{y\in \LambdaR} \bigcup_{ \substack{ y' \cong y \\ y' \ne y } }
\{ y' \in \mathcal C(0) \} \cap \{ y \connect{\m\:} x' \text{ in } (\mathfrak B \cup \mathfrak W)^c \} .
\end{align}
As in the proof of Lemma~\ref{lemma: switching torus 1}, we regard an edge subset (such as $(\mathfrak B \cup \mathfrak W)^c$) as the graph whose edges are those in that subset and whose vertices
are the vertices comprising those edges.

\begin{lemma} \label{lem:3currents}
Let $ \mathcal A(x',y,y') = \{ y' \in \mathcal C(0) \} \cap \{ y \connect{\m\:} x' \text{ in } (\mathfrak B\cup \mathfrak W)^c \}$
denote the event on the right-hand side of \eqref{eq:yy'}.
Then,
\begin{align}
Z^{0x', \varnothing}_{\LambdaR, \Tr}[ \mathcal A(x',y,y') ]
\le Z^{0y,\varnothing}_{\LambdaR, \Tr}[ y'\in \Ccal(0) ] \taub^\LambdaR(y,x').
\end{align}
\end{lemma}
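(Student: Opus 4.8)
The plan is to start from the left-hand side $Z^{0x',\varnothing}_{\LambdaR,\Tr}[\mathcal A(x',y,y')]$ and perform two successive switchings to peel off the right-most two-point function $\taub^\LambdaR(y,x')$ and to move the source from $x'$ to $y$. First I would observe that on the event $\mathcal A(x',y,y')$ the pair $(\m,\n)$ satisfies: $y'\in\Ccal(0)$ (so in particular $0\connect{\m+\hat\n}y'$ through explored edges), and there is a path in $\m$ from $y$ to $x'$ using only edges outside $\mathfrak B\cup\mathfrak W$. Since $\Ccal(0)$ and hence $(\mathfrak B,\mathfrak W)$ is measurable with respect to the state of the edges in $\mathfrak B\cup\mathfrak W$ only (the property noted around \eqref{eq:BW}), I would condition on $(\mathfrak B,\mathfrak W)=(B,W)$ and split $\m=\m_1+\m_2$ with $\m_1$ supported on $B\cup W$ and $\m_2$ supported on its complement, exactly as in the proof of Lemma~\ref{lemma: switching torus 1}. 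The event $\{y\connect{\m}x'\text{ in }(B\cup W)^c\}$ then only constrains $\m_2$, while $\{y'\in\Ccal(0)\}$ together with $\mathcal E_{(B,W)}$ constrains $\m_1+\tilde\n_1$ and forces $0\connect{(\m_1+\hat\n_1)_{|B}}y'$.

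Next I would apply the switching lemma (Lemma~\ref{lem:std_switch}) to the $\m_2$-component alone, on the graph $(B\cup W)^c$ with the connection $y\connect{\m_2}x'$: this is a sourceless-to-$\{y,x'\}$ switching which, after summing over all admissible $(B,W)$ and recombining, converts the constraint $\sm=\{0,x'\}$ together with the path $y\connect{\m}x'$ into the factorised bound
\begin{equation}
Z^{0x',\varnothing}_{\LambdaR,\Tr}[\mathcal A(x',y,y')]
\le Z^{0y,\varnothing}_{\LambdaR,\Tr}[\,y'\in\Ccal(0)\,]\cdot\frac{Z^{yx',\varnothing}_{\LambdaR,\Tr}[\,\cdot\,]}{Z^{\varnothing,\varnothing}_{\LambdaR,\Tr}}\cdot Z^{\varnothing,\varnothing}_{\LambdaR,\Tr},
\end{equation}
where the middle ratio is at most $\taub^\LambdaR(y,x')$ because dropping the indicator only increases the partition function. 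Concretely: the source pair $\{0,x'\}$ on $\m$ is rewritten, via the path through $y$, as sources $\{0,y\}$ on $\m_1$-plus-$\hat\n_1$ and sources $\{y,x'\}$ on $\m_2$; the $\m_2$ piece is dominated by $Z^{yx',\varnothing}_{\LambdaR,\Tr}/Z^{\varnothing,\varnothing}_{\LambdaR,\Tr}=\taub^\LambdaR(y,x')$ after we drop the requirement that the path avoid $B\cup W$, and the remaining piece is exactly $Z^{0y,\varnothing}_{\LambdaR,\Tr}[y'\in\Ccal(0)]$ once we sum back over $(B,W)$ and note that the exploration of $\m+\hat\n$ with sources $\{0,y\}$ still produces the same colouring data.

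The main obstacle will be bookkeeping the source constraints correctly through the decomposition $\m=\m_1+\m_2$: one must check that after switching on $\m_2$ the combined current $\m_1+\m_2$ has the right source set and that the event $\{y'\in\Ccal(0)\}$, which is defined in terms of the exploration of the \emph{full} configuration $\m+\hat\n$, survives when we replace the sources $\{0,x'\}$ by $\{0,y\}$ — this works because the exploration only reads the edges in $\mathfrak B\cup\mathfrak W$, and those are untouched by the $\m_2$-switching, but it requires care to phrase cleanly. A secondary point is that the path $y\connect{\m}x'$ lives in $(\mathfrak B\cup\mathfrak W)^c$ which is itself random, so the conditioning on $(B,W)$ must precede the switching; this is handled exactly as in Lemma~\ref{lemma: switching torus 1}. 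Once the conditioning is set up, the two switchings are routine applications of Lemma~\ref{lem:std_switch}, and dropping indicators to pass from equalities to the claimed inequality is immediate from \eqref{eq:ZAG}.
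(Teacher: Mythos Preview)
Your overall plan---condition on $(\mathfrak B,\mathfrak W)=(B,W)$, split $\m=\m_1+\m_2$ along $B\cup W$, and switch on the piece living in $(B\cup W)^c$---is exactly the structure of the paper's proof. But the switching step itself has a genuine gap.

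The switching lemma (Lemma~\ref{lem:std_switch}) always acts on a \emph{pair} of currents, moving sources between them along a connection in their \emph{sum}. There is no ``switching on the $\m_2$-component alone'', and the phrase ``sourceless-to-$\{y,x'\}$ switching'' has no meaning in this framework. Relatedly, your claim that the decomposition $\m=\m_1+\m_2$ produces sources $\{0,y\}$ on $\m_1+\hat\n_1$ and $\{y,x'\}$ on $\m_2$ is false: the individual source sets $\partial\m_1,\partial\m_2$ are constrained only by $\partial\m_1\Delta\partial\m_2=\{0,x'\}$, and the connection event $y\connect{\m_2}x'$ says nothing about sources. Your displayed inequality is also dimensionally inconsistent: its right-hand side is a product of two $Z^{\cdot,\cdot}_{\LambdaR,\Tr}$ factors, hence a sum over \emph{four} currents, while the left-hand side sums over two---you have silently multiplied by an extra partition function.

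The paper's proof supplies the missing ingredient: an independent \emph{third} current $\l$ on $\LambdaR$ with $\partial\l=\varnothing$. One multiplies the left-hand side by $Z^\varnothing_{\LambdaR}$, decomposes $\l=\l_1+\l_2$ along $B\cup W$ in parallel with $\m$, enlarges $\{y\connect{\m_2}x'\}\subset\{y\connect{\m_2+\l_2}x'\}$, and then applies Lemma~\ref{lem:std_switch} with $\Gcal=\Hcal=(B\cup W)^c$ to the pair $(\l_2,\m_2)$. After switching, $\l$ carries sources $\{y,x'\}$ and $\m$ carries $\{0,y\}$; dropping the connection indicator and resumming over $(B,W)$ gives
\[
Z^\varnothing_{\LambdaR}\,Z^{0x',\varnothing}_{\LambdaR,\Tr}[\mathcal A]
\;\le\;
Z^{yx'}_{\LambdaR}\,Z^{0y,\varnothing}_{\LambdaR,\Tr}[y'\in\Ccal(0)],
\]
and division by $Z^\varnothing_{\LambdaR}$ yields the lemma. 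The current $\l$ is not a bookkeeping convenience but the second current that makes the switching lemma applicable at all, and it is precisely what produces the factor $\taub^{\LambdaR}(y,x')=Z^{yx'}_{\LambdaR}/Z^\varnothing_{\LambdaR}$. Your concern that the exploration data survives the switching is legitimate and is handled exactly as you suggest: $\mathcal E_{(B,W)}$ is measurable in $\m_1+\tilde\n_1$, which is untouched by switching $\l_2\leftrightarrow\m_2$.
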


\begin{proof}
The left-hand side of the desired inequality involves two currents,
but we will need a third for the two-point function on the right-hand side.
For simplicity, we write $\Acal = \mathcal A(x',y,y')$.
Since the connection joining $y$ and $x'$
is a connection outside of $\mathfrak B \cup \mathfrak W$, we can use it to switch sources with a third independent current without affecting $\mathfrak B \cup \mathfrak W$, as follows.
We again condition on the event
$\mathcal E_{(B,W)}=\{(\mathfrak B,\mathfrak W)=(B,W)\}$ and make the decomposition
\begin{align}
\m = \m_1 + \m_2, \qquad \m_1(e) = \m(e) \1\{e\in B\cup W\}.
\end{align}
Let $\l$ be an independent current on $(B\cup W)^c$ with $\del \l  = \varnothing$.
By conditioning on $\m_1$, we have
\begin{align}
\label{eq:break_tail_pf1}
&Z^{0x', \varnothing}_{\LambdaR, \Tr}[ \Acal ]
= \sum_{B,W} Z^{0x', \varnothing}_{\LambdaR, \Tr}[ \Acal
\cap \mathcal E_{(B,W)}
]	\nl
&\quad = \sum_{B\ni y',W} Z^{0x', \varnothing}_{\LambdaR, \Tr}[
\mathcal E_{(B,W)} \cap
\{	y \connect{\m\:} x' \text{ in } (B \cup W)^c \} ] 	
\frac{Z^\varnothing_{(B\cup W)^c}}{Z^\varnothing_{(B\cup W)^c}}
\nl
&\quad = \sum_{B \ni y', W} \sum_{ \substack{ \m_1 \\ \sn=\varnothing} }
	 w(\m_1) w(\n)    \1 _{
\mathcal E_{(B,W)}}
	\!\!\!\!\!\!
    \sum_{ \substack{ \del \l = \varnothing \\ \del \m_2 = \del \m_1 \Delta\{ 0,x'\} } }
	w(\l) w(\m_2) \1\{ y \connect{\m_2\:} x' \}
\frac{1}{Z^\varnothing_{(B\cup W)^c}}.
\end{align}
We then switch sources between $\l$ and $\m_2$ on $\mathcal G= \mathcal H = (B \cup W)^c$.
Since $\{ y \connect{\m_2\:} x' \} \subset \{ y \connect{\l + \m_2\:} x' \}$,
from \eqref{eq:switchbd} we get
\begin{equation}
\label{eq:ell-m-bd}
\sum_{ \substack{ \del \l = \varnothing \\ \del \m_2 = \del \m_1 \Delta \{0,x'\}} }
	w(\l) w(\m_2) \1\{ y \connect{\m_2\:} x' \}
\le \sum_{  \del \m_2 = \del \m_1 \Delta \{0,y\} } w(\m_2)
	\sum_{  \del \l =  \{y,x'\}  } w(\l)  .
\end{equation}
After insertion of \eqref{eq:ell-m-bd} into \eqref{eq:break_tail_pf1},
the sum over $\l$, together with $1 / Z^\varnothing_{(B\cup W)^c}$, creates a two-point function on $(B\cup W)^c$ which is bounded by $\taub^{\Lambda_R}(y,x')$ by a Griffiths inequality.
The remaining factors reassemble
to provide the bound
\begin{equation}
 Z^{0x', \varnothing}_{\LambdaR, \Tr}[ \Acal ]
\le \taub^{\Lambda_R}(y,x')
\sum_{B \ni y', W} Z^{0y,\varnothing}_{\LambdaR,\Tr}[
	\Ecal_{(B,W)}]
= \taub^{\Lambda_R}(y,x') Z^{0y,\varnothing}_{\LambdaR,\Tr}[ y' \in \Ccal(0) ] ,
\end{equation}
as desired.
\end{proof}

In the proof of Lemma~\ref{lem:lower_bound_R}, we use the notion of
the backbone of a current \cite[Section 4]{AF86} (more on the backbone can be found in \cite{ADTW19,Pani24,Pani24_thesis}).
A current $\m$ on a finite graph $\mathcal G$ with sources $\sm=\{u,v\}$ must have at least one path from $u$ to $v$ in the percolation configuration induced by $\m$. The backbone of $\m$ is an exploration of one of these paths. We fix some arbitrary ordering $\prec$ of the edges of $\Gcal$.

\begin{definition}[The backbone of a current]
\label{def:backbone}
Let $\m\in \Omega_{\mathcal G}$ with $\sm=\lbrace u,v\rbrace$. The \emph{backbone}
of $\m$, denoted $\Gamma_{\m}$, is the unique oriented and edge self-avoiding path
from $u$ to $v$, supported on edges $e$ with $\m_{e}$ odd, that is minimal for $\prec$.
The backbone $\Gamma_{\m}=\{x_ix_{i+1}:  0\leq i < k\}$ is obtained via the following exploration process:
\begin{enumerate}
    \item[$(i)$] Let $x_0=u$. The first edge $x_0x_1$ of $\Gamma_{\m}$ is the earliest of all the edges $e$ emerging from $x_0$ with $\m_e$ odd.
    Thus, all edges $e\ni x_0$ satisfying $e\prec x_0x_1$ and $e\neq x_0x_1$ are explored and have $\m_e$ even.
    \item[$(ii)$] Inductively, each edge $x_ix_{i+1}$, $i\ge 1$, is the earliest of all edges $e$ emerging from $x_i$ that have not been explored previously, and for which $\m_e$ is odd.
    \item[$(iii)$] The exploration stops when it reaches a vertex from which no more odd, non-explored edges are available. This happens at $x_k=v$.
\end{enumerate}
Let $\overline{\Gamma}_{\m}$ denote the set of explored edges, i.e. $\Gamma_\m$ together with all explored even edges.
\end{definition}

By definition, $\overline \Gamma_\m$ is determined by $\Gamma_\m$ and the ordering $\prec$,
and similarly $\Gamma_\m$ can be determined from $\overline\Gamma_\m$ by selecting
the odd edges in $\overline\Gamma_\m$ according to $\prec$.
Also, the current $\m\setminus \overline \Gamma_\m$, defined to be the restriction of $\m$ to the complement $\overline{\Gamma}_\m^c$, is sourceless.

\begin{proof}[Proof of Lemma~\textup{\ref{lem:lower_bound_R}}]
By \eqref{eq:lower_bound_pf0}, \eqref{eq:yy'}, a union bound, and Lemma~\ref{lem:3currents},
\begin{equation} \label{eq:tree_pf0}
\Tsum_\beta^{(r,R)}(x) - \tau_\beta^{\T_r}(x)
\le \sum_{ x' \cong x } \sum_{y} \sum_{ \substack{y' \cong y \\ y' \ne y} }
	\frac{ Z^{0y,\varnothing}_{\LambdaR,\Tr}[y'\in \Ccal(0)] }
	{Z^{\varnothing,\varnothing}_{\LambdaR,\Tr}}
	\taub^\LambdaR(y,x') .
\end{equation}
The connections used to bound the above right-hand side are depicted
in Figure~\ref{figure:diagrammatic_bound}.
We first apply Lemma~\ref{lemma: switching torus 1} to switch on the event $y' \in \mathcal C(0)$.
Using Lemma~\ref{lemma: switching torus 1} with $z=y'$, $C=\{0,y\}$, and $A=\varnothing$, we get
\begin{equation} \label{eq:switching yy'}
Z^{0y,\varnothing}_{\LambdaR, \Tr}[ y'\in \Ccal(0) ]
= Z^{yy',0\pi y'}_{\LambdaR, \Tr}[ y'\in \Ccal(0) ] .
\end{equation}

\begin{figure}[h]
\begin{center}
\includegraphics{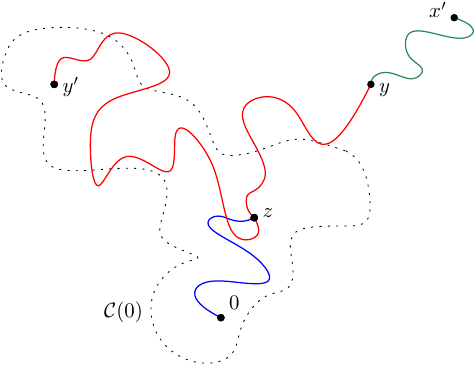}
\caption{An illustration of the connections relevant to
the right-hand side of
\eqref{eq:tree_pf0}.
The region delimited by the dotted line corresponds to the exploration $\mathcal C(0)$. The connection from $y$ to $x'$ (in green) does not involve any edge of $(\mathfrak B,\mathfrak W)$. The point $z$ is defined by \eqref{eq:prooflastswitching1}. Following the notation below \eqref{eq:prooflastswitching1}: the red path corresponds to $\Gamma_\m$ after the switching performed in \eqref{eq:switching yy'}, the blue path corresponds to a connection in $\mathcal C_{\m\setminus \overline\Gamma_\m+\hat \n}(0)$. The additional torus two-point function appearing in the right-hand side of \eqref{eq:lower_bound_R} is produced by the switching of
\eqref{eq:prooflastswitching3}.
}
\label{figure:diagrammatic_bound}
\end{center}
\end{figure}

Since the current $\m$ on $\Lambda_R$ now has sources $\sm = \{y,y'\}$, we can explore the backbone $\Gamma_\m$ of $\m$ from $y'$ to $y$.
Let $\overline \Gamma_\m$ be the set of edges explored during the formation of $\Gamma_\m$.
Let $\mathcal C_{ \m \setminus \overline{\Gamma}_\m+\hat \n}(0)$ denote
the result of the exploration process of Section~\ref{subsection: percolation exploration}
performed on the current $ \m\setminus \overline{\Gamma}_\m +\hat \n$
(on the graph $\LambdaR \setminus \overline \Gamma_\m$).
We claim that
\begin{equation} \label{eq:prooflastswitching1}
\{ y'\in \mathcal C(0) \}
\subset \bigcup_{z\in \LambdaR} \{z\connect{\overline{\Gamma}_\m \:}y'\}\cap \{z\in \mathcal C_{\m\setminus \overline{\Gamma}_\m +\hat \n}(0) \}.
\end{equation}
Indeed, if $\mathcal C_{\m\setminus \overline{\Gamma}_\m +\hat \n}(0) = \Ccal(0)$, then \eqref{eq:prooflastswitching1} holds with $z = y'$.
If $\mathcal C_{\m\setminus \overline{\Gamma}_\m +\hat \n}(0) \ne \Ccal(0)$,
then there is a first edge $zv \in \overline\Gamma_\m$ used in the construction of $\mathcal C(0)$ that makes the two exploration processes different.
For this edge, we must
have $z\in \mathcal C_{\m\setminus \overline{\Gamma}_\m +\hat \n}(0)$ and $\m_{zv} > 0$.
Since one of $z,v$ must be in $\Gamma_\m$, we see that $z \connect { \overline \Gamma_\m \:} y'$ also.  This proves \eqref{eq:prooflastswitching1}.

Conditional on $\Gamma_\m  = \gamma $, we set $\bar\gamma = \overline \Gamma_\m$ and
decompose $\m$ as
\begin{align}
\m = \k + \l,
\qquad \k(e) = \m(e) \1\{ e\in \bar{\gamma}\},
\end{align}
so $\k, \l$ are currents on $\bar{\gamma}$, $\LambdaR \setminus \bar{\gamma}$ respectively.
Note that $w(\m) = w(\k) w(\l)$, $\partial \k=\{y,y'\}$, $\partial \l=\varnothing$, and $\Gamma_\m = \Gamma_\k$.
Using \eqref{eq:prooflastswitching1},
\begin{equation}
Z^{yy',0\pi y'}_{\LambdaR, \Tr}[ y'\in \Ccal(0) ]
\le \sum_{\substack{z\in \LambdaR}}  \sum_\gamma
	\sum_{ \del \k = \{y, y'\} } w(\k) \1 \{\Gamma_\k=\gamma\}
	\1 \{ z \connect{ \overline{\Gamma}_\k\:} y' \}
	Z^{\varnothing,0\pi y'}_{\LambdaR\setminus \bar{\gamma},\mathbb T_r}[z\in \mathcal C_ {\l + \hat \n}(0)].
\end{equation}
By Lemma~\ref{lemma: switching torus 1} with
$\Gcal = \LambdaR \setminus \bar \gamma$,
\begin{align} \label{eq:prooflastswitching3}
Z^{\varnothing,0\pi y'}_{\LambdaR\setminus \bar{\gamma},\mathbb T_r}[z\in \mathcal C_ {\l + \hat \n}(0)]
&=
Z^{0z,\pi z\pi y'}_{\LambdaR\setminus \bar{\gamma},\mathbb T_r}[z\in \mathcal C_ {\l + \hat \n}(0)]
\nnb &\leq
Z_{\LambdaR \setminus \bar \gamma}^{0z} Z^{\pi z\pi y'}_{\mathbb T_r}
=
Z_{\LambdaR \setminus \bar \gamma}^{\varnothing}
	\taub^{\LambdaR \setminus \bar \gamma}(0,z)
Z^{\pi z\pi y'}_{\mathbb T_r}
.
\end{align}
Then we use $\taub^{\LambdaR \setminus \bar \gamma} \le \taub^\LambdaR$
to obtain
\begin{align}
Z^{yy',0\pi y'}_{\LambdaR, \Tr}[ y'\in \Ccal(0) ]
&\le \sum_{\substack{z\in \LambdaR}} Z^{\pi z\pi y'}_{\mathbb T_r}
	\sum_\gamma
	\sum_{ \del \k = \{y, y'\} } w(\k) \1 \{\Gamma_\k=\gamma\}
	\1 \{ z \connect{ \overline{\Gamma}_\k\:} y' \}
	Z_{\LambdaR \setminus \bar \gamma}^{\varnothing}
	\taub^{\LambdaR}(0,z)	\nl
&= \sum_{\substack{z\in \LambdaR}} Z^{\pi z\pi y'}_{\mathbb T_r}
	\sum_\gamma
	Z^{yy'}_\LambdaR [ \Gamma_\m = \gamma,\,
		z \connect{ \overline{\Gamma}_\m\:} y' ]
	\taub^{\LambdaR}(0,z)	 \nl
&= \sum_{\substack{z\in \LambdaR}} Z^{\pi z\pi y'}_{\mathbb T_r}
	Z^{yy'}_\LambdaR [ z \connect{ \overline{\Gamma}_\m\:} y' ]
	\taub^{\LambdaR}(0,z)	 .\label{eq:last eq3.5}
\end{align}
For the middle factor on the right-hand side,
we introduce an auxiliary sourceless current $\m'$ on $\LambdaR$,
and use
Lemma~\ref{lem:std_switch}
with $\Gcal = \Hcal = \LambdaR$:
\begin{align}
Z^{yy'}_\LambdaR [ z \connect{ \overline{\Gamma}_\m\:} y' ]
\le \frac{ Z^{yy',\varnothing}_{\LambdaR,\LambdaR }
	[ z \connect{\m+\m'} y'] }
	{Z^{\varnothing}_{\LambdaR}}
=
    \frac{ Z^{yz,zy'}_{\LambdaR,\LambdaR }	 }
	{Z^{\varnothing}_{\LambdaR}}
= Z^{yz}_{\LambdaR}\tau_\beta^{\LambdaR}(z,y').
\end{align}
Inserting this into \eqref{eq:last eq3.5} yields
\begin{align}
\frac{ Z^{yy',0\pi y'}_{\LambdaR, \Tr}[ y'\in \Ccal(0) ] }
	{Z^{\varnothing,\varnothing}_{\LambdaR,\Tr}}
&\le \sum_{\substack{z\in \LambdaR}}
	\frac{ Z^{\pi z\pi y'}_{\mathbb T_r} } { Z^\varnothing_\Tr }
	\frac{ Z^{yz}_{\LambdaR} } { Z^\varnothing_\LambdaR }
	\tau_\beta^{\LambdaR}(z,y')
	\taub^{\LambdaR}(0,z)	\nl
&= \sum_{\substack{z\in \LambdaR}}
	\taub^\Tr(\pi y' - \pi z) \taub^\LambdaR( z, y)
	\tau_\beta^{\LambdaR}(z,y')
	\taub^{\LambdaR}(0,z)	.
\end{align}
With \eqref{eq:tree_pf0} and \eqref{eq:switching yy'}, this concludes the proof.
\end{proof}

\appendix

\section{Torus convolution estimates}
\label{sec:convolution}

We present two elementary convolution lemmas.
The first is a special case of \cite[Lemma~3.4]{MS23}.
We include the simple proof for completeness, and as an opportunity to
correct an omission in the proof of \cite[Lemma~3.4]{MS23}
(repeated in \cite[Lemma~4.2]{MPS23}) which however does not change the conclusion of those lemmas.

\begin{lemma}
\label{lem:unifmassint}
Let $d \ge 1$, $r \ge 1$,
$a >0$ and $\mu >0$.  There
is a constant $C = C(d,a)>0$
such that
\begin{align}
	\sum\limits_{u \in\Z^d : u \neq 0} \frac{1}{\|x + r u\|_\infty^{d-a}}e^{- \mu \|x+ru\|_\infty}
	&\leq
    C
    e^{-\frac 14 \mu r}
    \frac{1}{\mu^a r^d}
     \qquad
     (x\in \Lambda_r)
     .
\end{align}
\end{lemma}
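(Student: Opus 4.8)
The plan is to estimate the sum by grouping the terms $u \neq 0$ according to the $\ell^\infty$-shell they lie in, and to use the elementary geometric bounds on $\|x+ru\|_\infty$ available because $x \in \Lambda_r$. First I would record that for $x \in \Lambda_r$ we have $\|x\|_\infty \le r/2$, so for every $u \neq 0$,
\begin{equation}
\tfrac{1}{2}r\|u\|_\infty \le \|x+ru\|_\infty \le \tfrac{3}{2}r\|u\|_\infty,
\end{equation}
which is exactly \eqref{eq:xuub}--\eqref{eq:xulb}. These inequalities convert the summand into something depending only on $\|u\|_\infty$: the polynomial factor $\|x+ru\|_\infty^{-(d-a)}$ is bounded above by $(r\|u\|_\infty/2)^{-(d-a)}$, and the exponential factor $e^{-\mu\|x+ru\|_\infty}$ is bounded above by $e^{-\mu r\|u\|_\infty/2}$.

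Next I would split that last exponential as $e^{-\mu r\|u\|_\infty/2} = e^{-\mu r/4}\, e^{-\mu r\|u\|_\infty/2 + \mu r /4} \le e^{-\mu r/4}\, e^{-\mu r\|u\|_\infty/4}$, valid since $\|u\|_\infty \ge 1$ on the range of summation. This pulls out the advertised prefactor $e^{-\mu r/4}$ and leaves
\begin{equation}
\sum_{u \neq 0}\frac{1}{\|x+ru\|_\infty^{d-a}}e^{-\mu\|x+ru\|_\infty}
\le 2^{d-a}\,e^{-\mu r/4}\,\frac{1}{r^{d-a}}\sum_{u\neq 0}\frac{1}{\|u\|_\infty^{d-a}}e^{-\mu r\|u\|_\infty/4}.
\end{equation}
Now I would handle the remaining lattice sum by shells: the number of $u\in\Z^d$ with $\|u\|_\infty = k$ is at most $C_d k^{d-1}$, so the sum is bounded by $C_d\sum_{k\ge 1} k^{d-1} k^{-(d-a)} e^{-\mu r k/4} = C_d \sum_{k \ge 1} k^{a-1} e^{-\mu r k/4}$. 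A standard estimate (comparing with an integral, or using $\sum_{k\ge1}k^{a-1}e^{-\lambda k} \lesssim_a \lambda^{-a}$ for $\lambda>0$) gives a bound $C(d,a)(\mu r)^{-a}$. Combining, the right-hand side becomes $C(d,a)\,e^{-\mu r/4}\, r^{-(d-a)}(\mu r)^{-a} = C(d,a)\,e^{-\mu r/4}\,\mu^{-a} r^{-d}$, which is the claim.

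\textbf{Main obstacle.}
There is no serious obstacle; the only point requiring a little care — and presumably the omission referred to in the remark preceding the lemma — is the treatment of the shell sum when $\mu r$ is \emph{small}. In that regime one cannot simply bound $\sum_{k\ge1}k^{a-1}e^{-\mu r k/4}$ by its first term; one genuinely needs the integral comparison $\sum_{k \ge 1} k^{a-1} e^{-\lambda k} \le C_a \lambda^{-a}$ (for instance by splitting at $k \approx 1/\lambda$, or by $\sum_k k^{a-1}e^{-\lambda k} \le \int_0^\infty t^{a-1}e^{-\lambda t/2}\,dt \cdot C_a$), and it is this uniform-in-$\mu$ bound that the constant $C(d,a)$ must absorb. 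Everything else is the routine shell-counting and the geometric inequalities \eqref{eq:xuub}--\eqref{eq:xulb}.
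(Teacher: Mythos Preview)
Your proposal is correct and follows essentially the same approach as the paper: reduce to shells via the lower bound $\|x+ru\|_\infty \ge \tfrac12 r\|u\|_\infty$, extract the prefactor $e^{-\mu r/4}$, and then bound $\sum_{k\ge 1} k^{a-1}e^{-\mu r k/4} \lesssim_a (\mu r)^{-a}$. The paper's only additional content is a careful case analysis of this last sum (splitting $a\le 1$ versus $a>1$, and in the latter case handling the unimodal summand by integrals plus the peak value), which is precisely the point you flag as the ``main obstacle'' and is indeed the omission being corrected.
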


\begin{proof}
For all nonzero $u \in \Z^d$ and $r\geq 1$,
we have
$\norm {x + r u}_\infty \geq \frac12 \|ru\|_\infty$ by \eqref{eq:xulb}.
Hence,
\begin{align}
	\sum_{u \in \Z^d: u \neq 0}
    \frac{1}{\|x + r u\|_\infty^{d-a}}e^{-\mu \|x+ru\|_\infty}
	&\le 2^{d-a} \sum_{N = 1}^\infty \sum_{u:\|u\|_\infty =N}
    \frac{1}{\|ru\|_\infty^{d-a}}e^{-\frac 12 \mu \|ru\|_\infty}
    \nnb
&\lesssim r^{a-d} e^{-\frac 14 \mu r}
    \sum_{N = 1}^\infty  N^{a-1} e^{-\frac 14 \mu rN}
	.
\end{align}
We set $f(x) = x^{a-1} e^{-bx}$ for $x>0$, and $b = \frac 1 4 \mu r >0$,
so the goal is to prove that $\sum_{N=1}^\infty f(N) \lesssim b^{-a}$.
If $a \in (0, 1]$ then $f$ is a decreasing function, so the sum is bounded by $\int_0^\infty f(x) \mathrm dx = b^{-a} \Gamma(a)$.
If $a > 1$ then $f(x)$ is increasing for $x < x_0 = \frac{a-1}{b}$ and is decreasing for $x > x_0$. We set $N_0 = \lfloor x_0 \rfloor$ and bound the sum by
\begin{equation}
\bigg( \int_1^{N_0} f(x) \mathrm dx + f(N_0) \bigg) \1_{N_0 \ge 1}
+ f(N_0+1) +  \int_{N_0+1}^\infty f(x) \mathrm dx .
\end{equation}
The sum of the two integrals is again bounded by $\int_0^\infty f(x) \mathrm dx = b^{-a} \Gamma(a)$.
For the other terms,
if $N_0 = 0$ we use that $f(N_0+1) = f(1) = e^{-b} \le c_a b^{-a}$ for all $b>0$.
If $N_0 \ge 1$, we use
\begin{equation}
f(N_0), f(N_0+1) \le
\max_{x\ge0} f(x)
= \frac 1 {b^{a-1}} \max_{t\ge0} t^{a-1} e^{-t}
\le \frac{ a-1} {b^a} c_a,
\end{equation}
where the last inequality is due to $b \le a-1$ since $x_0 \ge N_0 \ge 1$.
This completes the proof.
\end{proof}

The next lemma is a minor extension of \cite[Lemma~3.6]{MS23}.
The definition of $\Tbr(x)$ is in \eqref{eq:Tsum}, $\star$ denotes
convolution on $\T_r$ and $*$ denotes convolution on $\Z^d$.

\begin{lemma}
\label{lem:GT3}
Let $d>4$.
There is a constant $C>0$ such that
for all $r\ge 3$, $\beta < \beta_c$, and $x\in \Tr$,
\begin{align}
\label{eq:plateau-ubA}
	\Tbr(x) &\leq \tau_\beta(x) + C\frac{\chi(\beta)}{r^d} , \\
\label{eq:GGstar}
    (\Tbr \star \Tbr )(x)
    &\le (\tau_\beta * \tau_\beta) (x) + C\frac{\chi(\beta)^2}{r^d}
    ,
    \\
\label{eq:GG2Gstar}
    (\big(\Tbr\big)^2 \star \Tbr \star \Tbr) (x)
    &\le (\taub^2 * \taub * \taub) (x) + C \frac{ \chib^2 }{r^d}
	+ C \bigg( \frac{ \chib^2 }{r^d} \bigg)^2
    .
\end{align}
\end{lemma}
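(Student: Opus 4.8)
The plan is to prove \eqref{eq:plateau-ubA}, \eqref{eq:GGstar}, \eqref{eq:GG2Gstar} in that order, combining a periodization identity with Lemma~\ref{lem:unifmassint}. Set $\mu=c_0(\beta_c-\beta)^{1/2}$, so \eqref{eq:DPub} gives $\taub(x)\le C_0\nnnorm{x}^{-(d-2)}e^{-\mu\|x\|_\infty}$ and \eqref{eq:chibds} gives $\mu^{-2}\asymp\chib$. For summable $f:\Z^d\to\R$ write $\widehat f(x)=\sum_{u\in\Z^d}f(x+ru)$ for its periodization, so $\Tbr=\widehat\taub$; periodicity plus reindexing yields $\widehat{f_1}\star\cdots\star\widehat{f_k}=\widehat{f_1*\cdots*f_k}$. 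In particular $\Tbr\star\Tbr=\widehat{\taub*\taub}$, and, writing $q=\big(\Tbr\big)^2\1_{\Lambda_r}$ so that $\widehat q=\big(\Tbr\big)^2$, one has $\big(\Tbr\big)^2\star\Tbr\star\Tbr=\widehat{q*\taub*\taub}=\sum_u (q*\taub*\taub)(\cdot+ru)$. I will also use a convolution estimate carrying exponential decay: splitting $\sum_y$ according to whether $\|y\|_\infty$ or $\|z-y\|_\infty$ exceeds $\tfrac12\|z\|_\infty$ and applying \cite[Proposition~1.7]{HHS03} to the polynomial factors gives, for $d>4$ and a constant $c>0$, uniformly in $\beta\le\beta_c$,
\[ (\taub*\taub)(z)\lesssim \nnnorm{z}^{-(d-4)}e^{-c\mu\|z\|_\infty},\qquad (\taub^2*\taub*\taub)(z)\lesssim \nnnorm{z}^{-(d-4)}e^{-c\mu\|z\|_\infty}, \]
where $\taub^2$ is summable because $2(d-2)>d$; also $\sup_z(\taub*\taub)(z)\le\sup_z(\tau_{\beta_c}*\tau_{\beta_c})(z)<\infty$.

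Bound \eqref{eq:plateau-ubA} is immediate: $\Tbr(x)-\taub(x)=\sum_{u\neq0}\taub(x+ru)\le C_0\sum_{u\neq0}\|x+ru\|_\infty^{-(d-2)}e^{-\mu\|x+ru\|_\infty}$, which by Lemma~\ref{lem:unifmassint} with $a=2$ is $\lesssim\mu^{-2}r^{-d}\asymp\chib/r^d$; this also reproves Proposition~\ref{prop:psiub}. For \eqref{eq:GGstar}, the periodization identity gives $(\Tbr\star\Tbr)(x)=(\taub*\taub)(x)+\sum_{u\neq0}(\taub*\taub)(x+ru)$; inserting the pointwise bound on $\taub*\taub$ and applying Lemma~\ref{lem:unifmassint} (with $a=4$ and exponential rate $c\mu$) bounds the tail by $\lesssim\mu^{-4}r^{-d}\asymp\chib^2/r^d$.

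For \eqref{eq:GG2Gstar}, \eqref{eq:plateau-ubA} gives, for $y\in\Lambda_r$, $q(y)=\Tbr(y)^2=(\taub(y)+\varepsilon_r(y))^2$ with $0\le\varepsilon_r(y)\le C_1\chib/r^d$, hence $q\le\taub^2+\frac{2C_1\chib}{r^d}\taub+\big(\frac{C_1\chib}{r^d}\big)^2\1_{\Lambda_r}$ on $\Z^d$. Convolving with $\taub*\taub$ and summing over the periodization, $(\big(\Tbr\big)^2\star\Tbr\star\Tbr)(x)$ is bounded by the sum of the three terms $\sum_u(\taub^2*\taub*\taub)(x+ru)$, $\ \frac{2C_1\chib}{r^d}\sum_u(\taub*\taub*\taub)(x+ru)$, and $\ \big(\frac{C_1\chib}{r^d}\big)^2\sum_u(\1_{\Lambda_r}*\taub*\taub)(x+ru)$. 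The last sum equals $\chib^2$ because $(u,y)\mapsto x+ru-y$ is a bijection from $\Z^d\times\Lambda_r$ onto $\Z^d$, so that term is $C_1^2(\chib^2/r^d)^2$. The middle term: $\sum_u(\taub*\taub*\taub)(x+ru)=(\Tbr\star\Tbr\star\Tbr)(x)\le\|\Tbr\star\Tbr\|_\infty\|\Tbr\|_1$, and since $\|\Tbr\|_1=\chib$ and \eqref{eq:GGstar} gives $\|\Tbr\star\Tbr\|_\infty\lesssim 1+\chib^2/r^d$, this is $\lesssim\chib^2/r^d+(\chib^2/r^d)^2$. The first term is $(\taub^2*\taub*\taub)(x)+\sum_{u\neq0}(\taub^2*\taub*\taub)(x+ru)$, and the tail, via the pointwise bound from the first paragraph and Lemma~\ref{lem:unifmassint} with $a=4$, is $\lesssim\mu^{-4}r^{-d}\asymp\chib^2/r^d$. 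Collecting these yields \eqref{eq:GG2Gstar}.

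The main obstacle is the convolution estimate of the first paragraph: both the polynomial factor $\nnnorm{z}^{-a}$ and the exponential factor $e^{-\mu\|z\|_\infty}$ must be propagated through the iterated convolution, with constants uniform in $\beta\le\beta_c$. The exponential factor is essential, since $\sum_{u\neq0}\nnnorm{x+ru}^{-(d-4)}$ diverges; it is precisely what lets Lemma~\ref{lem:unifmassint} convert such a tail sum into the gain $\mu^{-a}r^{-d}$. The remaining work is routine bookkeeping to check that every error term generated in \eqref{eq:GG2Gstar} has the form $\chib^2/r^d$ or $(\chib^2/r^d)^2$.
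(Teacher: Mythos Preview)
Your proof is correct and follows essentially the same route as the paper: both arguments use the periodization identity $\widehat{f_1}\star\cdots\star\widehat{f_k}=\widehat{f_1*\cdots*f_k}$, the convolution estimate $(\tau*\tau)(z),\,(\tau^2*\tau*\tau)(z)\lesssim\nnnorm{z}^{-(d-4)}e^{-c\mu\|z\|_\infty}$ via \cite{HHS03}, Lemma~\ref{lem:unifmassint} to control the $u\neq 0$ tails, and the expansion of $\big(\Tbr\big)^2$ through \eqref{eq:plateau-ubA}. Your packaging via $q=(\Tbr)^2\1_{\Lambda_r}$ is a minor notational variant of the paper's use of $\tau\circ\pi$; the three resulting terms and their bounds match line by line.
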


\begin{proof}
We suppress the subscript $\beta$ and superscript $r$ in the notation,
and write $m=(\beta_c-\beta)^{1/2}$.  Then $m^{-2} \asymp \chi$ by \eqref{eq:chibds}.
For \eqref{eq:plateau-ubA}, we separate the $u = 0$ term
in the definition \eqref{eq:Tsum} of $T(x)$,
and then use \eqref{eq:DPub} and the $a=2$ case of Lemma~\ref{lem:unifmassint}, to obtain
\begin{align}
\label{eq:GammachiV}
	T(x) &\leq  \tau(x) + \sum_{u \neq 0}
	\frac{C_0}{\|x + r u\|_\infty^{d-2}}e^{-c_0 m \|x+ru\|_\infty}
	\leq \tau(x) + C\frac{\chi}{r^d}.
\end{align}

For \eqref{eq:GGstar}, we first observe that
\begin{align}
    (T \star T )(x)
    & =
    \sum_{y \in \Tr} \sum_{u,v \in \Z^d} \tau (y+ru) \tau (x-y + rv)
    \nnb & =
    \sum_{w\in\Z^d} \sum_{y \in \Tr} \sum_{v \in \Z^d} \tau (y-rv+rw) \tau (x-y + rv)
     =
    \sum_{w\in \Z^d} (\tau *\tau )(x-rw)
    ,
\label{eq:GamGam}
\end{align}
where in the second equality we replaced $u$  by $w-v$, and in the third we observe
that $y-rv$ ranges over $\Z^d$ under the indicated summations.
The $w=0$ term is $(\tau * \tau) (x)$.
For the remaining terms, we use the convolution estimate of \cite[Proposition~1.7]{HHS03} and \eqref{eq:DPub} to get
$(\tau *\tau)(z) \lesssim \nnnorm z ^{-(d-4)}\exp[-c_{0}m\|z\|_\infty]$.
We then apply Lemma~\ref{lem:unifmassint} with $a=4$ to see that
\begin{equation}
\label{eq:tautaubd}
    \sum_{w\neq 0} (\tau *\tau )(x-rw)
    \lesssim
    \frac{\chi^2}{r^d} ,
\end{equation}
which gives \eqref{eq:GGstar}.

For \eqref{eq:GG2Gstar},
by applying the inequality $T \le \tau \circ \pi  + C \chi / r^d$ from \eqref{eq:plateau-ubA}, we get three terms
by expanding the $T^2$ in $(T^2 \star T \star T)(x)$.
The first term is
\begin{align}
    ((\tau \circ \pi)^2 \star T \star T) (x)
    & =
    \sum_{y,z\in \T_r}\sum_{v,w\in\Z^d}
    \tau^2(y)\tau(z+rw-y-rv)\tau(x-z-rw)
    \nnb & =
    \sum_{y\in \T_r}\sum_{v\in\Z^d}
    \tau^2(y)(\tau*\tau)(x-y-rv)
    \nnb &
    \le \sum_{u\in \Zd} (\tau^2 * \tau * \tau)(x+ru) .
\end{align}
Since $d>4$, by \cite[Proposition~1.7]{HHS03},
$\tau^2*\tau*\tau$ obeys the same upper bound as the bound used
on $\tau*\tau$ in the proof of \eqref{eq:GGstar}, so we again
have a bound $\chi^2/r^d$ for the sum over nonzero $u$.
The cross term is
\begin{equation}
2C \frac{\chi}{r^d} ( (\tau \circ \pi) \star T \star T )(x)
\lesssim \frac{\chi}{r^d} \norm{ \tau }_{L^1(\Tr)} \norm{ T \star T }_\infty
\lesssim \frac{\chi}{r^d}\chi \Big( 1 + \frac{\chi^2}{r^d} \Big),
\end{equation}
by \eqref{eq:GGstar} together with the fact that $\|\tau*\tau\|_\infty \lesssim 1$ when $d>4$.
The last term is
\begin{equation}
\Big( C \frac{\chi}{r^d} \Big)^2 ( 1 \star T \star T )(x)
\lesssim \frac {\chi^2}{r^{2d}} \norm 1_\infty \norm T_{L^1(\Tr)}^2
= \frac {\chi^2}{r^{2d}} \chi^2,
\end{equation}
since $\norm T_{L^1(\Tr)} = \chi$.
This completes the proof.
\end{proof}

\section{Proof of Proposition~\ref{prop:u4}}
\label{sec:u4proof}

We now prove Proposition~\ref{prop:u4}.
Although this differential inequality is essentially the same as \cite[(12.118)]{FFS92},
the justification in \cite{FFS92} involves an unpublished work
of Fern\'andez, and to our knowledge a proof has not been published
elsewhere.  We therefore present a proof here.
We work on the torus and
to simplify the notation we drop the labels $\beta$ and $\mathbb T_r$.
We apply the \emph{dilution trick}, following the strategy of \cite{AF86}.

We begin with a reduction to the following proposition.
Although we only need the nearest-neighbour case, we
prove the result for a general Hamiltonian
\begin{equation}
    H (\sigma) = - \!\!\!\! \sum_{\{x,y\}\subset \Tr} J_{x,y}\sigma_x \sigma_y,
\end{equation}
where $J_{x,y} \ge 0$ is symmetric and translation invariant.
We denote the associated measure as $\langle \cdot\rangle$.
We write $|J|=\sum_{x\in \T_r}J_{0,x}$.  For the nearest-neighbour model, $|J|=2d$.
The weights defined in \eqref{eq:wZ-def} now become replaced by
\begin{equation}
\label{eq:wJ}
    w(\n)
    =
    \prod_{\{x,y\} \subset \Tr}\dfrac{(\beta J_{x,y})^{\n_{xy}}}{\n_{xy}!} .
\end{equation}
Recall that the torus bubble diagram is $\bubble(\beta)=\sum_{x\in\T_r}\tau_\beta(x)^2$.

\begin{proposition}
\label{prop:U4bd}
Let $d\ge 1$.
For all $\beta>0$, for all $r\geq 3$,
and for all $p\in[0,1]$,
\begin{equation} \label{eq:U4bd}
\sum_{x,y,z\in \mathbb T_r} |U_4(0,x,y,z)|
\geq
    2p \chi^2 \beta\frac{\partial \chi}{\partial \beta}
    - 32
    p^2\beta^2|J|^2\chi^4 \bubble.
\end{equation}
\end{proposition}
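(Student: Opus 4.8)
The plan is to prove \eqref{eq:U4bd}; the deduction of \eqref{eq:u4claim} from it is then a routine optimisation in $p$. The right-hand side of \eqref{eq:U4bd} is a concave quadratic in $p$ with unconstrained maximiser $p_* = \partial_\beta\chi/(32\beta|J|^2\chi^2\bubble)$: when $p_*\le 1$, the choice $p=p_*$ gives $\sum_{x,y,z}|U_4(0,x,y,z)| \ge (\partial_\beta\chi)^2/(32|J|^2\bubble)$, and when $p_*>1$ the choice $p=1$ gives, since then $32\beta^2|J|^2\chi^4\bubble < \chi^2\beta\partial_\beta\chi$, the bound $\sum_{x,y,z}|U_4(0,x,y,z)| > \chi^2\beta\partial_\beta\chi$; as $|J|=2d$, in either case the right-hand side of \eqref{eq:u4claim} follows. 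So the work is entirely in \eqref{eq:U4bd}, which I would prove by the dilution trick of \cite{AG83, AF86} in the random current representation.

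First I would record the two current representations. By the switching lemma (Lemma~\ref{lem:std_switch}) and \eqref{equation correlation rcr}, the truncated four-point sum can be written as $\sum_{x,y,z}|U_4(0,x,y,z)| = 2\sum_{x,y,z}\tau(0,x)\tau(y,z)\,\mathbf{P}_{\{0,x\},\{y,z\}}[0\connect{\m_1+\m_2}y]$, where $\mathbf{P}_{A,B}$ denotes the normalised measure on pairs of currents $(\m_1,\m_2)$ with $\partial\m_1=A$, $\partial\m_2=B$ and weights as in \eqref{eq:wJ}. On the other side, differentiating $\chi=\sum_x\tau(0,x)$ and using $-\partial_\beta H = \sum_{\{u,v\}}J_{u,v}\sigma_u\sigma_v$ gives $\beta\partial_\beta\chi = \sum_x\sum_{\{u,v\}}\beta J_{u,v}[\langle\sigma_0\sigma_x\sigma_u\sigma_v\rangle - \tau(0,x)\tau(u,v)]$, each summand being non-negative by the second Griffiths inequality. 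In current language this summand is the contribution of a single ``bridge bond'' $\{u,v\}$, carrying a weight $\beta J_{u,v}$, inserted into a current with sources $\{0,x\}$; this is the shape that the linear term of the argument will reproduce.

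Next I would run the dilution. On the event $\{0\connect{\m_1+\m_2}y\}$, the backbone $\Gamma_{\m_2}$ (Definition~\ref{def:backbone}), a self-avoiding path from $y$ to $z$, must contain at least one bond incident to the cluster $\mathcal C_{\m_1}(0)$. Introducing an independent Bernoulli$(p)$ bond configuration $\eta$ and keeping only those incidences whose bond is $\eta$-open, a Bonferroni (inclusion--exclusion) estimate gives $\mathbf{P}_{\{0,x\},\{y,z\}}[0\connect{\m_1+\m_2}y] \ge p\Sigma_1 - p^2\Sigma_2$, where $\Sigma_1$ sums over single incidence bonds, $\Sigma_2$ over unordered pairs of them, and the contributions of order higher than $p^2$ are non-negative and hence can be dropped. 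For $\Sigma_1$: a switching on the retained bond $e=uv$ detaches it, producing the factor $\beta J_{u,v}$, and turns ``$u\in\mathcal C_{\m_1}(0)$'' into a source at $u$; the backbone ends $y,z$ and the source-partner $x$ of $0$ are then free, and summing them over $\Tr$ yields two factors $\chi$ (by translation invariance) multiplying the bridge-bond expression above, so that $\sum_{x,y,z}\tau(0,x)\tau(y,z)\Sigma_1 = \chi^2\beta\partial_\beta\chi$ and, with the prefactor $2$, the linear term equals $2p\chi^2\beta\partial_\beta\chi$. For $\Sigma_2$: switching on two incidence bonds detaches two bonds (two factors $\beta J$, summed with $\sum_v J_{u,v}=|J|$), the two-incidence constraint forces the cluster $\mathcal C_{\m_1}(0)$ to revisit a point, which is controlled by the bubble $\bubble=\sum_x\tau(0,x)^2$, and re-summing the now four free endpoints produces four factors $\chi$; tracking the combinatorial constants gives $2\sum_{x,y,z}\tau(0,x)\tau(y,z)\Sigma_2 \le 32\beta^2|J|^2\chi^4\bubble$. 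Combining the two estimates yields \eqref{eq:U4bd}.

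The main obstacle is the bookkeeping in the dilution step. The dilution and the successive applications of Lemma~\ref{lem:std_switch} must be arranged so that the linear coefficient comes out as $\chi^2\beta\partial_\beta\chi$ rather than the larger $\beta|J|\chi^2\cdot\chi^2$ that a crude tree-graph bound would produce: the constraint that singles out which retained incidence bond is counted has to survive the switching and reconstruct exactly the non-negative summand of $\beta\partial_\beta\chi$ identified above. One must also check that all contributions of order higher than $p^2$ have a definite sign, so that they may be discarded in a lower bound, and that coincidences among $0,x,y,z,u,v$ contribute only lower-order terms. The torus setting is in fact a convenience here: translation invariance makes the free summations over $\Tr$ clean and the finiteness of the graph removes any convergence concern, so no new difficulty arises relative to the arguments of \cite{AG83, AF86} on $\Zd$.
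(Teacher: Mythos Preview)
Your high-level strategy --- dilution producing a linear-minus-quadratic lower bound in $p$ --- is the right one, and the references to \cite{AG83,AF86} are apt. But the geometric step on which everything rests is wrong. The claim that on $\{0\connect{\m_1+\m_2}y\}$ the backbone $\Gamma_{\m_2}$ must contain a bond incident to $\mathcal C_{\m_1}(0)$ is false: the connection from $0$ to $y$ may run entirely through sourceless (even-parity) loops of $\m_2$, while the backbone of $\m_2$ (which uses only odd edges) and the $\m_1$-cluster of $0$ sit in disjoint regions. If instead you meant the cluster of $0$ in the full double current $\m_1+\m_2$, the claim becomes vacuous --- every backbone bond is then incident, since $y$ already lies in that cluster --- and your Bonferroni count degenerates. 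Either way the inclusion fails, and with it the inequality $\mathbf P^{0x,yz}[0\connect{}y]\ge p\Sigma_1-p^2\Sigma_2$. Relatedly, the assertion that ``contributions of order higher than $p^2$ are non-negative and hence can be dropped'' is not justified by Bonferroni alone; in the actual argument the signs come from correlation inequalities, not from alternating inclusion--exclusion.

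The paper's argument (essentially \cite{AF86}) is organised differently. It first \emph{peels off} the double-current cluster: conditioning on $\mathbf C^b(0)=H$ under $\mathbf P^{0x,yz}$ replaces the $\{y,z\}$-sourced current by the free two-point function $\langle\sigma_y\sigma_z\rangle^{H^c}$, reducing to the measure $\mathbf P^{0x,\varnothing}$ with a \emph{sourceless} second current --- this is the manoeuvre you are missing. The second Griffiths inequality then allows replacing $H$ by $H\cap E$ for any fixed $E$, and the fundamental theorem of calculus in a parameter $\theta\in[0,1]$ interpolating between $J$ and $J$ with bonds of $\mathbf C^b(0)\cap E$ switched off is applied twice: one application gives a term $T_1$ (a single bond $\{u,v\}\in E$ with $u$ or $v\in\mathbf C(0)$, multiplied by $\langle\sigma_y\sigma_z;\sigma_u\sigma_v\rangle$), the second gives the remainder $T_2$ (two such bonds, multiplied by a triple truncated correlation). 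Only then is $E=\omega$ taken Bernoulli$(p)$; independence yields one factor $p$ per bond constrained to lie in $E$, producing the $p$ and $p^2$. Switching enters only at the end, to bound the summed $T_1$ from below by $p\chi^2\beta\partial_\beta\chi$ (via $\langle\sigma_0\sigma_x\rangle\mathbf P^{0x,\varnothing}[0\connect{}u]=\langle\sigma_0\sigma_u\rangle\langle\sigma_u\sigma_x\rangle$) and the summed $T_2$ from above by $16p^2\beta^2|J|^2\chi^4\bubble$ (via tree-graph bounds on the triple truncated correlation and on the two-point incidence probability).
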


\begin{proof}
[Proof of Proposition~\textup{\ref{prop:u4}}]
The lower bound in \eqref{eq:U4bd} can be written as
\begin{equation}
    ap-bp^2 \quad \text{with} \quad
    a = 2 \chi^2 \beta\frac{\partial \chi}{\partial \beta} ,
    \quad
    b= 32 \beta^2 |J|^2\chi^4 \bubble.
\end{equation}
If $b \le \frac a2$ then we take $p=1$ and in this case we obtain
\begin{equation}
\label{eq:chderiv-lb-new1}
	\sum_{x,y,z\in \mathbb T_r}|U_4(0,x,y,z)|
    \geq
    a-b \ge
    \frac{a}{2} = \chi^2 \beta \frac{\partial \chi}{\partial \beta} .
\end{equation}
If instead $b> a/2$,
then $ap-bp^2$ is maximal at $p=\frac{a}{2b}\in (0,1)$, with maximum
$\frac{a^2}{4b}$.  We use this value of $p$ in \eqref{eq:U4bd}, and obtain
\begin{equation}
\label{eq:chderiv-lb-new2}
	\sum_{x,y,z\in \mathbb T_r}|U_4(0,x,y,z)|
    \geq
    \frac{a^2}{4b} =
    \frac 1 { 32
    |J|^2 \bubble}
    \Big(\frac{\partial \chi}{\partial \beta}\Big)^2.
\end{equation}
Then \eqref{eq:u4claim} follows by choosing the worst alternative in
\eqref{eq:chderiv-lb-new1}--\eqref{eq:chderiv-lb-new2}.
\end{proof}

It remains to prove Proposition~\ref{prop:U4bd}.
Given a vertex set $A\subset \T_r$,
we define a probability measure $\mathbf{P}^A$ on
the set
$\Omega$ of currents on $\T_r$ by
\begin{equation}
    \mathbf{P}^A[\n]
    =
    \1_{\sn=A}\frac{w(\n)}{Z^A}
    \qquad (\n \in \Omega),
\end{equation}
where $Z^A=\sum_{\sn=A}w(\n)$ is the normalisation constant.
Given two vertex sets $A_1,A_2$, we define the product measure
\begin{equation}
	\mathbf P^{A_1,A_2}=\mathbf P^{A_1}\otimes\mathbf P^{A_2},
\end{equation}
and we write $\mathbf E^{A_1,A_2}$ for expectation with respect
to $\mathbf P^{A_1,A_2}$.

We write $\mathcal P_2 = \mathcal P_2(\Tr)$ for the set of all unordered pairs
of vertices in $\T_r$.
Let $\mathbf C(u)$ denote the vertex cluster of $u\in\T_r$ in the
percolation configuration constructed from a double current $\n_1 + \n_2$, and
let $\overline{\mathbf C}(u)$ denote the set of
pairs $\{s,t\}\in \mathcal P_2$ such that either $s$ or $t$ belongs to
$\bfC(u)$.
Given $\theta \in [0,1]$ and $E\subset \mathcal P_2$, we introduce the (random) interaction
$(J^E_{u,v}(\theta))_{\{u,v\}\in \mathcal P_2}$
defined by
\begin{equation}
    J^E_{u,v}(\theta)
    =
    \begin{cases}
    \theta J_{u,v} & (\{u,v\} \in \overline{\mathbf C}(0)\cap E)
     \\
     J_{u,v} & (\{u,v\} \not\in \overline{\mathbf C}(0)\cap E) .
     \end{cases}
\end{equation}
We write
$\langle \cdot\rangle_\theta$ for the Ising expectation
with the random interaction $J^E(\theta)$.
Given a subset $G \subset \mathcal P_2$, we regard $G$ as a graph whose edges
consist of the elements of $G$ and whose vertices are the vertices contained in the
elements of $G$, and we write $\langle \cdot \rangle^G$ for the Ising expectation
on the graph $G$ with interaction $J$.
By definition, $\langle \cdot\rangle_1=\langle \cdot\rangle$ and $\langle \cdot \rangle_0=\langle \cdot \rangle^{(\overline{\mathbf C}(0)\cap E)^c}$.
According to our convention, the graph $(\overline{\mathbf C}(0)\cap E)^c$ consists
of the edges not in $\overline{\mathbf C}(0)\cap E$, together with the vertices comprising
those edges.

We start with a lower bound on $U_4$ expressed in terms of $T_1,T_2$ defined for $x,y,z \in \mathbb T_r$ and $E\subset \mathcal P_2$ by
\begin{align}
T_1(x,y,z,E)
&= \beta \sum_{\{u,v\} \in E} J_{u,v}
	\langle \sigma_0\sigma_x\rangle
	\mathbf P^{0x,\varnothing}[u\textup{ or }v\in \mathbf C(0)]
	\langle \sigma_y\sigma_z;\sigma_u\sigma_v\rangle,
\\
T_2(x,y,z,E)
&= \beta \sum_{\{u,v\} \in E} J_{u,v}
	\langle \sigma_0\sigma_x\rangle
\mathbf E^{0x,\varnothing} \bigg[\1\{u\textup{ or }v\in \mathbf C(0)\}
\int_0^1  \theta \;
	\partial_{\theta} \langle \sigma_y\sigma_z;\sigma_u\sigma_v\rangle_{\theta}
    \mathrm{d}\theta
    \bigg] .
\end{align}

\begin{lemma}
\label{lem:UTT}
For all $x,y,z \in \mathbb T_r$, and for any set $E\subset \mathcal P_2$,
\begin{equation}
\label{eq:UTT}
    \half |{U_4}(0,x,y,z)|\ge T_1(x,y,z,E) - T_2(x,y,z,E).
\end{equation}
\end{lemma}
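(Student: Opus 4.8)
\medskip\noindent
The plan is to prove \eqref{eq:UTT} in three steps. In the first I establish the exact identity
\begin{equation}
\label{eq:UTTplan-id}
T_1(x,y,z,E)-T_2(x,y,z,E)=\langle\sigma_0\sigma_x\rangle\,\mathbf E^{0x,\varnothing}\big[\langle\sigma_y\sigma_z\rangle-\langle\sigma_y\sigma_z\rangle^{(\mathbf C^b(0)\cap E)^c}\big],
\end{equation}
where, as in the surrounding text, $\mathbf C(0)$ is the cluster of $0$ in the percolation configuration of $\n_1+\n_2$ under $\mathbf P^{0x,\varnothing}$. In the second step a Griffiths monotonicity estimate removes the dependence on $E$ and reduces the lemma to
\begin{equation}
\label{eq:UTTplan-crux}
\langle\sigma_0\sigma_x\rangle\,\mathbf E^{0x,\varnothing}\big[\langle\sigma_y\sigma_z\rangle-\langle\sigma_y\sigma_z\rangle^{(\mathbf C^b(0))^c}\big]\le\half\,|U_4(0,x,y,z)| .
\end{equation}
The third step proves \eqref{eq:UTTplan-crux} by a random-current argument with the switching lemma, and is the technical core.

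\smallskip
\emph{Step 1.}
Fix a realisation of $(\n_1,\n_2)$ with $\sn_1=\{0,x\}$ and $\sn_2=\varnothing$, so that $\mathbf C^b(0)$, the graph $(\mathbf C^b(0)\cap E)^c$, and the interpolating measures $\langle\cdot\rangle_\theta$ are all determined. Since the effective coupling $J^E_{u,v}(\theta)$ equals $\theta J_{u,v}$ precisely on $\{u,v\}\in\mathbf C^b(0)\cap E$ and is $\theta$-independent otherwise, the standard formula for the derivative of an Ising expectation in a coupling constant (cf.\ \eqref{eq:def_Ising}) gives
\begin{equation}
\label{eq:UTTplan-deriv}
\partial_\theta\langle\sigma_y\sigma_z\rangle_\theta=\beta\sum_{\{u,v\}\in\mathbf C^b(0)\cap E}J_{u,v}\,\langle\sigma_y\sigma_z;\sigma_u\sigma_v\rangle_\theta .
\end{equation}
Integrating \eqref{eq:UTTplan-deriv} over $\theta\in[0,1]$, using $\langle\cdot\rangle_1=\langle\cdot\rangle$ and $\langle\cdot\rangle_0=\langle\cdot\rangle^{(\mathbf C^b(0)\cap E)^c}$, and inserting the elementary identity $\int_0^1 f(\theta)\,\mathrm d\theta=f(1)-\int_0^1\theta f'(\theta)\,\mathrm d\theta$ with $f(\theta)=\langle\sigma_y\sigma_z;\sigma_u\sigma_v\rangle_\theta$, one obtains
\begin{equation}
\langle\sigma_y\sigma_z\rangle-\langle\sigma_y\sigma_z\rangle^{(\mathbf C^b(0)\cap E)^c}=\beta\sum_{\{u,v\}\in E}J_{u,v}\,\1[u\textup{ or }v\in\mathbf C(0)]\Big(\langle\sigma_y\sigma_z;\sigma_u\sigma_v\rangle-\int_0^1\theta\,\partial_\theta\langle\sigma_y\sigma_z;\sigma_u\sigma_v\rangle_\theta\,\mathrm d\theta\Big).
\end{equation}
Multiplying by $\langle\sigma_0\sigma_x\rangle$, taking $\mathbf E^{0x,\varnothing}$, and using that the full-torus correlation $\langle\sigma_y\sigma_z;\sigma_u\sigma_v\rangle$ is non-random so that $\mathbf E^{0x,\varnothing}[\1[u\textup{ or }v\in\mathbf C(0)]]=\mathbf P^{0x,\varnothing}[u\textup{ or }v\in\mathbf C(0)]$, the two resulting sums are exactly $T_1(x,y,z,E)$ and $T_2(x,y,z,E)$, which gives \eqref{eq:UTTplan-id}.

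\smallskip
\emph{Step 2.}
Since $\mathbf C^b(0)\cap E\subset\mathbf C^b(0)$, the graph $(\mathbf C^b(0)\cap E)^c$ contains every edge of $(\mathbf C^b(0))^c$, so the second Griffiths inequality gives $\langle\sigma_y\sigma_z\rangle^{(\mathbf C^b(0)\cap E)^c}\ge\langle\sigma_y\sigma_z\rangle^{(\mathbf C^b(0))^c}$ for every realisation. Substituting this into \eqref{eq:UTTplan-id} bounds $T_1-T_2$ above by the left side of \eqref{eq:UTTplan-crux} (and in particular shows $T_1-T_2\ge0$), so it remains to prove \eqref{eq:UTTplan-crux}.

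\smallskip
\emph{Step 3.}
Here I would introduce a further independent current $\l$ on $\Tr$ with $\del\l=\{y,z\}$ and, using the random current representation \eqref{equation correlation rcr} together with a backbone decomposition of $\l$, rewrite the left side of \eqref{eq:UTTplan-crux} as the weighted count of triples $(\n_1,\n_2,\l)$ for which the backbone of $\l$ meets the cluster $\mathbf C(0)$ of $\n_1+\n_2$; on that event there is a vertex $w$ with $0\connect{(\n_1+\n_2)}w$ and with $y$ and $z$ joined to $w$ along the backbone of $\l$. Switching sources through $w$ by Lemma~\ref{lem:std_switch} — the same manipulation that proves the Lebowitz inequality, where $-U_4(0,x,y,z)=\langle\sigma_0\sigma_y\rangle\langle\sigma_x\sigma_z\rangle+\langle\sigma_0\sigma_z\rangle\langle\sigma_x\sigma_y\rangle-\big(\langle\sigma_0\sigma_x\sigma_y\sigma_z\rangle-\langle\sigma_0\sigma_x\rangle\langle\sigma_y\sigma_z\rangle\big)$ — then bounds this count by $\half|U_4(0,x,y,z)|$, giving \eqref{eq:UTTplan-crux}. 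The main obstacle is precisely this third step: the three-current switching must be arranged so that the ``backbone of $\l$ meets $\mathbf C(0)$'' event is captured without overcounting (this is where the factor $\half$ appears), and one must handle coincidences among $0,x,y,z$ — for instance when $0\in\{y,z\}$ the spin $\sigma_0$ is isolated in $(\mathbf C^b(0))^c$, so $\langle\sigma_y\sigma_z\rangle^{(\mathbf C^b(0))^c}=0$ and \eqref{eq:UTTplan-crux} reduces to the equality $\langle\sigma_0\sigma_x\rangle\langle\sigma_0\sigma_y\rangle=\half|U_4(0,x,y,0)|$, which can be checked directly. By contrast, Steps 1 and 2 are routine once \eqref{eq:UTTplan-deriv} and the Griffiths inequalities are available.
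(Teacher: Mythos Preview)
Your Steps~1 and~2 are correct and, apart from running the logic in the opposite order, coincide with the paper's proof: the paper starts from the $U_4$ side, applies the Griffiths monotonicity (your Step~2), and then the Fundamental Theorem of Calculus computation (your Step~1) to arrive at $T_1-T_2$.

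The gap is Step~3, which you only sketch. The backbone/switching argument you outline is neither carried out nor needed, and as written it is unclear how it would yield exactly $\half|U_4|$ rather than a weaker diagrammatic bound. The paper's route is much shorter and in fact gives \eqref{eq:UTTplan-crux} as an \emph{equality}. Namely, one invokes Aizenman's identity \cite[Proposition~5.1]{Aize82},
\[
\tfrac12|U_4(0,x,y,z)|
=\langle\sigma_0\sigma_x\rangle\langle\sigma_y\sigma_z\rangle\,\mathbf P^{0x,yz}\big[\mathbf C(0)\cap\mathbf C(y)\neq\varnothing\big]
=\langle\sigma_0\sigma_x\rangle\langle\sigma_y\sigma_z\rangle
-\frac{Z^{0x,yz}[\mathbf C(0)\cap\mathbf C(y)=\varnothing]}{Z^{\varnothing,\varnothing}},
\]
and then ``peels off'' the cluster of $0$: conditioning on $\mathbf C^b(0)=H$ factorises the partition function (the currents inside and outside $H$ decouple on this event), giving
\[
Z^{0x,yz}[\mathbf C(0)\cap\mathbf C(y)=\varnothing]
=\sum_{H}Z^{0x,\varnothing}[\mathbf C^b(0)=H]\,\langle\sigma_y\sigma_z\rangle^{H^c}
=Z^{0x,\varnothing}\,\mathbf E^{0x,\varnothing}\big[\langle\sigma_y\sigma_z\rangle^{\mathbf C^b(0)^c}\big].
\]
Dividing by $Z^{\varnothing,\varnothing}$ and recombining yields exactly \eqref{eq:UTTplan-crux}, with no third current, no backbone, and no special treatment of coincidences among $0,x,y,z$. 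Once you have this, your Steps~2 and~1 complete the proof.
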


\begin{proof}
By \cite[Proposition~5.1]{Aize82}, for all $x,y,z\in \mathbb T_r$,
\begin{align}
\label{eq:T1T2pf-1}
\half |U_4(0,x,y,z)|
&=  \langle \sigma_0\sigma_x\rangle \langle \sigma_y\sigma_z\rangle \mathbf P^{0x,yz}[\mathbf C(0)\cap \mathbf C(y)\neq \varnothing]  \nl
&=  \langle \sigma_0\sigma_x\rangle \langle \sigma_y\sigma_z\rangle \Big( 1 - \mathbf P^{0x,yz}[\mathbf C(0)\cap \mathbf C(y) = \varnothing] \Big)
.
\end{align}
We modify the definition \eqref{eq:ZABGH} by defining, for
$H \subset \mathcal P_2$ and for $A,B$ subsets of vertices in the graph $H$,
\begin{equation}
\label{eq:ZABGH-app}
Z^{A,B}_{H}[\;\cdot \;]= \sum_{\substack{\sn_1=A\\\sn_2=B}}w(\n_1)w(\n_2)\1\{(\n_1,\n_2)\in \cdot\},
\end{equation}
where the sums over $\n_1,\n_2$ are over currents on $H$ and the dot represents any set of pairs of currents.  When $H$ is all of $\mathcal P_2$, we omit the subscript $H$.
Then \eqref{eq:T1T2pf-1} can be written as
\begin{align} \label{eq:T1T2pf}
\half |U_4(0,x,y,z)|
&=  \langle \sigma_0\sigma_x\rangle \langle \sigma_y\sigma_z\rangle
	- \frac  { Z^{0x,yz}[\mathbf C(0)\cap \mathbf C(y) = \varnothing] } {Z^{\varnothing,\varnothing} } .
\end{align}
To prepare for an application of the switching lemma in
the proof of Lemma~\ref{lem:T1}, we wish to
arrange that one of the two currents be sourceless (as they are in
$T_1$ and $T_2$).

This can be achieved by the classical manoeuvre of
\emph{peeling off} the cluster of $0$, by conditioning on $\overline{\mathbf C}(0)$:
\begin{align}
	Z^{0x,yz}[\mathbf C(0)\cap \mathbf C(y)=\varnothing]
&=\sum_{H \subset \mathcal P_2}Z^{0x,\varnothing}_{H}[\overline{\mathbf C}(0)=H] Z^{\varnothing,yz}_{H^c}
	\nnb &=\sum_{H \subset \mathcal P_2}
Z^{0x,\varnothing}[\overline{\mathbf C}(0)=H]\langle \sigma_y\sigma_z\rangle^{H^c}
	=Z^{0x,\varnothing}  \mathbf E^{0x,\varnothing}[\langle \sigma_y\sigma_z\rangle^{\overline{\mathbf C}(0)^c}].
\end{align}
For every $E\subset \mathcal P_2$, it follows from the second Griffiths inequality that
\begin{equation}
    \langle \sigma_y\sigma_z\rangle^{\overline{\mathbf C}(0)^c}
\le \langle \sigma_y\sigma_z\rangle^{{(\overline{\mathbf C}(0)\cap E)^c}}=\langle \sigma_y\sigma_z\rangle_0.
\end{equation}
Therefore,
\begin{align}
Z^{0x,yz}[\mathbf C(0)\cap \mathbf C(y)=\varnothing]
&\le Z^{0x,\varnothing}   \mathbf E^{0x,\varnothing}[\langle \sigma_y\sigma_z\rangle_0] 		= Z^{\varnothing,\varnothing}  \langle \sigma_0 \sigma_x \rangle \mathbf E^{0x,\varnothing}[\langle \sigma_y\sigma_z\rangle_0] .
\end{align}
We insert this into \eqref{eq:T1T2pf} and obtain
\begin{align}
\half |U_4(0,x,y,z)|
\ge \langle \sigma_0\sigma_x\rangle\mathbf E^{0x,\varnothing}[\langle \sigma_y\sigma_z\rangle_1-\langle \sigma_y\sigma_z\rangle_0] .
\end{align}
By the Fundamental Theorem of Calculus,
\begin{align}
	\langle \sigma_y\sigma_z\rangle_1 - \langle \sigma_y\sigma_z\rangle_0	
    &=\int_0^1\partial_{\theta_1} \langle \sigma_y\sigma_z\rangle_{\theta_1}
    \mathrm{d}\theta_1
	\nnb &
    =\sum_{\{u,v\}\in E}\beta J_{u,v} \1\{u\textup{ or }v\in \mathbf C(0)\}
	\int_0^1\langle \sigma_y\sigma_z;\sigma_u\sigma_v\rangle_{\theta_1} \mathrm{d}\theta_1 .
\end{align}
The desired inequality \eqref{eq:UTT} then follows from
\begin{equation}
	\langle \sigma_y\sigma_z;\sigma_u\sigma_v\rangle_{\theta_1}
=
\langle \sigma_y\sigma_z;\sigma_u\sigma_v\rangle-\int^{1}_{\theta_1}\partial_{\theta}\langle \sigma_y\sigma_z;\sigma_u\sigma_v\rangle_{\theta }\mathrm{d}\theta
\end{equation}
and Fubini's theorem.
\end{proof}

In combination with Lemma~\ref{lem:UTT},
the bounds on $T_1$ and $T_2$ obtained in the next two lemmas prove Proposition~\ref{prop:U4bd}.
These bounds use the \emph{dilution trick}, which consists of picking $E=\omega$ with $\omega$ distributed
according to an independent Bernoulli$(p)$ bond percolation on $\mathcal P_2$,
and averaging over $E$.
We denote the law and expectation of $\omega$ by $\P_p$ and $\E_p$, respectively.
The dilution method is comparable to the second moment method (see e.g. \cite[Lemma~4.4]{AD21}) but it is more \emph{linear} in the sense that it allows for resummation. This
facilitates working with averaged quantities.

\begin{lemma}
\label{lem:T1}
For all $\beta >0$ and for all $p\in [0,1]$,
\begin{equation}
\sum_{x,y,z\in \mathbb T_r}\mathbb E_p[T_1(x,y,z,\omega)]
\geq p \chi^2 \beta\frac{\partial \chi}{\partial \beta}.
\end{equation}
\end{lemma}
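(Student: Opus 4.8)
The plan is to average $T_1$ over the Bernoulli cloud $\omega$, collapse the factor $\mathbf P^{0x,\varnothing}[u\text{ or }v\in\mathbf C(0)]$ to ordinary two-point functions via the switching lemma, and then resum over $u,v$ to produce $\partial_\beta\chi$. First I would average over $\omega$: since $\omega$ is independent $\mathrm{Bernoulli}(p)$ on $\mathcal P_2$ while the remaining ingredients of $T_1$ — the correlations $\langle\sigma_0\sigma_x\rangle$ and $\langle\sigma_y\sigma_z;\sigma_u\sigma_v\rangle$ and the current measure $\mathbf P^{0x,\varnothing}$, all built from the \emph{undiluted} interaction $\beta J$ — do not depend on $\omega$, averaging just replaces $\sum_{\{u,v\}\in\omega}$ by $p\sum_{\{u,v\}\in\mathcal P_2}$. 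This gives $\sum_{x,y,z}\mathbb E_p[T_1]=p\beta\sum_{\{u,v\}}J_{u,v}\,A(u,v)\,B(u,v)$, with $A(u,v)=\sum_x\langle\sigma_0\sigma_x\rangle\,\mathbf P^{0x,\varnothing}[u\text{ or }v\in\mathbf C(0)]$ and $B(u,v)=\sum_{y,z}\langle\sigma_y\sigma_z;\sigma_u\sigma_v\rangle$.

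Next I would evaluate the two factors. Writing $M=\sum_{y\in\T_r}\sigma_y$ and using translation invariance, $B(u,v)=\langle M^2;\sigma_u\sigma_v\rangle$, a function $\Phi(v-u)$ of $v-u$ alone with $\Phi(0)=0$. For $A$, the key step will be the exact identity $\langle\sigma_0\sigma_x\rangle\,\mathbf P^{0x,\varnothing}[0\leftrightarrow u]=\langle\sigma_0\sigma_u\rangle\langle\sigma_u\sigma_x\rangle$. To prove it, I would express the left-hand side through partition functions on $\T_r$ and apply the switching lemma (Lemma~\ref{lem:std_switch} with $\mathcal G=\mathcal H=\T_r$) to the connection $0\leftrightarrow u$: the source sets become $\{x,u\}$ and $\{0,u\}$, and since a current with source set $\{0,u\}$ automatically contains a path from $0$ to $u$, the connection constraint is vacuous and the partition functions factorize, giving the identity. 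Combining this with the pointwise bound $\1\{0\leftrightarrow u\text{ or }0\leftrightarrow v\}\ge\tfrac12\big(\1\{0\leftrightarrow u\}+\1\{0\leftrightarrow v\}\big)$ and $\sum_x\langle\sigma_u\sigma_x\rangle=\chi$ then yields $A(u,v)\ge\tfrac\chi2\big(\langle\sigma_0\sigma_u\rangle+\langle\sigma_0\sigma_v\rangle\big)$.

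Plugging in, $\sum_{x,y,z}\mathbb E_p[T_1]\ge p\beta\,\tfrac\chi2\sum_{\{u,v\}}J_{u,v}\big(\langle\sigma_0\sigma_u\rangle+\langle\sigma_0\sigma_v\rangle\big)\Phi(v-u)$. The summand is symmetric in $u,v$, so this equals $p\beta\,\tfrac\chi2\sum_{u\ne v}J_{u,v}\langle\sigma_0\sigma_u\rangle\Phi(v-u)=p\beta\,\tfrac{\chi^2}{2}\sum_w J_{0,w}\Phi(w)$ by translation invariance of $J$ and $\Phi$ together with $\sum_u\langle\sigma_0\sigma_u\rangle=\chi$. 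Finally, differentiating $r^d\chi=\langle M^2\rangle$ in $\beta$ gives $r^d\,\partial_\beta\chi=\sum_{\{u,v\}}J_{u,v}\langle M^2;\sigma_u\sigma_v\rangle=\sum_{\{u,v\}}J_{u,v}\Phi(v-u)=\tfrac{r^d}{2}\sum_w J_{0,w}\Phi(w)$, hence $\sum_w J_{0,w}\Phi(w)=2\,\partial_\beta\chi$, and the bound becomes exactly $p\,\chi^2\beta\,\partial_\beta\chi$.

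The computation telescopes and there is no analytic difficulty. The two places where care is needed are: checking that the switching identity for $A$ is a genuine equality, which rests on the second current's source set $\{0,u\}$ rendering the event $\{0\leftrightarrow u\}$ automatic; and tracking the constants, so that the factor $\tfrac12$ lost in the union bound for $A$ is exactly recovered by the factor $2$ in $\sum_w J_{0,w}\Phi(w)=2\,\partial_\beta\chi$ — this is precisely what makes the constant in the statement sharp.
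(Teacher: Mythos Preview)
Your proof is correct and follows essentially the same route as the paper's: both use the switching identity $\langle\sigma_0\sigma_x\rangle\,\mathbf P^{0x,\varnothing}[0\leftrightarrow u]=\langle\sigma_0\sigma_u\rangle\langle\sigma_u\sigma_x\rangle$ to collapse the cluster probability, and then identify $\partial_\beta\chi$ in the resulting sum. The only cosmetic difference is that the paper first passes to an ordered sum over $(u,v)$ (incurring a factor $\tfrac12$) and then uses the asymmetric bound $\1\{u\text{ or }v\in\mathbf C(0)\}\ge\1\{0\leftrightarrow u\}$, whereas you keep unordered pairs and use the symmetric bound $\1\{u\text{ or }v\}\ge\tfrac12(\1\{u\}+\1\{v\})$; the constants match up identically. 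One point worth making explicit is that your lower bound on $A(u,v)$ transfers to a lower bound on $A(u,v)B(u,v)$ because $B(u,v)=\langle M^2;\sigma_u\sigma_v\rangle\ge 0$ by the second Griffiths inequality.
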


\begin{proof}
By definition,
\begin{align}
T_1(x,y,z,E)  &=\beta \sum_{\{u,v\}\in E}J_{u,v}
\langle \sigma_0\sigma_x\rangle
\mathbf P^{0x,\varnothing}[u\textup{ or }v\in \mathbf C(0)]
\langle \sigma_y\sigma_z;\sigma_u\sigma_v\rangle
\\&=\frac{\beta}{2} \sum_{u,v\in\Tr}\1\{\{u,v\}\in E\}J_{u,v}
\langle \sigma_0\sigma_x\rangle
\mathbf P^{0x,\varnothing}[u\textup{ or }v\in \mathbf C(0)]
\langle \sigma_y\sigma_z;\sigma_u\sigma_v\rangle.
\end{align}
We first apply the switching lemma (Lemma~\ref{lem:std_switch})
to see that
\begin{equation}
	\langle \sigma_0\sigma_x\rangle \mathbf P^{0x,\varnothing}[u\textup{ or }v \in \mathbf C(0)]
	\geq \langle \sigma_0\sigma_x\rangle\mathbf P^{0x,\varnothing}[0\connect{\n_1+\n_2\:} u ]=\langle \sigma_0\sigma_u\rangle\langle \sigma_u\sigma_x\rangle.
\end{equation}
This enables us to take the percolation expectation and obtain
\begin{equation}
	\mathbb E_p[T_1(x,y,z,\omega)]
\geq
\frac{\beta p}{2}\sum_{ u,v\in \mathbb T_r}J_{u,v}\langle \sigma_0\sigma_u\rangle\langle\sigma_u\sigma_x\rangle\langle \sigma_y\sigma_z;\sigma_u\sigma_v\rangle
.
\end{equation}
Summation over the torus then gives
\begin{align}
\sum_{x,y,z\in \mathbb T_r} \mathbb E_p[T_1(x,y,z,\omega)]
\ge  \frac{\beta p}{2}\sum_{y,z,u,v\in \Tr}J_{u,v}\langle \sigma_0\sigma_u\rangle  \chi \langle \sigma_y\sigma_z;\sigma_u\sigma_v\rangle.
\end{align}
By the change of variables $z=z'+y$, $u=u'+y$, and $v=v'+y$,
the right-hand side becomes
\begin{equation}
\frac{\beta p}{2} \chi
	\sum_{y,z',u',v'\in \Tr}J_{u',v'}\langle \sigma_0\sigma_{u'+y}\rangle\langle \sigma_0\sigma_{z'};\sigma_{u'}\sigma_{v'}\rangle
=\beta p \chi ^2 \frac{ \del \chi }{ \del \beta },
\end{equation}
and the proof is complete.
\end{proof}

\begin{lemma}
For all $\beta >0$ and for all $p\in [0,1]$,
\begin{equation}\label{eq:bound T_2}
	\sum_{x,y,z\in \mathbb T_r}\mathbb E_p[T_2(x,y,z,\omega)]
    \leq
    16
    p^2 \beta^2 |J|^2\chi^4 \bubble.
\end{equation}
\end{lemma}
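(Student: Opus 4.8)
The plan is to bound $T_2(x,y,z,E)$ termwise in $\{u,v\}\in E$, then sum over $x,y,z$, and finally average over $\omega$. The key input is a diagrammatic bound on the $\theta$-derivative $\partial_\theta \langle \sigma_y\sigma_z;\sigma_u\sigma_v\rangle_\theta$. First I would expand this derivative: differentiating the (diluted) Ising measure $\langle\cdot\rangle_\theta$ in $\theta$ brings down, for each pair $\{a,b\}\in \mathbf C^b(0)\cap E$, a factor $\beta J_{a,b}$ times a truncated correlation involving $\sigma_a\sigma_b$. This produces a sum over a second pair $\{a,b\}$ of a multi-point Ursell-type quantity $\langle \sigma_y\sigma_z;\sigma_u\sigma_v;\sigma_a\sigma_b\rangle_\theta$ (a connected six-point function). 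Using the Griffiths/Lebowitz-type inequalities together with the switching lemma — applied now in the diluted graph — one bounds the absolute value of this six-point truncated correlation by a sum of products of (at most $\theta=1$, hence full) two-point functions forming a diagram that connects the six points $y,z,u,v,a,b$ pairwise through an internal vertex. The integral $\int_0^1 \theta \, d\theta = \tfrac12$ just contributes a harmless constant, and the indicator $\1[u \text{ or } v \in \mathbf C(0)]$ together with $\langle\sigma_0\sigma_x\rangle \mathbf E^{0x,\varnothing}[\cdots]$ is handled exactly as in the proof of Lemma~\ref{lem:T1}: another switching turns $\langle\sigma_0\sigma_x\rangle\mathbf P^{0x,\varnothing}[u\text{ or }v\in\mathbf C(0)]$ into (a bound by) $\langle\sigma_0\sigma_u\rangle\langle\sigma_u\sigma_x\rangle$, and the requirement $\{a,b\}\in\mathbf C^b(0)$ similarly costs a factor $\langle\sigma_0\sigma_a\rangle$ (or $\langle\sigma_0\sigma_b\rangle$).

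After these manipulations the summand is a product of two-point functions, indexed by the auxiliary pairs $\{u,v\}$ and $\{a,b\}$, with the factors $\beta J_{u,v}$ and $\beta J_{a,b}$ in front. Summing over $x,y,z\in\T_r$ and over $u,v,a,b\in\T_r$, each free vertex summed against a $\tau$-factor produces a factor $\chi$, each pair of $\tau$-factors sharing two summed endpoints produces a factor $\bubble = \sum_x \tau^\Tr(x)^2$, and each $\sum_{v}\beta J_{u,v} = \beta|J|$. Counting carefully: the constraint $\{u,v\}\in\mathbf C^b(0)$ and $\{a,b\}\in \mathbf C^b(0)$ each costs one $\bubble$ (the diagram revisits the cluster of $0$), the two interaction sums give $\beta^2|J|^2$, and the remaining free summations give $\chi^4$. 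The diluted percolation expectation $\E_p$ supplies a factor $p$ for each of the two pairs forced to lie in $\omega$, hence $p^2$. This yields the claimed bound $16\, p^2\beta^2|J|^2\chi^4\bubble$; the constant $16$ is whatever emerges from the combinatorial prefactors (the $\tfrac12$ in $T_1$-style switching, the $\tfrac12$ from $\int_0^1\theta\,d\theta$, the number of ways to distribute endpoints), and one simply tracks it.

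The main obstacle will be the diagrammatic estimate on the six-point truncated correlation $\langle\sigma_y\sigma_z;\sigma_u\sigma_v;\sigma_a\sigma_b\rangle_\theta$: one must show it is bounded, in absolute value, by a sum over an internal vertex $w$ of products of two-point functions joining the six external points to $w$, \emph{uniformly in $\theta\in[0,1]$ and in the random graph $(\mathbf C^b(0)\cap E)^c$}. The uniformity in $\theta$ is delicate because the interaction $J^E(\theta)$ is not translation invariant; one must use monotonicity ($\theta\le 1$) and the second Griffiths inequality to dominate $\langle\cdot\rangle_\theta$-correlations by full $\langle\cdot\rangle$-correlations, and then apply the switching lemma in the (fixed, conditioned) diluted graph — this is why the conditioning on $\mathbf C^b(0)$ and the peeling in Lemma~\ref{lem:UTT} were set up to leave one sourceless current available. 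Organising the switchings so that no source constraint is violated, and so that the resulting diagram genuinely closes up with the two extra $\langle\sigma_0\sigma_\bullet\rangle$ factors from the two cluster constraints, is the technical heart; once that diagram is in hand the summation and the power counting are routine.
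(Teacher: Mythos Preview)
Your overall architecture is right: differentiate in $\theta$ to produce a second edge $\{s,t\}\in E$ and a third-order truncated correlation $\langle\sigma_y\sigma_z;\sigma_u\sigma_v;\sigma_s\sigma_t\rangle_\theta$, bound that, combine with the cluster constraints, sum, and pick up $p^2$ from $\E_p$. But two of your proposed ingredients are wrong in ways that break the power counting.

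First, the two cluster indicators $\1[u\text{ or }v\in\mathbf C(0)]$ and $\1[s\text{ or }t\in\mathbf C(0)]$ cannot be treated by two separate switchings ``as in Lemma~\ref{lem:T1}''. After you switch once on (say) $u$, the pair of currents no longer has the source pattern $(\{0,x\},\varnothing)$, so you cannot simply repeat the argument and peel off an independent factor $\langle\sigma_0\sigma_s\rangle$. What is actually needed is a \emph{joint} tree-graph (multiconnectivity) bound,
\[
\langle\sigma_0\sigma_x\rangle\,\mathbf P^{0x,\varnothing}\!\bigl[\{u_i\}\cap\mathbf C(0)\neq\varnothing,\ \{t_j\}\cap\mathbf C(0)\neq\varnothing\bigr]
\ \le\
\sum_{i,j}\langle\sigma_0\sigma_{u_i}\rangle\langle\sigma_{u_i}\sigma_{t_j}\rangle\langle\sigma_{t_j}\sigma_x\rangle
+(0\Leftrightarrow x),
\]
i.e.\ a \emph{chain} $0\!-\!u\!-\!t\!-\!x$ through both points, not two separate branches from $0$. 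This is \cite[Proposition~A.3]{AD21} in the paper's proof.

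Second, the bound on $\langle\sigma_y\sigma_z;\sigma_u\sigma_v;\sigma_s\sigma_t\rangle_\theta$ is not via a sum over an internal vertex $w$ with six legs; the relevant inequality (a variant of \cite[Proposition~5.2]{AF86}) again gives a chain,
\[
\langle\sigma_y\sigma_z;\sigma_{u_1}\sigma_{u_2};\sigma_{t_1}\sigma_{t_2}\rangle_\theta
\ \le\
2\sum_{i,j}\langle\sigma_y\sigma_{u_i}\rangle\langle\sigma_{u_{\rho(i)}}\sigma_{t_j}\rangle\langle\sigma_{t_{\rho(j)}}\sigma_z\rangle
+(y\Leftrightarrow z).
\]
The single factor $\bubble$ in the target bound arises precisely because the \emph{two} middle links $\langle\sigma_{u_\bullet}\sigma_{t_\bullet}\rangle$ from the two chains share their endpoints (up to a $J$-step), so that after summing $x,y,z$ and using translation invariance the leftover sum is a $\tau^2$-convolution. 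Your scheme---two independent branches from $0$ plus an internal-vertex tree for the six-point function---does not produce this shared-edge structure; your own accounting (``each costs one $\bubble$'') already gives $\bubble^2$ rather than the $\bubble$ required.
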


\begin{proof}
By definition,
\begin{equation}
T_2(x,y,z,E)
= \beta \sum_{\{u,v\}\in E} J_{u,v}
	\langle \sigma_0\sigma_x\rangle
\mathbf E^{0x,\varnothing} \bigg[\1\{u\textup{ or }v\in \mathbf C(0)\}
\int_0^1 \theta \;
	\partial_{\theta} \langle \sigma_y\sigma_z;\sigma_u\sigma_v\rangle_{\theta}
    \mathrm{d}\theta
    \bigg]
.
\end{equation}
An explicit computation gives
\begin{equation}
	\partial_{\theta}\langle \sigma_y\sigma_z;\sigma_u\sigma_v\rangle_{\theta}
= \beta \sum_{ \{s,t\} \in E }J_{s,t}\1\{s\textup{ or }t\in \mathbf C(0)\}
\langle \sigma_y\sigma_z;\sigma_u\sigma_v;\sigma_s\sigma_t\rangle_{\theta},
\end{equation}
where
\begin{equation}
\label{eq:6pt}
	\langle \sigma_y\sigma_z;\sigma_u\sigma_v;\sigma_s\sigma_t\rangle_{\theta}
= \langle \sigma_y\sigma_z;\sigma_u\sigma_v\sigma_s\sigma_t\rangle_{\theta}
	-\langle \sigma_u\sigma_v\rangle_{\theta} \langle \sigma_y\sigma_z;\sigma_s\sigma_t\rangle_{\theta}
	-\langle \sigma_s\sigma_t\rangle_{\theta} \langle \sigma_y\sigma_z;\sigma_u\sigma_v\rangle_{\theta}.
\end{equation}
When the edges $uv$ and $st$ are the same edge, the first term
on the right-hand side of \eqref{eq:6pt}
vanishes, as $\sigma_u \sigma_v \sigma_s \sigma_t = 1$. In this case, we have
$\langle \sigma_y\sigma_z;\sigma_u\sigma_v;\sigma_u\sigma_v\rangle_{\theta}
	\leq 0.$
Therefore,
\begin{align} \label{eq: simplification T_2}
T_2(x,y,z,E) &\leq
\frac{\beta^2}{4}\sum_{\substack{u,v,s,t\in \Tr\\\{u,v\}\neq \{s,t\}}} \1\{\{u,v\},\{s,t\}\in E\}J_{u,v}J_{s,t}
\langle \sigma_0\sigma_x\rangle 	\nl 	
& \qquad \times
	\mathbf E^{0x,\varnothing} \bigg[
	\1\{ u \text{ or }v \in \mathbf C(0)\}
	\1\{ s \text{ or }t \in \mathbf C(0)\}
	\int_0^1 \theta \, \langle \sigma_y\sigma_z;\sigma_u\sigma_v;\sigma_s\sigma_t\rangle_{\theta}
\, \mathrm{d}\theta
\bigg]
.
\end{align}

It is convenient to relabel $u = u_1$, $v = u_2$, $t=t_1$, $s = t_2$.
For the integral over $\theta$,
we use a minor modification\footnote{In \cite[Proposition~5.2]{AF86}, the Ising model is studied with an external magnetic field $h$. For the random current representation, this involves the introduction of a \emph{ghost} vertex $g$
with spin $\sigma_g=1$, and \cite[Proposition~5.2]{AF86} gives a version of
\eqref{eq: triple truncated bound} in which the vertex $z$ is equal to the ghost.
The proof of \cite[Proposition~5.2]{AF86} adapts to our setting with minor modifications.}
of \cite[Proposition~5.2]{AF86},
followed by
the monotonicity of the two-point function in $\theta$, to get
\begin{equation} \label{eq: triple truncated bound}
\langle \sigma_y \sigma_z; \sigma_{u_1} \sigma_{u_2}; \sigma_{t_1} \sigma_{t_2} \rangle_{\theta}
\le
\sum_{i,j = 1}^2
2 \langle \sigma_y \sigma_{u_i} \rangle
\langle \sigma_{u_{\rho(i)}} \sigma_{t_j} \rangle
\langle \sigma_{t_{\rho(j)}} \sigma_z \rangle
+ (y \Leftrightarrow z ) ,
\end{equation}
where $\rho(1) = 2$, $\rho(2) = 1$,
and $(y\Leftrightarrow z)$ denotes the sum with the roles of $y$ and $z$ interchanged.
This allows us to bound the integral using $\int_0^1 \theta \mathrm d\theta = 1/2$.
For the remaining expectation in \eqref{eq: simplification T_2},
we use \cite[Proposition~A.3]{AD21},
which leads to
\begin{equation} \label{eq: multiconnectivity proba}
\langle \sigma_0\sigma_x\rangle \mathbf P^{0x,\varnothing}[u_1 \textup{ or }u_2 \in \mathbf C(0), t_1\textup{ or }t_2 \in \mathbf C(0)]
\leq
\sum_{i,j=1}^2  \langle \sigma_0\sigma_{u_i} \rangle\langle \sigma_{u_i} \sigma_{t_j} \rangle\langle \sigma_{t_j}\sigma_x\rangle
+ (0\Leftrightarrow x),
\end{equation}
where $(0\Leftrightarrow x)$ denotes the sum
with the roles of $0$ and $x$ interchanged.

We insert the above two inequalities into \eqref{eq: simplification T_2},
average over $E=\omega$,
and then sum over $x,y,z\in \Tr$.
The sums over $x,y,z$ produce $\chi^3$.
Then, by translation invariance,
we can shift the origin to one of $u_1,u_2,t_1,t_2$, and then sum over the former origin to get another factor of $\chi$.
The remaining sums are of three kinds and can all be written as convolutions.
In terms of the torus convolution $\star$ defined in \eqref{eq:starconv}, this produces
\begin{align}
\label{eq:T2_finalbd}
&\sum_{x,y,z\in \mathbb T_r}\mathbb E_p[T_2(x,y,z,\omega)]
\nonumber \\  & \qquad
\le
\frac{\beta^2 p^2} 4 (\half) (2)(4) \chi^4
	\Big( 4 \norm{ J \star \tau^2 \star J }_1
	+ 8 \norm J_1 (\tau \star J \star \tau)(0)
	+ 4 (J \star \tau \star J \star \tau)(0)
	\Big) ,
\end{align}
where the factor 2 is from \eqref{eq: triple truncated bound}
and the first factor 4 is from $(0\Leftrightarrow x)$ and $(y\Leftrightarrow z)$.
We then apply the Cauchy--Schwarz
inequality to see that
\begin{align}
\norm{ J \star \tau^2 \star J }_1
&
=
\norm J_1 ^2 \norm{ \tau^2 }_1
,
\\
\norm J_1 (\tau \star J \star \tau)(0)
&\le \norm J_1 ^2 \norm{ \tau }_2^2
 ,
\\
(J \star \tau \star J \star \tau)(0)
&\le \norm J_1 ^2  \norm{ \tau }_2^2
 .
\end{align}
These right-hand sides are all equal to $|J|^2 \bubble(\beta)$.
Altogether, this
gives an upper bound
$16 \beta^2 p^2 \chi^4 \abs J^2 \bubble$ for the right-hand side of \eqref{eq:T2_finalbd},
which concludes the proof.
 \end{proof}

\section{Percolation}
\label{sec:perc}

The conjectures for the Ising model in
Section~\ref{sec:conjectures} have counterparts
for Bernoulli bond (or site) percolation
on $\Z^d$ in dimensions $d \ge 6$, which we discuss now.

Let $W^*$ denote the Brownian excursion of length $1$,
let $\Psi(x) = \mathbb{E} \exp[x\int_0^1 W^*(t)\mathrm dt] $ denote the
moment generating function of the Brownian excursion area, let
\begin{align}
F(x,s) = \frac 1 6 x^3 - \frac s 2 x^2 + \frac{ s^2} 2 x
,
\end{align}
and define the conjectured \emph{percolation profile}
\begin{equation}
f_{\rm perc}(s) = \int_0^\infty x^2 \mathrm d\sigma_s, \qquad
\mathrm d \sigma_s =
	\frac 1 { \sqrt{2\pi} } x^{-5/2} \Psi(x^{3/2}) e^{-F(x, s)} \mathrm dx
    .
\end{equation}
On the torus,
we write $\tau^{\T_r}_p(0,x) = \mathbb{P}^\Tr_p(0 \connect{} x)$ for the
percolation two-point function, and
$\chi^{\T_r}(p) = \sum_{x\in \T_r}\tau^{\T_r}_p(0,x)$ for
the susceptibility.
On the infinite lattice $\Z^d$, we denote the two-point function by $\tau_p$
and the critical value by $p_c$.

\begin{Conj}[Universal profile for percolation]
\label{conj:perc}
For $d > 6$, with \emph{window scale} $w_r =a_dr^{-d/3}$
for suitably chosen $a_d>0$,
and for $s \in \R$, as $r \to \infty$,
\begin{align}
\label{eq:perc-chi}
    \chi^{\T_r}(p_c + sw_r) &\sim {\rm const}_d \, f_{\rm perc}(s)
        r^{d/3},
    \\
\label{eq:perc-tau}
    \tau^{\T_r}_{p_c + sw_r}(0,x)  - \tau_{p_c}(0,x)
    &\sim
    {\rm const}_d\, f_{\rm perc}(s)
        r^{-2d/3}.
\end{align}
\end{Conj}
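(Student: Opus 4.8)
Since Conjecture~\ref{conj:perc} asks for precise asymptotic amplitudes rather than orders of magnitude, a proof would require substantial input beyond the methods of this paper---notably the lace expansion and the scaling limit of critical clusters---so what follows is a plan of attack rather than a proof. The strategy has two stages, mirroring the logic used here for the Ising model: reduce the plateau statement \eqref{eq:perc-tau} to the susceptibility statement \eqref{eq:perc-chi}, and then identify the asymptotics of $\chi^{\T_r}$ in the window with a Brownian-excursion functional.

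For the reduction, one would establish a sharpened form of the percolation analogue of Theorem~\ref{thm:plateau}. A coupling of torus percolation with $\Z^d$ percolation---the percolation counterpart of the construction in Section~\ref{sec:currents}, already carried out in \cite{HMS23}---together with near-critical two-point function bounds for $\Z^d$ percolation playing the role of \eqref{eq:DPub}--\eqref{eq:DPlb}, should give $\tau^{\T_r}_{p}(0,x)=\tau_{p}(0,x)+(1+o(1))\,\chi^{\T_r}(p)\,r^{-d}$ uniformly for $p$ in the window and $\|x\|_\infty$ of order $r$. Since $p-p_c=O(r^{-d/3})$ and $d>6$, for such $x$ the term $\tau_p(0,x)$ is $O(\|x\|_\infty^{-(d-2)})=O(r^{-(d-2)})=o(r^{-2d/3})$, and $\tau_{p_c}(0,x)$ differs from $\tau_p(0,x)$ by an amount of the same or smaller order; hence \eqref{eq:perc-tau} reduces to \eqref{eq:perc-chi}. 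The real work at this stage is to upgrade the $\asymp$ bounds of \cite{HMS23} to $\sim$ asymptotics with the correct constant, which requires quantitative lace-expansion control of the finite-volume two-point function on $\T_r$.

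For the susceptibility, the starting point is the identity $\chi^{\T_r}(p)=r^{-d}\,\mathbb E_p\big[\sum_i|C_i|^2\big]$, where the sum is over all clusters $C_i$, so that $r^{-d/3}\chi^{\T_r}(p)=\mathbb E_p\big[\sum_i(|C_i|\,r^{-2d/3})^2\big]$. One then invokes the scaling limit of ordered cluster sizes: for $d>6$ the triangle condition holds via the lace expansion, and under the triangle condition the ordered sizes $(|C_{(1)}|,|C_{(2)}|,\dots)$ of critical torus percolation, rescaled by $r^{2d/3}$, are expected to converge in $\ell^2_\downarrow$ to the same universal limit as for the Erd\H{o}s--R\'enyi random graph in its critical window, namely the ordered lengths $(\gamma_i(s))$ of the excursions of reflected Brownian motion with the parabolic drift determined by $s$ (the torus analogue of results in the spirit of \cite{Aize97} and \cite[Chapter~13]{HH17book}, and of Aldous's description of the critical multiplicative coalescent). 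Combined with a uniform-integrability bound for $\sum_i(|C_i|\,r^{-2d/3})^2$---again a consequence of the triangle condition---this would yield $r^{-d/3}\chi^{\T_r}(p_c+sw_r)\to\mathbb E\big[\sum_i\gamma_i(s)^2\big]$. The last step is to evaluate this expectation: the intensity measure of the rescaled sizes $\{\gamma_i(s)\}$ is exactly $\mathrm d\sigma_s$, in which the prefactor $x^{-5/2}\Psi(x^{3/2})$ encodes the enumeration of connected graphs by excess (Wright's formula, whose generating function is the Brownian-excursion area transform $\Psi$) while $e^{-F(x,s)}$ is the drift contribution along the window; hence $\mathbb E\big[\sum_i\gamma_i(s)^2\big]=\int_0^\infty x^2\,\mathrm d\sigma_s=f_{\rm perc}(s)$, up to the constant ${\rm const}_d$ arising from the lace-expansion relations between $p_c$, the triangle diagram, and the diffusion constant (which also fixes $a_d$).

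The principal obstacle is the sharp finite-volume control. Both the upgrade from $\asymp$ to $\sim$ in the plateau and susceptibility estimates, and the verification on the torus of the precise hypotheses for the cluster-size scaling limit---including the identification of $a_d$ and ${\rm const}_d$---demand quantitative lace-expansion estimates on $\T_r$ that go beyond what is currently available. By contrast, once the scaling limit is in hand, the evaluation $\int_0^\infty x^2\,\mathrm d\sigma_s=f_{\rm perc}(s)$ is essentially a known computation, so the combinatorial and Brownian-excursion side of the argument should be the easier part.
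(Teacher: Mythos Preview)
The statement is a \emph{conjecture}, and the paper does not prove it; Appendix~\ref{sec:perc} only records heuristic motivation. Your proposal correctly recognises this and offers a strategy rather than a proof, so there is nothing to compare at the level of actual arguments.

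Your heuristic largely matches the paper's own. Both invoke the identity linking $\chi^{\T_r}$ to $r^{-d}\sum_i|C_i|^2$, the Erd\H{o}s--R\'enyi analogy, and the point-process description of rescaled cluster sizes whose intensity is $\mathrm d\sigma_s$ (the paper cites \cite{Aldo97,JS07} for this). Your reduction of \eqref{eq:perc-tau} to \eqref{eq:perc-chi} via a sharpened plateau estimate is a reasonable extrapolation of the Ising argument in this paper and the percolation results of \cite{HMS23}, though the paper does not spell this step out.

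One point where you are more optimistic than the paper: you write that ``once the scaling limit is in hand, the evaluation $\int_0^\infty x^2\,\mathrm d\sigma_s=f_{\rm perc}(s)$ is essentially a known computation''. The paper is more cautious, noting explicitly that the convergence of the rescaled susceptibility (i.e.\ convergence of the \emph{expectation} of $\sum_i(|C_i|r^{-2d/3})^2$, not merely distributional convergence of the sum) ``has not been proved, not even for the Erd\H{o}s--R\'enyi random graph''. The uniform-integrability step you flag is therefore not a formality but an open problem even in the mean-field prototype, and should be listed among the principal obstacles alongside the finite-volume lace-expansion control.
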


We also conjecture that
\eqref{eq:perc-chi} and \eqref{eq:perc-tau} hold for percolation
under free boundary conditions (i.e., only using bonds in $\Lambda_r$),
after $p_c$ is replaced by an effective critical value $p_{c,r}=p_c+ v_r$
with $v_r \asymp r^{-2}$.  If this is true, then since $v_r$ is larger
than the window $w_r$, the
two FBC and PBC windows
would not overlap.

Upper and lower bounds consistent with \eqref{eq:perc-chi} and \eqref{eq:perc-tau} are proved in \cite{HMS23} for spread-out percolation in dimensions $d>6$,
but asymptotic formulas have not been proved and the role of the profile has not been demonstrated.
For $d=6$, nothing close to Conjecture~\ref{conj:perc} has been proved, though we expect that the profile still applies after appropriate
logarithmic corrections to $w_r$, $v_r$, and to the factors $r^{d/3}$
and $r^{-2d/3}$ in \eqref{eq:perc-chi}--\eqref{eq:perc-tau}.

Logarithmic corrections for $d=6$ are discussed in the physics literature,
e.g., in \cite{Ahar80,Ruiz98}, on the basis of connections with the
$q \to 1$ limit of the $q$-state Potts model and $\varphi^3$ field theory.
Mathematically, the connection between percolation and $\varphi^3$ theory remains
mysterious, and to our knowledge there are no rigorous results in this direction.
Indeed, the very definition of the $\varphi^3$ model is problematic, since
$e^{\varphi^3}$ is not integrable.
Nevertheless, there are indications of a $\varphi^3$ structure in the mathematical
work on percolation:
the percolation diagram in Figure~\ref{figure:diagrams other models} has vertices
of degree 3, as do the tree-graph diagrams of \cite{AN84}, and the
lace
expansion diagrams of \cite{HS90a}.
Also, the $x^3$ in the function
$F$ in $ \mathrm d\sigma_s$ is reminiscent of $\varphi^3$.

The conjectured profile arises by analogy with the Erd\H{o}s--R\'{e}nyi random graph, where the properly rescaled cluster size (without taking expectation) is known to converge in distribution to a random variable described by the Brownian excursion \cite{Aldo97}.
The limiting random variable is also characterised by a point process \cite{JS07}.
The measure $\sigma_s$ is the intensity of that point process and is taken from \cite[Theorem~4.1]{JS07}.
The point process describes cluster sizes, in the sense that
\begin{equation}
n^{-2k/3} \sum_i \abs{ C_i }^k
\Rightarrow
\int_0^\infty x^k \mathrm d\sigma_s
\qquad
(k \ge 2)
\end{equation}
in distribution \cite{Aldo97}.
The $k=2$ case gives the properly rescaled susceptibility and identifies $f_{\rm perc}(s)$ in \eqref{eq:perc-chi}.
To our knowledge, the convergence of the rescaled susceptibility has not been proved, not even for the Erd\H{o}s--R\'{e}nyi random graph.

\section*{Acknowledgements}
We thank Hugo Duminil-Copin, Trishen Gunaratnam, and Alexis Prévost for useful discussions. We also thank two anonymous referees for useful comments.
GS thanks the Section de Math\'ematiques of the University of Geneva for kind
hospitality during a visit when this work was initiated.
The work of YL and GS was supported in part by NSERC of Canada. The work of RP was supported by the Swiss National Science Foundation, the NCCR SwissMAP, and the European Research Council (ERC) under the European Union’s Horizon 2020 research and innovation programme (grant agreement No. 757296).

%\bibliographystyle{plain}
%\bibliography{bib}

\end{document}